\documentclass[sigconf]{}
\usepackage{multirow}
\usepackage{color,soul}
\usepackage{array}
\usepackage{balance}
\usepackage{paralist}
\usepackage{enumitem}
\usepackage{url}
\usepackage{xspace}
\usepackage[ruled,vlined,noend]{algorithm2e}
\usepackage{subcaption}
\usepackage{xcolor}

\def\e#1{\emph{#1}}
\newcommand{\eat}[1]{}

\def\cqrst{\mathsf{q_{\mathsf{RST}}}}

\def\cqrsnt{\mathsf{q_{\mathsf{RS\neg T}}}}
\def\cqrnst{\mathsf{q_{\mathsf{R\neg ST}}}}
\def\cqnrsnt{\mathsf{q_{\mathsf{\neg RS\neg T}}}}
\def\cqtrsnr{\mathsf{q_{\mathsf{RST\neg R}}}}
\def\is{\mathsf{IS}}
\def\ds{\mathsf{NIS}}
\def\s{\mathsf{S}}

\def\dl{\mathrel{{:}{\text{-}}}}

\def\set#1{\mathord{\{#1\}}}

\newenvironment{repeatresult}[2]
{\vskip0.5em\par\textsc{#1} #2.\em}
{\vskip1em}

\newenvironment{repproposition}[1]{\begin{repeatresult}{Proposition}{#1}}{\end{repeatresult}}
\newenvironment{reptheorem}[1]{\begin{repeatresult}{Theorem}{#1}}{\end{repeatresult}}
\newenvironment{replemma}[1]{\begin{repeatresult}{Lemma}{#1}}{\end{repeatresult}}

\def\P{\mathsf{P}}

\def\shp{\mathrm{Shapley}}

\def\FP{\mathrm{FP}^{\#\mathrm{P}}}

\newcommand{\noamdone}[1]{}

\def\val#1{\mathtt{#1}}
\def\is{\mathsf{IS}}
\def\ds{\mathsf{NIS}}

\theoremstyle{plain}
\newtheorem{theorem}{Theorem}[section]
\newtheorem{proposition}[theorem]{Proposition}
\newtheorem{lemma}[theorem]{Lemma}
\newtheorem{corollary}[theorem]{Corollary}



\newcommand{\algname}[1]{{\sf #1}}

\def\signature{\mathcal{S}}
\def\graph{\mathcal{G}}
\def\exograph{g\exo}
\def\arity{\mathord{\mathrm{arity}}}
\def\consts{\mathsf{Const}}
\def\posq{\mathsf{Pos}}
\def\negq{\mathsf{Neg}}
\def\att#1{\mathrm{#1}}

\def\exo{_{\mathsf{x}}}
\def\nexo{_{\setminus\mathsf{x}}}
\def\nexoC{{\setminus\mathsf{x}}}
\def\endo{_{\mathsf{n}}}
\def\pall{\mathord{\mathcal{P}}}
\def\Dom{\mathsf{Dom}}

\usepackage[noend]{algpseudocode}
\usepackage{xspace}
\usepackage{paralist}
\usepackage{color, colortbl}

\def\e#1{\emph{#1}}

\def\pall{\mathord{\mathcal{P}}}

\def\exo{_{\mathsf{x}}}
\def\endo{_{\mathsf{n}}}


\def\set#1{\mathord{\{#1\}}}

\def\dl{\mathrel{{:}{\text{-}}}}

\def\is{\mathsf{IS}}

\def\ds{\mathsf{NIS}}

\def\cqrst{\mathsf{q_{\mathsf{RST}}}}
\def\cqrnotst{\mathsf{q_{\mathsf{R\neg ST}}}}
\def\cqnotrsnott{\mathsf{q_{\mathsf{\neg RS \neg T}}}}
\def\cqrsnott{\mathsf{q_{\mathsf{RS \neg T}}}}
\def\cqsat{\mathsf{q_{\mathsf{SAT}}}}

\def\var{\mathsf{Vars}}
\def\at{\mathsf{Atoms}}
\def\cc{\mathsf{ConnectedComponents}}

\def\Pr#1{\mathord{\mathrm{Pr}\left[#1\right]}}

\def\val#1{\mathtt{#1}}
\def\rel#1{\textsc{#1}}
\def\att#1{\textit{#1}}

\newenvironment{proofsketch}{\begin{proof}{\textit{(Sketch)}\,\,}}
{\end{proof}}

\def\taf{f^{\mathrm{t}}}
\def\regf{f^{\mathrm{r}}}

\def\fpsharpp{\mathrm{FP}^{\mathrm{\#P}}}

\def\stf{f^{\mathrm{s}}}
\def\taf{f^{\mathrm{t}}}
\def\cof{f^{\mathrm{c}}}
\def\regf{f^{\mathrm{r}}}
\def\adf{f^{\mathrm{a}}}

\def\Sat{\mathrm{Sat}}

\def\CQneg{CQ$^\neg$\xspace}
\def\UCQneg{UCQ$^\neg$\xspace}

\AtBeginDocument{%
  \providecommand\BibTeX{{%
    \normalfont B\kern-0.5em{\scshape i\kern-0.25em b}\kern-0.8em\TeX}}}

\setcopyright{acmcopyright}
\copyrightyear{2020}
\acmYear{2020}
\acmDOI{10.1145/1122445.1122456}




\begin{document}
\pagestyle{plain}

\title{The Impact of Negation on the Complexity of the \\ Shapley Value in Conjunctive Queries}


\author{Alon Reshef}
\affiliation{
\institution{Technion}
\city{Haifa}
\country{Israel}
\postcode{32000}
}
\email{alonre@cs.technion.ac.il}

\author{Benny Kimelfeld}
\affiliation{
\institution{Technion}
\city{Haifa}
\country{Israel}
\postcode{32000}
}
\email{bennyk@cs.technion.ac.il}

\author{Ester Livshits}
\affiliation{
\institution{Technion}
\city{Haifa}
\country{Israel}
\postcode{32000}
}
\email{esterliv@cs.technion.ac.il}

\renewcommand{\shortauthors}{Reshef et al.}

\begin{abstract}
  The Shapley value is a conventional and well-studied function for
  determining the contribution of a player to the coalition in a
  cooperative game. Among its applications in a plethora of domains,
  it has recently been proposed to use the Shapley value for
  quantifying the contribution of a tuple to the result of a database
  query. In particular, we have a thorough understanding of the
  tractability frontier for the class of Conjunctive Queries (CQs) and
  aggregate functions over CQs. It has also been established that a
  tractable (randomized) multiplicative approximation exists for every
  union of CQs. Nevertheless, all of these results are based on the
  monotonicity of CQs. In this work, we investigate the implication of
  negation on the complexity of Shapley computation, in both the exact
  and approximate senses. We generalize a known dichotomy to account
  for negated atoms. We also show that negation fundamentally changes
  the complexity of approximation. We do so by drawing a connection to
  the problem of deciding whether a tuple is ``relevant'' to a query,
  and by analyzing its complexity.
\end{abstract}



\maketitle

\section{Introduction}
Various formal measures have been proposed for quantifying the
contribution of a fact $f$ to a query answer.  Meliou et
al.~\cite{DBLP:journals/pvldb/MeliouGMS11} adopted the quantity of
\e{responsibility} that is inversely proportional to the minimal
number of endogenous facts that should be removed to make $f$
counterfactual (i.e., removing $f$ transitions the answer from true to
false). Following earlier notions of formal causality by Halpern and
Pearl~\cite{DBLP:conf/uai/HalpernP01}, Salimi et
al.~\cite{DBLP:conf/tapp/SalimiBSB16} proposed the \e{causal effect}:
assuming endogenous facts are randomly removed independently and
uniformly, what is the difference in the expected query answer between
assuming the presence and the absence of $f$?  A recent framework has
proposed to adopt the \e{Shapley value} to the
task~\cite{shapley-icdt2020}.

The Shapley value~\cite{shapley:book1952} is a formula for wealth
distribution in a cooperative game, and it has been applied in a
plethora of domains that require to attribute a share of an outcome
among a group of
entities~\cite{aumann2003endogenous,DBLP:conf/stacs/AzizK14}.  The use
cases include bargaining foundations in
economics~\cite{gul1989bargaining}, takeover corporate rights in
law~\cite{nenova2003value}, pollution responsibility in environmental
management~\cite{petrosjan2003time,liao2015case}, influence
measurement in social network analysis~\cite{narayanam2011shapley},
the utilization of Internet Service Providers (ISPs) in
networks~\cite{ma2010internet}, and advertisement effectiveness on the
Web~\cite{DBLP:conf/www/BesbesDGIS19}.  In data management, the
Shapley value has been used for assigning a level of inconsistency
to facts in inconsistent knowledge
bases~\cite{DBLP:journals/ai/HunterK10,DBLP:conf/ijcai/YunVCB18,DBLP:journals/jiis/GrantH06},
and to determine the relative contribution of features in
machine-learning
predictions~\cite{DBLP:conf/ijcai/LabreucheF18,DBLP:conf/nips/LundbergL17}.

In the framework of Livshits et al.~\cite{shapley-icdt2020}, query
answering is viewed as a cooperative game where the players are the
database facts and the utility function is the query answer, in the
case of aggregate queries, or 0/1 in the case of Boolean
queries. They showed that the (precise or approximate) evaluation of the
Shapley value on common aggregate queries amounts to the evaluation of
the Shapley value for a Boolean query. Hence, in this paper, we focus on Boolean
queries.  To illustrate the Shapley
value on a Boolean Conjunctive Query (which we refer to simply as a
\e{CQ} hereafter), consider the following query asking whether
there is a farmer $m$ who exports a product $p$ to a country $c$ where
$p$ does not grow.
\begin{align}\label{ex:farmer}
  q() \dl \rel{Farmer}(m),\rel{Export}(m,p,c), \neg\rel{Grows}(c,p)
\end{align}
Different $\rel{Farmer}$ facts may have different Shapley values,
depending on how crucial they are to the query---which products they
export, whether they grow in the destination countries, and whether
alternative $\rel{Farmer}$ facts export the same products.
Similarly, each $\rel{Grows}(c,p)$ fact has its Shapley
value. However, while the Shapley value of $\rel{Farmer}$ facts can be
either positive or zero (since they can only help in satisfying the
query), the Shapley value of $\rel{Grows}$ facts can be either
\e{negative} or zero (since they can only help in violating the
query). As explained by Livshits et al.~\cite{shapley-icdt2020},
understanding the complexity of the Shapley value for Boolean queries such as~\eqref{ex:farmer} is also necessary and sufficient for understating the complexity of the Shapley value for aggregate queries such as
\begin{align*}\label{ex:farmer-count}
  \mathsf{Count}
  \{c\mid\rel{Farmer}(m),\rel{Export}(m,p,c), \neg\rel{Grows}(c,p)\}
\end{align*}
that counts the countries that import one or more products that they
do not grow.

As in previous work on quantification of contribution of
facts~\cite{DBLP:journals/pvldb/MeliouGMS11,DBLP:conf/tapp/SalimiBSB16,
  shapley-icdt2020}, we view the database as consisting of two types
of facts: \e{endogenous} facts and \e{exogenous} facts. Exogenous
facts are taken as given (e.g., inherited from external sources)
without questioning, and are beyond experimentation with hypothetical
or counterfactual scenarios. On the other hand, we may have control
over the endogenous facts, and these are the facts for which we reason
about existence and marginal contribution. The exogenous and
endogenous facts are analogous to the \e{observations} and
\e{hypotheses} in the study of abductive
diagnosis~\cite{DBLP:journals/tcs/EiterGL97,DBLP:journals/jacm/EiterG95}
that we refer to later on. In our context, the Shapley value considers
only the endogenous facts as players in the cooperative game.

Livshits et al.~\cite{shapley-icdt2020} have studied the complexity of
computing the Shapley value, and their work is restricted to positive
CQs and UCQs (and aggregates thereof). In this paper, we study the
impact of negation on this complexity. Negation transforms the query
into a non-monotonic query and, as the reader might expect, the impact
is fundamental. As a first step, we generalize their dichotomy in the
complexity for CQs without self-joins into the class of CQs with
negation and without self-joins (Theorem~\ref{thm:sjfBCQ}).

The dichotomy of Livshits et al.~\cite{shapley-icdt2020} classifies
the CQs precisely as in CQ inference in probabilistic
tuple-independent databases~\cite{DBLP:conf/vldb/DalviS04}: if the CQ
is hierarchical, then the problem is solvable in polynomial time, and
otherwise, it is $\fpsharpp$-complete (i.e., complete for the
intractable class of polynomial-time algorithms with an oracle to,
e.g., a counter of the satisfying assignments of a propositional
formula). For illustration, the CQ of~\eqref{ex:farmer} falls on the
hardness side. However, that classification does not take into account
the assumption that some relations may contain \e{only} exogenous
data. For example, in~\eqref{ex:farmer} we might consider the
$\rel{Grows}$ relation as consisting of only exogenous information.
This assumption is very significant, as it makes our example CQ a
tractable one for the Shapley value, \e{in contrast to the
  dichotomy}. In this paper, we establish a
dichotomy that accounts for both negation and exogenous relations
(Theorem~\ref{thm:exo}).

An \e{approximation} of the Shapley value of a database fact $f$ to a
Boolean query can be computed via a straightforward Monte-Carlo
(average-over-samples) estimation of the expectation that Shapley
defines. This estimation guarantees an \e{additive} (or \e{absolute})
approximation. However, our interest is in a \e{multiplicative} (or
\e{relative}) approximation, for two main reasons. First, we seek to
understand the contribution of $f$ \e{relative to other
  facts}, even if it is the case that the Shapley value is small.
Second, in order to get an approximation of the contribution of a
fact to an \e{aggregate query}, a multiplicative approximation is
required~\cite{shapley-icdt2020}.

In the case of a UCQ, a multiplicative approximation of the Shapley
value is tractable, that is, there is a multiplicative Fully
Polynomial-Time Approximation Scheme (FPRAS). This holds true for a
simple reason: an additive FPRAS is also a multiplicative FPRAS, due
to the following \e{gap} property: if the Shapley value is nonzero,
then it must be ``large''---at least the reciprocal of a polynomial.
Nevertheless, once the CQ includes negated atoms, the gap property is
no longer true. In Fact, we show in Theorem~\ref{thm:negation-no-gap}
that \e{every} natural CQ with negation violates the gap
property, since the Shapley value can be exponentially small. This
phenomenon explains why \e{negated atoms} make the Shapley value
fundamentally more challenging to approximate.

In itself, the violation of the gap property shows that the approach
of an additive FPRAS fails to provide a multiplicative FPRAS. Yet, it
does not show that multiplicative FPRAS is computationally hard, since
there might be an alternative way of obtaining a multiplicative FPRAS
in polynomial time. In order to prove hardness of approximation, we
investigate the problem of determining whether a fact $f$ is
\e{relevant} to a query in the following sense: in the presence of all
exogenous facts and \e{some} subset of the endogenous facts, adding
$f$ can change the query answer (from false to true or from true to
false).  In the case of a positive CQ, being relevant to the query
coincides with being an ``actual cause'' in the framework of \e{causal
  responsibility}~\cite{DBLP:journals/pvldb/MeliouGMS11}. It is also
similar to being a \e{relevant hypothesis} in the context of
\e{abductive
  diagnosis}~\cite{DBLP:journals/ci/ConsoleT91,DBLP:journals/tcs/EiterGL97,DBLP:journals/jacm/EiterG95}.
We refer the reader to Bertossi and
Salimi~\cite{DBLP:journals/ijar/BertossiS17} who have established the
connection between causal responsibility and abductive diagnosis.

\eat{
use a connection to the study of program
abduction
and, specifically, that of the \e{relevance}
problem: determine whether a
given fact $f$ belongs to a subset-minimal \e{abductive diagnosis}---a
set of endogenous facts that, together with the exogenous facts,
satisfy the query. Put differently, the fact $f$ is relevant if for
some subset of the endogenous database, the inclusion of $f$ can
transform the query outcome from false to true. To generalize
relevance to non-monotonic queries, we use the following natural
adaptation: a fact $f$ is \e{relevant} if there is a subset of the
endogenous facts such that, together with the exogenous facts, the
addition of $f$ changes the outcome of the query (from true to false
or from false to true).
}

The connection between the relevance to the query and the Shapley
value is direct: if a fact $f$ is \e{polarity consistent} in the sense
that it occurs in a relation of only positive or only negative atoms,
then \e{$f$ is relevant if and only if its Shapley value is nonzero}
(i.e., strictly positive or strictly negative). Therefore, a
multiplicative FPRAS can decide on the relevance with high
probability. In the contrapositive, if we prove that the relevance to
the query is an intractable decision problem, then we also establish
the intractability of an FPRAS approximation. Yet, the relevance is
tractable for positive CQs, and hardness results are known only for
Datalog programs with
recursion~\cite{DBLP:journals/ijar/BertossiS17,DBLP:journals/tcs/EiterGL97}.
We prove here the existence of a CQ and a polarity-consistent fact $f$
such that the decision of relevance to the query (and, hence, the
multiplicative approximation of the Shapley value) is intractable.

Nevertheless, the above approach for proving hardness of the
multiplicative FPRAS of the Shapley value \e{fails} if we assume that
the CQ itself is polarity consistent, that is, every relation symbol
(and not just the one of $f$) occurs either only positively or only
negatively. We prove that the relevance problem is solvable in
polynomial time for polarity-consistent CQs. The question of whether
the Shapley value has a multiplicative FPRAS for polarity-consistent
CQs (and in particular CQs without self-joins) remains an open problem
for future investigation.

We also consider the relevance problem for UCQs with negation.  We
prove that the tractability of the relevance problem generalizes to
polarity-consistent UCQs. Nevertheless, the tractability does not
generalize to unions of polarity-consistent CQs---we show the
existence of such a UCQ where the relevance problem is intractable,
and so is the Shapley zeroness (and multiplicative approximation). In
other words, if every relation symbol occurs either only positively or
only negatively in a UCQ, then the relevance problem is solvable in
polynomial time. Yet, the assumption that this consistency holds just
in every individual disjunct is (provably) not enough.

The rest of the paper is organized as follows. In the next section we
introduce some basic terminology that will be used throughout the
paper. In Section~\ref{sec:bcqn}, we study the complexity of computing
the Shapley value for self-join-free CQs with negation, and in Section~\ref{sec:exo} we
explore the impact of exogenous relations on this complexity.  We
consider the approximate computation of the value in
Section~\ref{sec:approx}.  We summarize our results and discuss
directions for future work in Section~\ref{sec:conclusions}. For lack
of space, some proofs apprear in the Appendix.

\section{Preliminaries}\label{sec:preliminaries}
{
\definecolor{Gray}{gray}{0.9}
\def\emprow{\multicolumn{2}{l}{}}
\begin{figure*}[t]
\small
\centering
\begin{subfigure}[b]{0.16\linewidth}
\begin{tabular}{r|c|} 
\cline{1-2}
\rowcolor{black}
\multicolumn{2}{l}{\textcolor{white}{\rel{Stud}}}\\\cline{2-2}
 & $\att{name}$ \\\cline{2-2}
$\stf_1$ & $\val{Adam}$\\
$\stf_2$ & $\val{Ben}$\\
$\stf_3$ & $\val{Caroline}$\\
$\stf_4$ & $\val{David}$\\
\cline{2-2}
\emprow
\end{tabular}
\end{subfigure}
\begin{subfigure}[b]{0.13\linewidth}
\begin{tabular}{r|c|} 
\cline{1-2}
\rowcolor{black}
\multicolumn{2}{l}{\textcolor{white}{\rel{TA}}}\\\cline{2-2}
 & $\att{name}$ \\\cline{2-2}
$\taf_1$ & $\val{Adam}$\\
$\taf_2$ & $\val{Ben}$\\
$\taf_3$ & $\val{David}$\\
\cline{2-2}
\emprow\\
\emprow
\end{tabular}
\end{subfigure}
\begin{subfigure}[b]{0.2\linewidth}
\begin{tabular}{r|c|c|} 
\cline{1-3}
\rowcolor{black}
\multicolumn{3}{l}{\textcolor{white}{\rel{Course}}}\\\cline{2-3}
 & $\att{course}$ & $\att{faculty}$ \\\cline{2-3}
$\cof_1$ & $\val{OS}$ & $\val{EE}$\\
$\cof_2$ & $\val{IC}$ & $\val{EE}$\\
$\cof_3$ & $\val{DB}$ & $\val{CS}$\\
$\cof_4$ & $\val{AI}$ & $\val{CS}$\\
\cline{2-3}
\emprow
\end{tabular}
\end{subfigure}
\begin{subfigure}[b]{0.22\linewidth}
\begin{tabular}{r|c|c|} 
\cline{1-3}
\rowcolor{black}
\multicolumn{3}{l}{\textcolor{white}{\rel{Reg}}}\\\cline{2-3}
 & $\att{name}$ & $\att{course}$ \\\cline{2-3}
$\regf_1$ & $\val{Adam}$ & $\val{OS}$\\
$\regf_2$ & $\val{Adam}$ & $\val{AI}$\\
$\regf_3$ & $\val{Ben}$ & $\val{OS}$\\
$\regf_4$ & $\val{Caroline}$ & $\val{DB}$\\
$\regf_5$ & $\val{Caroline}$ & $\val{IC}$\\
\cline{2-3}
\end{tabular}
\end{subfigure}
\begin{subfigure}[b]{0.22\linewidth}
\begin{tabular}{r|c|c|} 
\cline{1-3}
\rowcolor{black}
\multicolumn{3}{l}{\textcolor{white}{\rel{Adv}}}\\\cline{2-3}
 & $\att{advisor}$ & $\att{student}$ \\\cline{2-3}
$\adf_1$ & $\val{Michael}$ & $\val{Adam}$\\
$\adf_2$ & $\val{Michael}$ & $\val{Ben}$\\
$\adf_3$ & $\val{Naomi}$ & $\val{Caroline}$\\
$\adf_4$ & $\val{Michael}$ & $\val{David}$\\
\cline{2-3}
\emprow
\end{tabular}
\end{subfigure}
\caption{\label{fig:DB} The database of the running example }
\end{figure*}
}

We first define the main concepts that we use throughout the paper.

\subsection*{Databases and Queries}
A \e{relational schema} $\signature$ is a finite collection of
relation symbols $R(A_1,\dots,A_k)$, where each $A_i$ is an
\e{attribute} of $R$, and $k$ is the \e{arity} of $R$, denoted by
$\arity(R)$. We assume a countably infinite set $\consts$ of constants
that are used as database values. A database $D$ (over a schema
$\signature$) associates with each relation symbol $R$ in $\signature$
a finite relation $R^D\subseteq \consts^{\arity(R)}$. If
$(c_1,\dots,c_k)$ is a tuple in $R^D$, then we refer to
$R(c_1,\dots,c_k)$ as a \e{fact} of $D$. We then identify a database
$D$ by the set of its facts. We assume that a database $D$ consists of
two disjoint subsets of facts: the set $D\exo$ of \e{exogenous} facts
and the set $D\endo$ \e{endogenous} facts. Hence, we have
$D=D\exo\cup D\endo$.

\begin{example} \label{example:DB} The database of our running example
  is depicted in Figure~\ref{fig:DB}.  The relations $\rel{Stud}$ and
  $\rel{TA}$ store the names of graduate students and teaching
  assistants in the university, respectively. The relation
  $\rel{Course}$ contains information about courses given in different
  faculties of the university. The relation $\rel{Reg}$ associates
  graduate students with the courses they take, and the relation
  $\rel{Adv}$ associates students with their academic advisor. For
  example, Adam is a student and a teaching
  assistant in the university. He is registered to two courses---OS is given in the
  Electrical Engineering faculty and AI in the
  Computer Science faculty. Michael is the academic advisor of Adam.
  \qed\end{example}


A \e{Boolean conjunctive query} over a schema $\signature$ is an
expression of the form: 
$$q()\dl R_1(\vec{t_1}),\dots, R_n(\vec{t_n})$$
where each $R_i$ is a relation symbol of $\signature$ and each
$\vec{t_i}$ is a tuple of variables and constants (where the arity of
$\vec{t_i}$ matches that of $R_i$). We refer to a Boolean conjunctive
query simply as a \e{CQ}. We refer to each $R_i(\vec{t_i})$ as an
\e{atom} of $q$. We denote by $R_\alpha$ the relation corresponding to the atom $\alpha$ of $q$. A self-join in a CQ $q$ is a pair of distinct atoms
of $q$ over the same relation symbol. If $q$ does not contain any
self-joins, then we say that $q$ is \e{self-join-free}.  A
homomorphism from $q$ to $D$ is a mapping of the variables in $q$ to
the constants of $D$ such that every atom in $q$ is mapped to a fact
of $D$. We denote by $D\models q$ the fact that $D$ satisfies $q$
(i.e., there is a homomorphism from $q$ to $D$) and by
$D\not\models q$ the fact that $D$ violates $q$ (i.e., there is no
such homomorphism).

Let $q$ be a CQ. For every variable $x$ of $q$, we denote by $A_x$ the
set of all atoms $R_i(\vec{t_i})$ of $q$ such that $x$ occurs in
$\vec{t_i}$. We say that $q$ is
\e{hierarchical}~\cite{DBLP:journals/cacm/DalviRS09} if at least one
of the following holds for all variables $x$ and $y$ of
$q$: \e{(1)} $A_x\subseteq A_y$, \e{(2)} $A_y\subseteq A_x$, or
\e{(3)} $A_x\cap A_y=\emptyset$. It is
known~\cite{DBLP:conf/vldb/DalviS04} that if $q$ is not hierarchical,
then there exist three atoms $\alpha_x$, $\alpha_y$, and
$\alpha_{x,y}$ in $q$ such that 
the variable $x$ occurs in $\alpha_x$
but not in $\alpha_y$, the variable $y$ occurs in $\alpha_y$ but not
in $\alpha_x$ and both variables occur in $\alpha_{x,y}$. We refer to
each such triplet of atoms as a \e{non-hierarchical triplet} of $q$.

A CQ with \e{safe negation}, or \CQneg for short, has the
form
\[q()\dl R_1(\Vec{t_1}), ... , R_n(\Vec{t_n}), \neg R'_1(\Vec{t'_1}),
  ... , \neg R'_m(\Vec{t'_m})\] where every variable that occurs in a
negated atom also occurs in an atom without negation. We refer to the
atoms of $q$ appearing without negation as the \e{positive atoms} of
$q$ and to the atoms that appear with negation as the \e{negative
  atoms} of $q$. We denote by $\posq(q)$ and $\negq(q)$ the sets of
positive and negative atoms of $q$, respectively. For a \CQneg, we
denote by $D\models q$ the fact that there is a homomorphism mapping
the variables of $q$ to constants of $D$ such that every positive atom
and none of the negative atoms of $q$ is mapped to a fact of $D$. The
extension to the definition of hierarchical CQs to \CQneg{s} is
straightforward.


\begin{example}\label{example:queries}
We use the following queries in our examples:
\begin{align*}
q_1()\dl &\rel{Stud}(x), \neg \rel{TA}(x), \rel{Reg}(x,y)\\
q_2()\dl &\rel{Stud}(x), \neg \rel{TA}(x), \rel{Reg}(x,y) ,\neg \rel{Course}(y, \val{CS})\\
q_3()\dl &\rel{Adv}(x,y), \rel{Adv}(x,z), \neg \rel{TA}(y), \neg \rel{TA}(z), \rel{Reg}(y,\val{IC}), \rel{Reg}(z,\val{DB})\\
q_4()\dl &\rel{Adv}(x,y), \rel{Adv}(x,z), \rel{TA}(y), \neg \rel{TA}(z), \rel{Reg}(z,w), \neg \rel{Reg}(y,w)
\end{align*}

Each of these queries is a \CQneg. The queries $q_1$ and $q_2$ are
self-join-free, while the queries $q_3$ and $q_4$ have self-joins (e.g., the relation
$\rel{Adv}$ occurs twice). The query $q_1$ is hierarchical since
$A_y\subseteq A_x$, but the others are not, since each of them
contains a non-hierarchical triplet (e.g.,
$\rel{Adv}(x,y), \rel{Adv}(x,z), \neg \rel{TA}(z)$). \qed
\end{example}

A \e{Union of Conjunctive Queries} (UCQ) is an expression of the form
$q()\dl q_1()\vee\dots\vee q_n()$ where each $q_i$
is a CQ, and it is satisfied by a database $D$ if $D\models q_i$ for
at least one $i\in\set{1,\dots,n}$. A union of \CQneg{s} is called 
a \UCQneg for short.

\subsection*{The Shapley Value} \label{def:shaply} Given a set $A$ of
players, a \e{cooperative game} is a function
$v:\pall(A)\rightarrow\mathbb{R}$ that maps every subset $B$ of $A$ to
a rational number $v(B)$, such that $v(\emptyset)=0$. The value $v(B)$
represents a value jointly obtained by the players of $B$ when they
cooperate. The Shapley value~\cite{shapley:book1952} measures the
share of each player $a\in A$ in the value $v(A)$ jointly obtained by
all players. Intuitively, the Shapley value is the expected
contribution of $a$ in a random permutation of the players, where the
contribution of $a$ is the change of $v$ due to the addition of $a$. More
formally, the Shapley value is defined as
$$\shp(A,v,a):= \dfrac{1}{|A|!}\sum_{\sigma \in \Pi_A} (v(\sigma_a\cup \{a\})-v(\sigma_a))$$
where $\Pi_A$ is the set of all possible permutations over the players
in $A$, and for each permutation $\sigma$, we denote by $\sigma_a$ the
set of players that appear before $a$ in the permutation.

Let $\signature$ be a schema, $D$ a database over $\signature$, $q$ a
CQ or \CQneg, and $f$ an endogenous fact of $D$.  Following Livshits
et al.~\cite{shapley-icdt2020}, the Shapley value of
$f$ w.r.t.~$q$, denoted $\shp(D,q,f)$, is the value of $\shp(A,v,a)$
where:
\begin{itemize}
    \item $A=D\endo$.
    \item $v(E)=q(E\cup D\exo)-q(D\exo)$ for all $E\subseteq D\endo$.
    \item $a=f$.
\end{itemize}
That is, we consider a cooperative game where the endogenous facts are the players and the wealth function $v(E)$ measures the change to the result of the query due to the addition of the facts of $E$ to the exogenous facts. Here, we view a Boolean CQ $q$ as a numerical query such that $q(D)=1$ if $D\models q$ and $q(D)=0$ otherwise. 


\begin{example} \label{example:q1}
Consider again the database of our running example. We assume that all the facts in \rel{Stud}, \rel{Course} and \rel{Adv} are exogenous, while the facts in \rel{TA} and \rel{Reg} are endogenous.
Consider the query $q_1$ asking if there is a student who is not a TA and is registered to at least one course. Note that facts from \rel{Reg} can only have a positive impact on the query result (i.e., they can only change it from false to true), while the facts of \rel{TA} can only have a negative impact on the result (i.e., they can only change it from true to false). Clearly, it holds that $D\exo\not\models q$, as no fact of $\rel{Reg}$ appears in $D\exo$. The answer to $q_1$ on $D\exo\cup E$ for some $E\subseteq D\endo$ is true if at least one of the following holds: \e{(1)} $\regf_4$ or $\regf_5$ appear in $E$, \e{(2)} $\regf_1$ or $\regf_2$ appear in $E$, but $\taf_1$ does not, or \e{(3)}  $\regf_3$ appears in $E$, but $\taf_2$ does not.

We can immediately see that $\taf_3$ can never affect the query result, since David does not appear in \rel{Reg}; hence, we have that $\shp(D,q_1,\taf_3)=0$.
Adding the fact $\taf_2$ in a permutation would change the query result from true to false if $\regf_3$ has been added before, and none of conditions $(1)$ or $(2)$ holds. Thus, the following subsets of facts may appear before $\taf_2$ in a permutation where it changes the query result: $\{\regf_3\}$, $\{\regf_3, \taf_1\}$, $\{\regf_3, \regf_1, \taf_1\}$, $\{\regf_3, \regf_2, \taf_1\}$, and $\{\regf_3, \regf_2, \regf_1, \taf_1\}$. Note that we can add $\taf_3$ to each of these subsets; thus, we have the following:
\begin{center}
\scalebox{0.9}{$
\shp(D,q_1,\taf_2) =
-\dfrac{1!\cdot 6!+2\cdot2!\cdot5! +3\cdot(3!\cdot4!+4!\cdot3!)+5!\cdot2!}{8!}
$}
\end{center}
and we conclude that $\shp(D,q_1,\taf_2)=-\frac{2}{35}$. Similarly, the fact $\taf_1$ changes the query result from true to false when at least one of $\regf_1$ or $\regf_2$ appears earlier in the permutation, and none of conditions $(1)$ or $(3)$ holds. That is, the fact $\taf_1$ should appear after one of the following subsets of facts: $\{\regf_1\}$,
$\{\regf_2\}$,
$\{\regf_1,\regf_2\}$,
$\{\regf_1\, \taf_2\}$,
$\{\regf_2, \taf_2\}$,
$\{\regf_1,\regf_2\, \taf_2\}$,
$\{\regf_1,\regf_3, \taf_2\}$,
$\{\regf_2, \regf_3, \taf_2\}$, 
$\{\regf_1,\regf_2\,\regf_3, \taf_2\}$. We can again add $\taf_3$ to each of the subsets. 
Thus, the following holds:
\begin{center}
\scalebox{0.9}{$\shp(D,q_1,\taf_1) =-\dfrac{2\cdot 1!\cdot6! +5\cdot2!\cdot5!+6\cdot3!\cdot4!+4\cdot4!\cdot3!+5!\cdot2!}{8!}$}
\end{center}
and $\shp(D,q_1,\taf_1)=-\frac{3}{28}$. As it holds that $|\shp(D,q_1,\taf_1)|>|\shp(D,q_1,\taf_2)|$, we deduce that the fact that Adam is not a TA has a greater negative impact on $q_1$ than the fact that Ben is not a TA. This is expected, since Adam is registered to more courses.
\par
As for the facts in the relation \rel{Reg}, we have that:
\begin{center}
\scalebox{0.9}{
$\shp(D,q_1, \regf_1) = \shp(D,q_1, \regf_2)= \dfrac{37}{210}$}

\scalebox{0.9}{
$\shp(D,q_1,\regf_3) = \dfrac{27}{140}$
}

\scalebox{0.9}{
$\shp(D,q_1,\regf_4)=\shp(D,q_1,\regf_5)=\dfrac{13}{42}$}
\end{center}
We elaborate on these calculations in the Appendix. Note that the sum over the Shapley values of all the endogenous facts is $1$.  
\qed
\end{example}

\section{Exact Evaluation} \label{sec:bcqn}

In this section, we investigate the complexity of computing the
Shapley value for \CQneg{s} without self-joins, and establish the
following dichotomy in the data complexity of the problem.

%

\def\thmsjfBCQ{ Let $q$ be a \CQneg without self-joins. If $q$ is
  hierarchical, then $\shp(D,q,f)$ can be computed in polynomial time,
  given $D$ and $f$. Otherwise, its computation is
  $\FP$-complete.\footnote{Recall that $\FP$ is the class of problems
    that can be
  solved in polynomial time with an oracle to a \#P-complete problem.} }
\begin{theorem}\label{thm:sjfBCQ} \thmsjfBCQ \end{theorem}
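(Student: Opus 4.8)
The plan is to prove the two sides of the dichotomy separately, after first recording a uniform reduction of the Shapley value to a family of counting problems. Writing $n=|D\endo|$ and grouping the permutations in the definition of $\shp(D,q,f)$ by the size $k$ of the prefix $\sigma_f$ that precedes $f$, the value rewrites as
\[
\shp(D,q,f)=\sum_{k=0}^{n-1}\frac{k!\,(n-1-k)!}{n!}\,\bigl(c^+_k-c^-_k\bigr),
\]
where $c^+_k$ (resp.\ $c^-_k$) is the number of $k$-subsets $E\subseteq D\endo\setminus\{f\}$ with $(E\cup\{f\})\cup D\exo\models q$ (resp.\ $E\cup D\exo\models q$); the $q(D\exo)$ terms of $v$ cancel in each marginal. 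Each $c^\pm_k$ counts subsets of a fixed size satisfying a property checkable in polynomial time (namely \CQneg evaluation), so it is a $\sharpp$ quantity, and summing polynomially many such counts against the fixed rational coefficients places $\shp(D,q,f)$ in $\FP$; this settles membership for the hard side. Both remaining arguments then reduce to understanding the counts $c^\pm_k$, equivalently the satisfaction-by-size generating polynomial $\sum_k c_k z^k$ with $c_k=\#\{E:|E|=k,\ E\cup D\exo\models q\}$.

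For the hierarchical case I would compute this generating polynomial in polynomial time by dynamic programming over the laminar (forest) family $\{A_x\}_x$ guaranteed by hierarchicality. Since $q$ is self-join-free, every relation symbol---positive or negated---occurs in exactly one atom, so a homomorphism is pinned down by its effect on the positive atoms, and safe negation guarantees that the variables of each negated atom are already bound there. Processing the variable forest from the leaves upward, I maintain for each subtree a polynomial in $z$ recording the number of ways to choose endogenous facts of each size that make the corresponding sub-query true; independent branches combine by polynomial multiplication, and a negated atom $\neg R_\alpha$ is folded in by a complementation/inclusion--exclusion step (subtracting the choices that place a fact on the forbidden tuple). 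This mirrors the positive-case algorithm of Livshits et al.~\cite{shapley-icdt2020}, with the single addition of the complementation for negated atoms, and runs in polynomial time because all polynomials have degree at most $n$.

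For the non-hierarchical case I would first isolate the obstruction---a non-hierarchical triplet $\alpha_x,\alpha_y,\alpha_{x,y}$---and reduce hardness to the Shapley value of a constant-size \emph{core} query determined by the polarities of the triplet. Up to the symmetry swapping the roles of $x$ and $y$, and after discarding the patterns forbidden by safe negation (those placing a variable only in negated atoms, which in particular rules out negating an endpoint while $\alpha_{x,y}$ is negated), the cores collapse to $\cqrst$, $\cqnrst$, $\cqnrsnt$, and $\cqrnst$. For each core I would give a reduction from a canonical $\sharpp$-hard problem---counting independent sets in a bipartite graph---encoding the two vertex sides in the relations of $\alpha_x$ and $\alpha_y$ and the edges in $\alpha_{x,y}$, and recovering the count from the coefficients $c_k$ (hence from $\shp$) by interpolation. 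The negated cores are handled by placing the relevant relation among the exogenous facts and toggling presence/absence, so that ``absence satisfies the atom'' plays the structural role that ``presence satisfies the atom'' plays in $\cqrst$. Finally I would lift hardness from a core to an arbitrary non-hierarchical $q$ by a database construction that makes all atoms of $q$ outside the triplet rigidly satisfied via exogenous facts and fresh constants, so that the Shapley computation for $q$ collapses to that of its core.

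The main obstacle I anticipate is the hardness side, specifically making the reduction uniform across the polarity patterns. The sign flip introduced by negation reverses the monotone behavior of the game---a negated-relation fact can only \emph{decrease} the utility---so the bijections underlying the positive-case independent-set reduction must be rebuilt for each core, and one must verify that safe negation eliminates exactly the patterns admitting no valid core query. Equally delicate is checking that the ``collapse to the core'' construction remains faithful when $q$ carries negated atoms outside the triplet, so that no spurious homomorphism exploits an absent forbidden fact; this bookkeeping, rather than the hierarchical algorithm, is where I expect the real work to lie.
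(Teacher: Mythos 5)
Your overall architecture matches the paper's proof: the same prefix-size decomposition reducing $\shp(D,q,f)$ to counting $k$-subsets that satisfy the query, an adaptation of the \algname{CntSat} dynamic program of Livshits et al.\ for the hierarchical side (the paper confines the change to the base case, where safety makes all negated atoms ground; you fold a complementation step into the recursion---either works), and hardness via a collapse of a non-hierarchical triplet to a constant-size core query. However, there is one genuine gap, in the polarity case analysis. You assert that safe negation ``rules out negating an endpoint while $\alpha_{x,y}$ is negated,'' so that the bad patterns can simply be discarded. This is false: safety only requires every variable of a negated atom to occur in \emph{some} positive atom of $q$, not in a positive atom of the triplet. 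For instance, $q()\dl \neg R(x), \neg S(x,y), T(y), U(x)$ is safe (thanks to $U(x)$ and $T(y)$) and non-hierarchical, yet $(\neg R(x),\neg S(x,y),T(y))$ is a non-hierarchical triplet whose pattern is none of your four cores. What the proof actually needs---and what the paper establishes as the opening step of its lifting lemma---is that whenever a bad-pattern triplet occurs, one can always \emph{re-choose} a different non-hierarchical triplet whose pattern is a core (here $(U(x),\neg S(x,y),T(y))$, of type $\cqrnst$); this is a small but genuine case analysis using safety. Your closing remark that one must ``verify that safe negation eliminates exactly the patterns admitting no valid core query'' mischaracterizes the fix: those patterns are not eliminated, and without the re-choosing lemma your reduction is undefined on such queries.

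A secondary, non-fatal difference concerns the cores themselves. You propose direct independent-set-counting reductions with interpolation for all four. The paper does this only for $\cqrsnt$, whose asymmetry blocks anything slicker (it builds $n+1$ instances and solves a linear system over the counts $|\s(g,k)|$); it derives $\cqnrsnt$ from $\cqrst$ by a permutation-reversal bijection giving $\shp(D,\cqrst,f)=-\shp(D,\cqnrsnt,f)$, and $\cqrnst$ from $\cqrst$ by complementing the (exogenous) relation $S$. Your uniform plan is plausible but substantially heavier, and the sign-flip and bijection-rebuilding difficulties you flag as the main obstacle are precisely what the paper's inter-core reductions are designed to sidestep.
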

For illustration, the theorem states that the Shapley value can
  be computed in polynomial time for the query $q_1$ of
  Example~\ref{example:queries}, but computing it for the
  query $q_2$ is $\FP$-complete.
  Interestingly, the classification criteria is the same as the one
  for self-join-free CQs without
  negation~\cite{shapley-icdt2020}. Hence, the added
  negation does not change the complexity picture for the exact
  computation of the Shapley value. (However, as we will show later,
  the addition of negation has a significant impact on the approximate
  computation of the value.)
  Next, we discuss the proof of Theorem~\ref{thm:sjfBCQ}.
  
\par
Livshits et al.~\cite{shapley-icdt2020} introduced an
algorithm for computing the Shapley value for hierarchical self-join-free CQs. This algorithm relies on a reduction from the problem of
computing the Shapley value to that of computing the number of subsets
of size $k$ of $D\endo$ that, along with $D\exo$, satisfy $q$. We
denote this problem as $|\Sat(D,q,k)|$. As the reduction does not
assume anything about $q$ other than the fact that it is a Boolean
query, the same reduction applies to \CQneg{s}. Hence, it is only left
to show that $|\Sat(D,q,k)|$ can be computed in polynomial time for a
hierarchical \CQneg.

\def\lemmahierarchical{
Let $q$ be a hierarchical \CQneg without self-joins. There is a polynomial-time algorithm for computing the number $|\Sat(D, q, k)|$ of $k$-subsets $E$ of $D\endo$, such that $(D\exo\cup E)\models q$, given $D$ and $k$.
}
\begin{lemma}\label{lemma:hierarchical}
\lemmahierarchical
\end{lemma}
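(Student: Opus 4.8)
The plan is to give a dynamic-programming algorithm that computes the generating polynomial $P_q(z)=\sum_{k}|\Sat(D,q,k)|\,z^k$, from which every coefficient $|\Sat(D,q,k)|$ is read off. The algorithm recurses on the variable structure of $q$, exploiting the (extended) hierarchical property. Throughout, I call an endogenous fact \emph{relevant} if it matches some atom of $q$; in the self-join-free case each fact matches at most one atom. The remaining \emph{free} facts may be added to any subset without affecting satisfaction, so at the end I multiply $P_q$ by $(1+z)^{F}$, where $F$ is the number of free facts. It therefore suffices to count subsets of the relevant facts.

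A structural observation I would first establish is that in a \emph{connected} hierarchical \CQneg---where connectivity runs through shared variables over all atoms, positive and negative---there is a \emph{root} variable $x$ occurring in every non-ground atom. This follows because the sets $A_x$ form a laminar family: two distinct maximal sets would be disjoint as atom-sets, splitting the non-ground atoms into variable-disjoint groups and contradicting connectivity. I stress that this uses the hierarchical assumption on \emph{all} atoms, so the root occurs in the negative atoms too. The recursion then has two cases. In the \textbf{disconnected case}, the non-ground atoms split into variable-disjoint components $q_1,\dots,q_m$ (together with isolated ground atoms); by self-join-freeness their relevant facts are pairwise disjoint, so $D\exo\cup E\models q$ iff each component is satisfied by its part of $E$, and hence $P_q=\prod_i P_{q_i}$ (a convolution on subset sizes). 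In the \textbf{connected case}, let $x$ be the root variable. For each value $a$ in the active domain of $x$ (over both exogenous and endogenous relevant facts), let $q_a$ be $q$ with $x{=}a$ substituted; since $x$ occurs in every non-ground atom, the satisfaction of $q_a$---including the absence requirements of the negative atoms---depends only on the facts whose $x$-position equals $a$, and these fact-sets are pairwise disjoint across distinct $a$. As $D\exo\cup E\models q$ iff $q_a$ is satisfied for \emph{some} $a$, I handle this disjunction by complementation: writing $B_a(z)=\sum_j\binom{n_a}{j}z^j$ for the polynomial of all subsets of the $n_a$ relevant facts of $q_a$, the non-satisfying polynomial factorizes, and $P_q(z)=\prod_a B_a(z)-\prod_a\big(B_a(z)-P_{q_a}(z)\big)$. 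Each $q_a$ has one fewer variable and is again hierarchical and self-join-free, and safe negation is preserved (any variable surviving in a negative atom also survives in a positive atom it shares), so no residual component consists of negative atoms alone.

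The base cases are single atoms, where negation enters as presence/absence constraints. A positive non-ground atom is satisfied by exactly the subsets containing at least one matching relevant fact, and is satisfied by all subsets if a matching exogenous fact exists. A ground atom is resolved against $D$: a positive ground atom forces the corresponding endogenous fact \emph{into} $E$ (factor $z$), is free if the fact is exogenous, and is unsatisfiable if the fact is absent; a negative ground atom forces the corresponding endogenous fact \emph{out} of $E$ (factor $1$, while still counting that fact as relevant so that its ``in'' subset is correctly classified as non-satisfying), is unsatisfiable if the fact is exogenous, and is free if the fact is absent. Self-join-freeness guarantees that the relations of positive and negative atoms are disjoint, so no fact is simultaneously required and forbidden. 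Since $q$ is fixed, the recursion depth equals the number of variables---a constant in data complexity---and each connected step branches into at most $|\Dom(x)|\le|D|$ subproblems; thus there are polynomially many subproblems, each combining generating polynomials of degree at most $|D\endo|$ by polynomial multiplication, yielding an overall polynomial-time algorithm.

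The main obstacle I anticipate is precisely the connected case with negative atoms: the complementation-over-root-values step is valid only if every negative atom's absence constraint is confined to a single $x$-value block, i.e., only if the root variable occurs in each negative atom. Safe negation alone does not guarantee this; it is the hierarchical property, applied to all atoms, that rules out a negative atom coupling different root-value blocks and destroying the factorization. Verifying this interaction---together with the bookkeeping that treats an endogenous fact of a negated relation as a relevant fact whose \emph{presence} is the non-satisfying event---is the delicate part of the argument.
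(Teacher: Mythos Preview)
Your proposal is correct and takes essentially the same approach as the paper: both adapt the recursive \algname{CntSat} algorithm of Livshits et al.\ along the hierarchical structure (a root-variable step plus a disconnected-components step), modifying only the ground base case to account for negation. Your generating-polynomial packaging is just the convolution underlying the paper's dynamic program, and your explicit verification that the root variable occurs in every negative atom---so that the $x{=}a$ blocks remain independent for absence constraints as well---spells out a point the paper leaves implicit.
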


The algorithm \algname{CntSat}~\cite{shapley-icdt2020} for computing
$|\Sat(D,q,k)|$ for self-join-free CQs without negation is a recursive algorithm that reduces the number of variables in the query with each recursive call, based on the hierarchical structure of the query. The treatment of the base case, when no
  variables occur in $q$, is the only part of the algorithm that does not apply to queries with negation, and
  we explain how it should be modified in the Appendix.

In the remainder of this section, we focus on the proof of the
negative side of the theorem. We start by proving hardness for the
four simplest non-hierarchical \CQneg{s}: \begin{align*}
    \cqrst()&\dl R(x), S(x,y), T(y)\\
    \cqnrsnt()&\dl \neg R(x), S(x,y), \neg T(y)\\
    \cqrnst()&\dl R(x), \neg S(x,y), T(y)\\
    \cqrsnt()&\dl R(x), S(x,y), \neg T(y)
\end{align*}
The proof for $\cqrst$ is given in~\cite{shapley-icdt2020}; hence, we show the following.

\eat{
The proof of hardness for $\cqrst$ is given
in~\cite{shapley-icdt2020}; hence, we focus on the
other three queries and prove that computing the Shapley value for
each one of them is $\FP$-complete. Then, for each non-hierarchical
self-join-free \CQneg, we construct a reduction from computing
$\shp(D, q', f)$ for one of the basic queries $q'$ to computing
$\shp(D, q, f)$, similarly to what has been done in the proof of
hardness for non-hierarchical self-join-free CQs without
negation~\cite{shapley-icdt2020}. We start by proving
the following.
}

\def\lemmabasichard{
If $q$ is one of $\cqnrsnt,\cqrnst$, or $\cqrsnt$, then computing $\shp(D, q, f)$ is $\FP{}$-complete.
}
\begin{lemma}\label{lemma:four-hard-queries}
\lemmabasichard
\end{lemma}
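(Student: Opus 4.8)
The plan is to establish both $\FP$ membership and $\FP$-hardness for each of $\cqnrsnt$, $\cqrnst$, and $\cqrsnt$, reusing as much of the machinery behind the known hardness of $\cqrst$~\cite{shapley-icdt2020} as possible. For membership I would note that the argument is oblivious to negation. Grouping the permutations in the Shapley formula by the size $k$ of the prefix $\sigma_f$, the value $\shp(D,q,f)$ becomes a polynomial-time computable rational combination of, for each $k$, the number of size-$k$ subsets of $D\endo\setminus\{f\}$ that satisfy $q$ together with $D\exo$ and the number that satisfy it once $f$ is added. Each such count is a $\sharpp$ quantity, so $\shp(D,q,f)\in\FP$ for every Boolean query, the three negated queries included.

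For hardness I would reuse the bipartite-graph encoding behind the $\cqrst$ proof. Given a bipartite graph $G=(U\cup V,E)$, let $R$ contain one endogenous fact $R(u)$ per $u\in U$, let $T$ contain one endogenous fact $T(v)$ per $v\in V$, and let $S=\{S(u,v)\mid(u,v)\in E\}$ be exogenous. Identifying a coalition $A\subseteq D\endo$ with the pair $(A_U,A_V)$ of selected vertices, the value $v(A)$ is dictated by the negation pattern: for $\cqrst$, $v(A)=0$ exactly when $(A_U,A_V)$ is an independent set of $G$; for $\cqnrsnt$, exactly when $A_U\cup A_V$ is a vertex cover of $G$; for $\cqrnst$, exactly when $A_U\times A_V\subseteq E$, i.e.\ when $(A_U,A_V)$ is independent in the bipartite complement $\bar G$ of $G$; and for $\cqrsnt$, exactly when $N(A_U)\subseteq A_V$, where $N$ denotes neighborhood in $G$.

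Counting independent sets of a bipartite graph by size is $\sharpp$-hard, and each pattern inherits this. Vertex covers are complements of independent sets, so counting them by size is equivalent; as $G$ ranges over all bipartite graphs so does $\bar G$, so the $\cqrnst$ count is again a bipartite independent-set count. For the asymmetric pattern $\cqrsnt$ the map $(A_U,A_V)\mapsto(A_U,V\setminus A_V)$ is a bijection onto the independent sets of $G$ (the condition $N(A_U)\subseteq A_V$ says exactly that $A_U$ and $V\setminus A_V$ span no edge), so the size-graded counts of coalitions with $v(A)=0$ sum to the total number of independent sets of $G$ and are therefore jointly $\sharpp$-hard to compute. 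In every case, computing the numbers $|\Sat(D,q,k)|$ is $\sharpp$-hard.

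It then remains to turn hardness of the counts into hardness of $\shp$, which is precisely the step carried out for $\cqrst$ in~\cite{shapley-icdt2020}: padding the database with a controlled number of dummy endogenous facts and reading off $\shp(D,q,f)$ for a designated $f$ yields a Vandermonde/interpolation system whose solution recovers $|\Sat(D,q,k)|$ for all $k$. Since this extraction uses only that $q$ is Boolean and the shape of the Shapley formula, I expect it to transfer verbatim to the three queries, giving $\FP$-hardness and, with membership, the lemma. As a sanity check, $\cqrnst$ also reduces directly from $\cqrst$: since $S$ is exogenous, replacing it by its active-domain complement turns $\neg S$ into the positive $\cqrst$ game and preserves $\shp$ exactly. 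The main obstacle I foresee is the faithfulness of the padding step under negation: I must place the dummy facts so that they neither spuriously satisfy a negated atom nor block the query, and re-verify that the extracted quantities are still the $\sharpp$-hard counts above, especially for $\cqrsnt$, where hardness is certified only in the aggregate through the bijection rather than size-for-size.
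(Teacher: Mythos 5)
Your combinatorial analysis is sound and in fact coincides with the paper's core insights: your characterization of the zero-coalitions of $\cqrsnt$ as the sets with $N(A_U)\subseteq A_V$ is exactly the paper's family $\s(g)$, your bijection $(A_U,A_V)\mapsto(A_U,V\setminus A_V)$ is precisely the paper's proof that $|\s(g)|=|\is(g)|$, and the ``sanity check'' you mention for $\cqrnst$ (replacing the exogenous relation $S$ by its active-domain complement, which turns $\neg S$ back into the positive game while preserving the Shapley value) is not a side remark but the paper's entire proof for that query. Note also that the paper dispatches $\cqnrsnt$ without any counting argument, by a route you do not consider: assuming $S$ is exogenous and every $S(a,b)$ has both $R(a)$ and $T(b)$ present, reversing each permutation yields a bijection showing $\shp(D,\cqrst,f)=-\shp(D,\cqnrsnt,f)$, so hardness transfers directly from $\cqrst$.

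The genuine gap is your final step, where you assert that the padding/interpolation extraction of counts from Shapley values ``uses only that $q$ is Boolean'' and should ``transfer verbatim.'' It does not. In the construction of Livshits et al., and in this paper, the padding facts are not neutral dummies: they are facts of the query's own relations, wired to the designated fact through exogenous $S$-facts, and the classification of which permutations contribute depends entirely on the query's satisfaction structure. Neutral (isolated) padding would at best let you interpolate the marginal-contribution profile of $f$, i.e., differences of counts, which does not by itself recover the quantities $|\Sat(D,q,k)|$ you declared hard. The paper's proof for $\cqrsnt$ --- which it explicitly calls the intricate case for exactly this reason --- must build instances $D^r$ with designated fact $f=T(\val{0})$, fresh endogenous facts $R(0_i)$ tied to $f$ by exogenous facts $S(0_i,\val{0})$, together with a separate instance $D^0$ whose Shapley value pins down the auxiliary quantity $P_{1\rightarrow 1}$ (permutations in which the query is true both before and after $f$ is added, a case that has no analogue in the positive setting); only after this case analysis does one obtain the non-singular linear system in the $|\s(g,k)|$. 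Incidentally, that system recovers the $|\s(g,k)|$ size-by-size, so your worry about hardness being ``certified only in the aggregate'' dissolves --- but the construction producing the system is the real work. You correctly flagged this as the main obstacle; it is not a loose end to be re-verified but the heart of the lemma, and as written your proposal does not contain it.
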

\begin{proofsketch}
The proof of the lemma for $\cqnrsnt$ and $\cqrnst$ is by a reduction from the problem of computing $\shp(D,\cqrst,f)$. We show that for every database $D$ and a fact $f\in D$ we have that $\shp(D,\cqrst,f) = -\shp(D,\cqnrsnt,f)$, as $f$ changes the result of $\cqrst$ in a permutation $\sigma$ from false to true if and only if $f$ changes the result of $\cqnrsnt$ in $\sigma^R$ (which is the reverse permutation of $\sigma$) from true to false. As for the query $\cqrnst$, the idea is the following. Given an input database $D$ to the first problem, we construct a database $D'$ to our problem by taking the ``complement'' of the relation $S^D$. That is, we add a fact $f$ over the domain of $D$ to $S^{D'}$ if and only if this fact is not in $S^D$. This transformation does not affect the Shapley value since we can assume that every fact of $S$ is exogenous (as the database constructed in the proof of hardness for $\cqrst$ satisfies this property~\cite{shapley-icdt2020}). 
We will use this idea of the ``complement'' of a relation in our proofs in the next sections.

Most intricate is the proof of hardness for the query $\cqrsnt$. This
is due to its non-symmetrical structure that prevents us from
constructing a direct reduction from the problem of computing
$\shp(D,\cqrst,f)$. Similarly to the proof of hardness for
$\cqrst$~\cite{shapley-icdt2020}, we construct a
reduction from the problem of computing the number of independent sets
$|\is(g)|$ in a bipartite graph $g$, which is known to be
\#P-complete. Given an input bipartite
graph $g=(A\cup B,E)$,
we construct $n+1$ input instances $(D_i,f)$ for our problem (where
$n=|A|+|B|$), that provide us with an independent system of $n+1$  
linear equations over the numbers $|\s(g,k)|$, defined as follows.

For each $k=0,\dots,n$, the set $\s(g,k)$ contains every subset
$E\subseteq(A\cup B)$ of size $k$, such that for every
$a\in (E\cap A)$, and for every $(a,b)\in E$, we have that $b\in E$;
that is, for every vertex $a$ on the left-hand side of $g$ added to
$E$, we also add to $E$ every neighbor of $a$ in the graph. More
formally, \[\s(g):= \{A' \cup B'| A'\subseteq A, B'\subseteq B,
  \forall{(a,b)\in E} [a\in A'\Rightarrow b\in B']\}\] Note that a
subset $A'\cup B'$ in $\s(g)$ may contain vertices in $B'$ that are
not connected to any vertex in $A'$. Then, we denote by $\s(g,k)$ the
collection of subsets of size $k$ in $\s(g)$. We claim that
$|\s(g)| = |\is(g)|$. This holds since a subset $A'\cup B'$ of
vertices of $g$ is an independent set if and only if the subset
$A'\cup (B\setminus B')$ belongs to $\s(g)$. 

Each input instance $(D_i,f)$ to our problem is obtained from the bipartite graph $g$ by adding one vertex to the right-hand side of $g$ and $i$ vertices to its left-hand side. We connect every new vertex on the right-hand side to the new vertex on the left-hand side. Then, we add to $D_i$ an endogenous fact $R(a)$ for every vertex $a$ on the left-hand side of $g$, an endogenous fact $T(b)$ for every vertex $b$ on the right-hand side of $g$, and an exogenous fact $S(a,b)$ for every edge $(a,b)$ in $g$. We then compute, for each one of the instances, the Shapley value of the fact corresponding to the new vertex on the right-hand side of $g$, and obtain an equation over the numbers $|\s(g,k)|$. We show that the equations are independent; hence, we can
compute $|\is(g)|=\sum_{k=0}^n |\s(g,k)|$.
\end{proofsketch}

\def\lemmasjfbcqhard{If $q$ is a non-hierarchical
  \CQneg without self-joins, then computing $\shp(D, q, f)$ is $\FP{}$-complete. }

Using Lemma~\ref{lemma:four-hard-queries}, we can prove the whole
hardness side of Theorem~\ref{thm:sjfBCQ}. We adapt the reduction to
the one used for the case of non-hierarchical self-join-free CQs
without negation~\cite{shapley-icdt2020}. Recall that
every non-hierarchical self-join-free \CQneg contains three atoms
$\alpha_x, \alpha_y, \alpha_{x,y}$, such that $x$ and $y$ are two
variables of $q$, the variable $x$ occurs in $\alpha_x$ while $y$ does
not, the variable $y$ occurs in $\alpha_y$ while $x$ does not, and
both variables occur in $\alpha_{x,y}$. Furthermore, since $q$ is
safe, we can always choose $\alpha_x, \alpha_y, \alpha_{x,y}$ such
that if two of the atoms are negative, the negative ones are
$\alpha_x$ and $\alpha_y$. Hence, for every such $q$, we can construct
a reduction from computing the Shapley value for one of the queries
$\cqrst, q_{\neg R S \neg T}, \cqrnst$ or $\cqrsnt$
(depending on the polarity of $\alpha_x$, $\alpha_y$, and
$\alpha_{x,y}$) to computing $\shp(D,q,f)$, where the atoms over the
relations $R$, $S$ and $T$ are represented by the atoms
$\alpha_x,\alpha_{x,y}$ and $\alpha_y$, respectively.

\subsubsection*{Remarks}
We conclude the section with two comments. First, Livshits et
al.~\cite{shapley-icdt2020} have shown how their
dichotomy for CQs can be extended to arbitrary
summations over CQs, using the linearity of expectation. Our dichotomy
here can be extended to aggregate functions over \CQneg{s} in a similar way.
For example, Theorem~\ref{thm:sjfBCQ} implies that the Shapley value
of a fact can be efficiently computed for the following aggregate query that sums up all the profits $r$ of exports of products $p$ to countries $c$ where $p$ does not grow:
\[\mathsf{Sum}
 \{\!\{r\mid\rel{Export}(p,c), \neg\rel{Grows}(c,p),\rel{Profit}(c,p,r)\}\!\}\] 
Second, the proof of Theorem~\ref{thm:sjfBCQ} heavily
relies on the assumption that the query is self-join-free. However,
our hardness results for the basic non-hierarchical queries
$\cqrst,\cqnrsnt,\cqrnst$ and $\cqrsnt$ can be generalized to certain
\CQneg{s} with self-joins, by replacing the atom over the relation $T$
with another atom over the relation $R$ (e.g., we can prove hardness
for the query $\neg R(x),S(x,y),\neg R(y)$). This can be proved using
a reduction from the corresponding self-join-free query (e.g., the
query $\neg R(x),S(x,y),\neg T(y)$) by assuming, without loss of
generality, that the values in the domain of $R^D$ and the values in
the domain of $T^D$ are disjoint. In fact, this result can be
generalized to a larger class of \CQneg{s} with self-joins, and we
give this result in the Appendix (Theorem~\ref{thm:bcqn_sj}).



\def\cqcomma{\,,\,}

\section{Accounting for Exogenous Relations}\label{sec:exo}

In the previous section, we showed that computing the Shapley value is
$\FP$-complete for every non-hierarchical self-join-free \CQneg. Yet,
this hardness result does not take into account the reasonable
assumption that some of the relations in the database contain \e{only
  exogenous facts}. For example, Meliou et
al.~\cite{DBLP:journals/pvldb/MeliouGMS11} discussed the case where
all the relations in the database are exogenous, except for one (e.g.,
``Director'' or ``Movie''); this one relation may be a suspect of
containing erroneous data, or the one that holds the single type of
entities of whom contribution we wish to quantify. In this section, we
show that accounting for such relations significantly changes the
complexity picture and, in particular, it makes some of the
intractable queries according to Theorem~\ref{thm:sjfBCQ} tractable.
In fact, we generalize Theorem~\ref{thm:sjfBCQ} to account for
exogenous relations and therefore establish the precise class of
\CQneg{s} that become tractable. Throughout this section, we underline
the relations containing only exogenous facts and their associated
query atoms.


\begin{example}~\label{example1:exo} Livshits et
  al.~\cite{shapley-icdt2020} demonstrated their work on a database from the domain of
  academic publications. They reasoned about the contribution of
  researchers to the total number of citations and assumed that the
  information about the publications is exogenous.  In particular, they considered
  the query:
$$q()\dl \rel{Author}(x,y)\cqcomma \underline{\rel{Pub}(x,z)} \cqcomma
  \underline{\rel{Citations}(z,w)}$$ 
  Since $q$ is not
hierarchical, their result classifies it as intractable. However, in this section, we show that there is a polynomial-time algorithm for computing the Shapley
value for $q$, under the assumption that $\rel{Pub}$ and
$\rel{Citations}$ contain only exogenous facts. Furthermore, we show
that even if we had that prior knowledge about the relation
$\rel{Citations}$ alone, we would still able to compute the Shapley
value efficiently. This is due to the fact that we can
reduce the problem of computing $\shp(D,q,f)$ to that of
computing $\shp(D,q',f)$ for the hierarchical query
$q'()\dl \rel{Author}(x,y)\cqcomma\rel{Pub}(x,z)$, by removing from the relation $\rel{Pub}$ in $D$ every fact
$\rel{Pub}(a,b)$ such that there is no fact $\rel{Citations}(b,c)$ in $D$ and then removing
the relation $\rel{Citations}$ from the query.

Next, consider the database of our running example (Figure~\ref{fig:DB}). We have
  assumed that the information about the students and courses in the
  faculty is exogenous, and our goal was to understand how much the
  fact that a student takes or teaches a course affects the result of
  different queries. For example, consider again the query $q_2$ from
  Example~\ref{example:queries}.
$$q_2()\dl \underline{\rel{Stud}(x)} \cqcomma \neg \rel{TA}(x),
\rel{Reg}(x,y) \cqcomma \neg \underline{\rel{Course}(y, \val{CS})}$$
 Theorem~\ref{thm:sjfBCQ} classifies this query as intractable for computing the Shapley value, as it is not hierarchical.  Yet,
again, the Shapley value can be computed in polynomial time, using an algorithm that takes into consideration the assumption that every fact in $\rel{stud}$ and $\rel{course}$ is exogenous. Note that when negation is added to the picture, we cannot simply remove exogenous atoms, as removing an exogenous atom may turn a query with safe negation into a query with negation that is not safe (e.g., $q'()=\underline{R(x)}\cqcomma \neg S(x,y), T(y)$).
\qed \end{example}



\subsection{Generalized Dichotomy}

We start by formally defining the problem that we study in this section. We define an \e{exogenous relation} $R$ to be a relation that consists only of exogenous facts.
We fix a schema $\signature$, a set $X$ of exogenous relations in
$\signature$, and a self-join-free \CQneg $q$. We denote by
$\signature_X$ a schema with the set $X$ of exogenous relations. Note
that we do not assume anything about the facts in the relations
outside $X$ and they may contain both endogenous and exogenous facts.
Then, our goal is to compute $\shp(D,q,f)$, given a database $D$ over
$\signature_X$ and a fact $f\in D$.

Clearly, the assumption that some of the relations of $\signature$ are exogenous does not change the fact that we can compute the Shapley value in polynomial time for any hierarchical \CQneg.
To understand the impact of this assumption on the complexity of non-hierarchical CQs,
consider the query $\cqrnst$ defined in Section~\ref{sec:bcqn}. If we assume that only $S$ is exogenous, then the query remains hard,
as $S$ already contains only exogenous facts in the proof of hardness
for $\cqrnst$
(Lemma~\ref{lemma:four-hard-queries}).
We can generalize this example and show that having a
non-hierarchical triplet $(\alpha_x,\alpha_{x,y},\alpha_y)$ where
$R_{\alpha_x}\not\in X$ and $R_{\alpha_y}\not\in X$ is a sufficient
condition for $\FP$-hardness, as the hardness proofs of the
previous section can be easily generalized to this case. Is having
such a triplet a necessary condition for hardness? Next, we answer
this question negatively.
Consider the following queries:
$$q()\dl \neg R(x,w), \underline{S(z,x)}, \neg\underline{ P(z,w)}, T(y,w)$$
$$q'()\dl \neg R(x,w), \underline{S(z,x)}, \neg\underline{ P(z,y)}, T(y,w)$$
In both queries, the exogenous relation are $S$ and $P$, and they
differ only in one variable that occurs in the atom of $P$. While the
two queries are very similar and are both classified as intractable by
Theorem~\ref{thm:sjfBCQ}, we will show that in the model considered in
this section, $\shp(D,q,f)$ can be computed in polynomial time for
every endogenous fact $f$, while computing $\shp(D,q',f)$ is
$\FP$-complete.
This holds true as while in both cases the non-exogenous
atoms are connected via the exogenous atoms, they are connected in
different ways. While the connection in $q$ between the variable $x$
in $\neg R(x,w)$ and the variable $y$ in $T(y,w)$ goes through the variable $w$, in
$q'$ the connection between $x$ and $y$ is possible through the
variable $z$ as well, and we need to be able to distinguish between
these two cases. In the terminology we set next, we say that $x$ and
$y$ are connected via a \e{non-hierarchical path} in $q'$ (but not
in $q$).


Let $\signature_X$ be a schema. The \e{Gaifman graph} $\graph(q)$ of a
\CQneg $q$ is the graph that contains a vertex for every variable in
$q$ and an edge between two vertices if the corresponding variables
occur together in an atom of $q$. We say that a \CQneg $q$ has a
\e{non-hierarchical path} if there are two atoms $\alpha_x,\alpha_y$
and two variables $x,y$ in $q$ such that: \e{(1)}
$R_{\alpha_x}\not\in X$ and $R_{\alpha_y}\not\in X$, \e{(2)} the variable $x$ occurs in $\alpha_x$ but not in $\alpha_y$, while the variable $y$ occurs in $\alpha_y$ and not in $\alpha_x$,
and
\e{(3)} the graph obtained from $\graph(q)$ by removing every vertex
corresponding to a variable occurring in $\alpha_x$ or in $\alpha_y$,
contains a path between $x$ and $y$. In this case, we say that the
non-hierarchical path of $q$ is \e{induced} by the atoms $\alpha_x$
and $\alpha_y$.

\begin{figure}
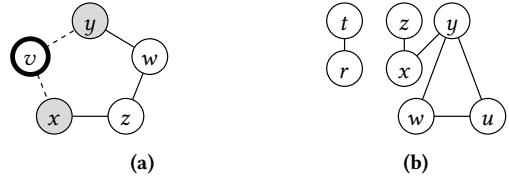

    \begin{subfigure}{0.2\textwidth}
  \input{gaifman-nhp.pspdftex}
  \caption{\label{fig:gaifman1}}
\end{subfigure}
\begin{subfigure}{0.2\textwidth}\centering
    \input{gaifman-nonhp.pspdftex}
  \caption{\label{fig:gaifman2}}
\end{subfigure}
  \caption{The Gaifman graphs of the queries $q$ (left) and $q'$ (right) of Example~\ref{ex:gaifman}.\label{fig:gaifman}}
\end{figure}

\begin{example}\label{ex:gaifman}
Consider the query:
\[ q()\dl \neg R(x), \underline{Q(x,v)}, \underline{S(x,z)}, U(z,w),
  \neg\underline{P(w,y)}, T(y,v) \] The Gaifman graph $\graph(q)$ is
illustrated in Figure~\ref{fig:gaifman1}. We claim that $q$ has a
non-hierarchical path induced by the atoms $\neg R(x)$ and $T(y,v)$. Note
that there is a path $x-v-y$ in $\graph(q)$ between $x$ and $y$;
however, this is not enough to determine that $q$ has a
non-hierarchical path, as we need to find a path that does not pass
through the variables of $\neg R(x)$ and $T(y,v)$. And indeed, if we remove
from $\graph(q)$ the variable $v$ occurring in the atom $T(y,v)$ and
every edge connected to it (i.e., every dotted line in the graph of
Figure~\ref{fig:gaifman1}), there is a path $x-z-w-y$ between the
variables $x$ and $y$ in the resulting graph, and we conclude that $q$
has a non-hierarchical path.

Next, consider the query:
\begin{align*} 
q'()\dl & U(t,r), \neg T(y), Q(y,w), \\
&\neg\underline{V(t)}, \underline{R(x,y)},\neg\underline{S(x,z)}, \underline{O(z)}, \underline{P(u,y,w)} 
\end{align*}
The
reader can easily verify, using the graph of
Figure~\ref{fig:gaifman2}, that $q'$ does not have a non-hierarchical
path. This is because the variables of $U(t,r)$ and the variables of
$\neg T(y)$ or $Q(y,w)$ are not connected in $\graph(q')$.
Moreover, every variable in $\neg T(y)$ also appears in $Q(y,w)$; hence, no
non-hierarchical path can be induced by these two atoms.
\qed
\end{example}

We prove the following generalization of Theorem~\ref{thm:sjfBCQ} that
account for exogenous relations.
\begin{theorem}\label{thm:exo} Let $\signature_X$ be a schema and let
  $q$ be a \CQneg without self-joins. If $q$ has a non-hierarchical
  path,
  then computing $\shp(D,q,f)$ is $\FP$-complete. Otherwise,
  $\shp(D,q,f)$ can be computed in polynomial time, given $D$ and $f$.
\end{theorem}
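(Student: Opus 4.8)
The plan is to prove both directions of the dichotomy, reusing the reduction of Section~\ref{sec:bcqn} from computing $\shp(D,q,f)$ to computing the counts $|\Sat(D,q,k)|$, which (as noted there) is agnostic to both polarity and exogeneity. Thus on the positive side it suffices to compute $|\Sat(D,q,k)|$ in polynomial time, and on the negative side it suffices to establish $\FP$-hardness of $\shp(D,q,f)$ directly. The hardness side is a generalization of the previous section's arguments, while the tractability side requires a new recursive algorithm driven by the absence of a non-hierarchical path.

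For hardness, suppose $q$ has a non-hierarchical path induced by atoms $\alpha_x,\alpha_y$ (with $R_{\alpha_x},R_{\alpha_y}\notin X$) and a connecting path $x=v_0-v_1-\cdots-v_\ell=y$ in the Gaifman graph from which the other variables of $\alpha_x,\alpha_y$ have been deleted. I would reduce from the appropriate basic query among $\cqrst,\cqnrsnt,\cqrnst,\cqrsnt$, selected according to the polarities of $\alpha_x$ and $\alpha_y$, exactly as in the self-join-free hardness argument. The atom $\alpha_x$ plays the role of $R$, the atom $\alpha_y$ plays the role of $T$, and the \emph{entire path} plays the role of the single middle atom $S(x,y)$. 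Concretely, given an instance of the basic problem I build $D$ by encoding each ``edge'' $(a,b)$ as a chain of facts threading $a$ through $v_1,\dots,v_{\ell-1}$ to $b$ along the atoms realizing the path edges, instantiating every off-path variable to a fixed dummy constant, and using the complement trick of Lemma~\ref{lemma:four-hard-queries} on every negated atom of the path so that the chain is traversable precisely when the corresponding edge is present. Since $q$ is self-join-free, each relation occurs once and these gadgets do not collide; making all path facts exogenous leaves the $\alpha_x$- and $\alpha_y$-facts as the only players, reproducing the basic game up to sign. The delicate point here is to argue that the gadget introduces \emph{no spurious homomorphisms}: the dummy instantiation of off-path variables must be chosen so that no unintended atom can be matched, and the mixed polarities along the path must be neutralized by the complement construction without altering which endogenous subsets satisfy $q$. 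Conditions \e{(1)}--\e{(2)} of the non-hierarchical-path definition guarantee that $\alpha_x,\alpha_y$ genuinely separate $x$ from $y$, so the reduction inherits the count-preserving (hence Shapley-preserving up to sign) property from the base case.

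For tractability, when $q$ has no non-hierarchical path I would compute $|\Sat(D,q,k)|$ by a recursive algorithm generalizing \algname{CntSat} (Lemma~\ref{lemma:hierarchical}). The structural fact I would establish is that the absence of a non-hierarchical path forces a hierarchical-style decomposition of the non-exogenous atoms: within each connected component one can find a branching variable $x$ occurring in a non-exogenous atom such that the remaining non-exogenous atoms split into those containing $x$ and those that stay entirely separated from $x$ after the exogenous connections are accounted for, with no surviving path linking the two sides. Grouping the endogenous facts by the value bound to $x$ then factorizes the count over the resulting subqueries; exogenous atoms act as domain restrictions rather than as players (so relations in $X$ contribute no binomial factors), matching the preprocessing illustrated in Example~\ref{example1:exo}, and negated atoms are handled by the same adaptation used for the base case of Lemma~\ref{lemma:hierarchical} (equivalently, by inclusion--exclusion over the negative atoms). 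One can verify on the running examples that $w$ is such a branching variable for the tractable query $\neg R(x,w),\underline{S(z,x)},\neg\underline{P(z,w)},T(y,w)$, whereas in its hard sibling $\neg R(x,w),\underline{S(z,x)},\neg\underline{P(z,y)},T(y,w)$ the endpoints $x,y$ are reconnected through $z$, so no such variable exists.

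The main obstacle is precisely this tractability algorithm: proving that ``no non-hierarchical path'' always yields a legal branching variable, and that the recursion terminates with correct counts. The subtlety is that removing an exogenous atom can violate safety (as the paper warns with $\underline{R(x)},\neg S(x,y),T(y)$), so the decomposition must keep every negated atom guarded by a retained positive atom throughout the recursion. Consequently, the crux is an invariant argument showing that each subquery produced by the recursive step is again free of non-hierarchical paths (and again safe); establishing this closure property is what upgrades the single-step decomposition into a full polynomial-time procedure and completes the positive side of the dichotomy.
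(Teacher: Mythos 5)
Your hardness argument is essentially the paper's: reduce from a basic hard query, with $\alpha_x$ playing the role of $R$, $\alpha_y$ the role of $T$, and the whole non-hierarchical path standing in for the middle atom $S(x,y)$, using the complement trick to neutralize negated atoms. Two implementation details differ. First, the paper does not thread each edge $(a,b)$ through a chain of distinct constants; it assigns the \emph{single} pair value $\langle a,b\rangle$ to every path variable $v_1,\dots,v_n$ (and $\odot$ to all remaining variables), so the no-spurious-homomorphism argument reduces to showing that consecutive path variables are forced to take equal values. Second, after complementation the polarity of the path atoms is immaterial, so the paper only ever reduces from $\cqrst$, $\cqnrsnt$, or $\cqrsnt$ (chosen by the polarities of $\alpha_x,\alpha_y$); your mention of $\cqrnst$ is harmless but unnecessary. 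Up to these details, this side of your proposal matches the paper.

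The genuine gap is on the tractability side, and it is exactly the step you flag as the ``main obstacle'' but never close. You propose a recursive \algname{CntSat}-style algorithm driven by a ``branching variable,'' with correctness resting on an unproven invariant (existence of the branching variable, preservation of safety and of path-freeness in the subqueries). The paper does not attempt any such recursion: it gives a polynomial-time \emph{reduction} of $\shp(D,q,f)$ to the hierarchical self-join-free case and then invokes Theorem~\ref{thm:sjfBCQ}. The reduction has three steps: replace each negated exogenous atom by its positive version and its relation by the complement relation; collapse every connected component of the exogenous atom graph $\exograph(q)$ into a single exogenous atom whose relation is the join of the component's relations (Lemma~\ref{lemma:joinexo}); then drop exogenous variables and pad each surviving exogenous atom, via a Cartesian product with the domain, so that its variable set equals that of a non-exogenous atom containing it (Lemma~\ref{lemma:removeexo}). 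The linchpin that makes this possible is Lemma~\ref{lemma:ccsubset}: if $q$ has no non-hierarchical path, then for every connected component $C$ of $\exograph(q)$ there is a single non-exogenous atom $\alpha$ with $\var\nexo(C)\subseteq\var(\alpha)$; its proof extracts a non-hierarchical path from any violation. This structural fact is the missing idea in your proposal --- without it (or an equivalent), your claims that exogenous atoms ``act as domain restrictions rather than as players'' and that a legal branching variable always exists have no support. Once the padding is done, hierarchicality is immediate: the non-exogenous atoms cannot contain a non-hierarchical triplet (that would already constitute a non-hierarchical path in $q$), and adding atoms whose variable sets coincide with those of existing atoms cannot create one, so the existing polynomial-time algorithm applies and no new recursion or invariant argument is needed.
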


The proof of the hardness side of Theorem~\ref{thm:exo} is very
similar to the proof of hardness for Theorem~\ref{thm:sjfBCQ}. Given a
self-join-free \CQneg that has a non-hierarchical path, we construct a
reduction from the problem of computing $\shp(D,q',f)$ where $q'$ is
one of $\cqrst$, $\cqnrsnt$, or $\cqrsnt$ to that of computing
$\shp(D,q,f)$. The main difference between the proofs is that while in
the proof of Theorem~\ref{thm:sjfBCQ} we used the atom
$\alpha_{x,y}$ to represent the atom $S(x,y)$ in $q'$, here we use the
entire non-hierarchical path (or, more precisely, the atoms along the
edges of the non-hierarchical path) to represent this atom. The atoms
inducing the non-hierarchical path are used to represent the atoms
over the relations $R$ and $T$ in $q'$, and their polarity determines
the specific $q'$ we reduce from. (The full proof is in the Appendix.)
In the remainder of this section, we discuss the proof of the positive
side of Theorem~\ref{thm:exo}.

\subsection{Algorithm for the Tractable Cases}
We will show that computing the Shapley value for a self-join-free \CQneg that does not have a non-hierarchical path can be reduced to computing the Shapley value for a hierarchical query $q'$ without self-joins. Our reduction consists of three steps that will form the basis to our algorithm. Since the Shapley value can be computed in polynomial time for hierarchical \CQneg{s} (Theorem~\ref{thm:sjfBCQ}), and the same algorithm works for the model that we consider in this section, we will conclude that the Shapley value can be computed in polynomial time for such queries.

For the remainder of this section, we fix a schema  $\signature_X$ and a self-join-free \CQneg $q$ that does not have a non-hierarchical path. We first introduce some definitions and notations that we will use throughout the proof. We denote by $\at(q)$ and $\var(q)$ the sets of atoms and variables of $q$, respectively. We say that an atom $\alpha$ of $q$ is an \e{exogenous atom} if $R_\alpha\in X$. We say that a variable $x$ of $q$ is an \e{exogenous variable} if it occurs only in exogenous atoms of $q$. We denote the sets of all exogenous atoms and variables of $q$ by $\at\exo(q)$ and $\var\exo(q)$, respectively. We denote by $\at\nexo(q)$ the set $(\at(q)\setminus\at\exo(q))$ of non-exogenous atoms in $q$ and by $\var\nexo(q)$ the set $(\var(q)\setminus\var\exo(q))$ of non-exogenous variables.

Next, we define the \e{exogenous atom graph} $\exograph(q)$ of $q$ to be the graph that contains a vertex for every exogenous atom in $q$ and an edge between two vertices if the corresponding two atoms share an \e{exogenous} variable. The following lemma draws a connection between the properties of $\exograph(q)$ and the existence of a non-hierarchical path in $\graph(q)$. In particular, we prove that if a query $q$ does not have a non-hierarchical path, then for every connected component $C$ of $\exograph(q)$ there is a non-exogenous atom $\alpha$ of $q$ such that $\var\nexo(C)\subseteq\var(\alpha)$. This property is of high significance, as our reduction strongly relies on it.

\begin{lemma}\label{lemma:ccsubset}
For every connected component $C$ of $g\exo(q)$ there is an atom $\alpha\in\at\nexo(q)$ such that $\var\nexo(C)\subseteq\var(\alpha)$.
\end{lemma}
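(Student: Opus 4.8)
The plan is to prove the contrapositive: assuming that some connected component $C$ of $\exograph(q)$ has the property that $\var\nexo(C)$ is \emph{not} contained in $\var(\alpha)$ for any single $\alpha\in\at\nexo(q)$, I would construct a non-hierarchical path in $q$, contradicting the standing assumption that $q$ has none. Write $V_C\eqdef\var\nexo(C)$ for the set of non-exogenous variables occurring in the (exogenous) atoms of $C$. The cases $|V_C|\le 1$ are immediate: if $V_C=\set{x}$ then $x$ is non-exogenous, hence occurs in some $\alpha\in\at\nexo(q)$, and if $V_C=\emptyset$ any non-exogenous atom works. So I would assume $|V_C|\ge 2$; in particular $\at\nexo(q)\neq\emptyset$.

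The first step is to isolate two variables that are ``separated'' by non-exogenous atoms. I would pick $\alpha^\ast\in\at\nexo(q)$ maximizing $|\var(\alpha^\ast)\cap V_C|$. Since no non-exogenous atom covers $V_C$, there is some $y\in V_C\setminus\var(\alpha^\ast)$, and since $y$ is non-exogenous there is $\alpha_y\in\at\nexo(q)$ with $y\in\var(\alpha_y)$. A short maximality argument then yields a variable $x\in(\var(\alpha^\ast)\cap V_C)\setminus\var(\alpha_y)$: otherwise $\var(\alpha_y)\cap V_C$ would contain $(\var(\alpha^\ast)\cap V_C)\cup\set{y}$, a strict superset of $\var(\alpha^\ast)\cap V_C$ (as $y\notin\var(\alpha^\ast)$), contradicting the choice of $\alpha^\ast$. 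Setting $\alpha_x\eqdef\alpha^\ast$, the pair $\alpha_x,\alpha_y$ with variables $x,y$ satisfies conditions (1) and (2) of the non-hierarchical-path definition: both atoms are non-exogenous, $x\in\var(\alpha_x)\setminus\var(\alpha_y)$ and $y\in\var(\alpha_y)\setminus\var(\alpha_x)$ (and $x\neq y$ since $y\notin\var(\alpha^\ast)$).

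The second step establishes condition (3), the surviving path. The key observation I would use is that any two variables of $V_C$ are joined in $\graph(q)$ by a path whose internal vertices are all \emph{exogenous} variables: both $x$ and $y$ lie in $\var\nexo(C)$, hence each occurs in some exogenous atom of $C$, and the connectivity of $C$ in $\exograph(q)$ supplies a sequence of exogenous atoms linking those two, consecutive atoms sharing an exogenous variable; reading off the shared variables produces a walk $x,e_1,\dots,e_m,y$ in $\graph(q)$ with every $e_i\in\var\exo(q)$, from which a simple path is extracted. I would then invoke the defining property of exogenous variables---they occur only in exogenous atoms---to conclude that none of the internal vertices $e_i$ lies in $\var(\alpha_x)\cup\var(\alpha_y)$, since $\alpha_x,\alpha_y$ are non-exogenous. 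Hence removing from $\graph(q)$ all variables of $\alpha_x$ and $\alpha_y$ other than $x$ and $y$ leaves this path intact, yielding a path between $x$ and $y$ and thus condition (3). The three conditions together witness a non-hierarchical path induced by $\alpha_x$ and $\alpha_y$, the desired contradiction.

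The main obstacle, and the crux of the whole argument, is condition (3): one must guarantee a connecting path that avoids \emph{all} variables of the two chosen atoms. This is exactly where the exogenous structure is essential---the observation that the connection within $C$ can always be routed through exogenous variables, which are by definition invisible to the non-exogenous atoms $\alpha_x$ and $\alpha_y$. Once this routing is secured, the atom-selection of the first step is a purely combinatorial maximality argument, and the two combine directly. I would double-check the boundary behavior of the path extraction (e.g.\ when the two witnessing exogenous atoms coincide, giving a direct edge between $x$ and $y$) and confirm that the intended reading of condition (3) is the one excluding the endpoints $x,y$ from the removed set, as illustrated in Example~\ref{ex:gaifman}.
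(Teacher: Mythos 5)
Your proposal is correct and follows essentially the same argument as the paper's proof: a maximality choice of the non-exogenous atom, extraction of two atoms and variables $x,y$ witnessing conditions (1) and (2), and a connecting path routed through exogenous variables of the connected component, which by definition cannot occur in the two non-exogenous atoms, giving condition (3) and the contradiction. If anything, your maximality step is phrased more carefully than the paper's: you deduce the \emph{existence} of some $x\in(\var(\alpha^\ast)\cap\var\nexo(C))\setminus\var(\alpha_y)$ from maximality, whereas the paper fixes an arbitrary $x\in\var\nexo(C)\cap\var(\alpha)$ first and then asserts $x\notin\var(\alpha')$, a claim that only follows for a suitably chosen $x$.
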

\begin{proof}
Let $C$ be a connected component of $\exograph(q)$. Assume, by way of contradiction, that there is no $\alpha\in\at\nexo(q)$ such that $\var\nexo(C)\subseteq\var(\alpha)$, and let $\alpha\in\at\nexo(q)$ be an atom of $q$ such that $\var\nexo(C)\cap\var(\alpha)$ is maximal among all atoms in $\at\nexo(q)$. Since $\var\nexo(C)\neq\emptyset$ and every non-exogenous variable occurs in a non-exogenous atom, there exists $x\in (\var\nexo(C)\cap\var(\alpha))$.
Moreover, since $\var\nexo(C)\not\subseteq\var(\alpha)$, there exists $y\in \var\nexo(C)$ that does not occur in $\alpha$. Since $y$ is not an exogenous variable, there is another $\alpha'\in\at\nexo(q)$ such that $y\in\var(\alpha')$. It cannot be the case that $x\in\var(\alpha')$ (as otherwise, we get a contradiction to the maximally of $\var\nexo(C)\cap\var(\alpha)$); hence, we conclude that $x\in(\var(\alpha)\setminus\var(\alpha'))$, $y\in(\var(\alpha')\setminus\var(\alpha))$, and $x,y\in\var(C)$. 

We claim that $\alpha$ and $\alpha'$ induce a non-hierarchical path in $\graph(q)$.
Since $x,y\in\var(C)$, there exist two atoms $\beta_1,\beta_2\in C$ such that $x\in\beta_1$ and $y\in\beta_1$. Since $\beta_1$ and $\beta_2$ belong to the same connected component, there exists a path in $\exograph(q)$ between $\beta_1$ and $\beta_2$, such that the edges along the path correspond to exogenous variables of $q$. Therefore, there is a path $x-v_1-\dots-v_n-y$ in $\graph(q)$, such that each $v_i$ is an exogenous variable (hence, $v_i\not\in\var(\alpha)$ and $v_i\not\in\var(\alpha')$). This path is a non-hierarchical path by definition.
\end{proof}

\begin{figure}
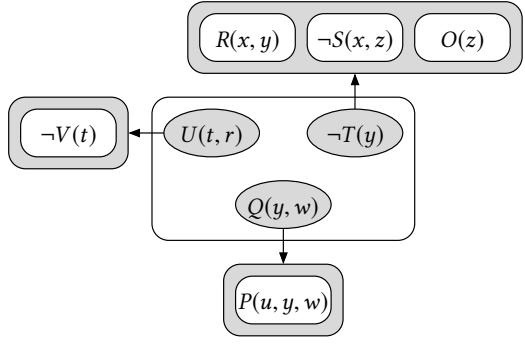
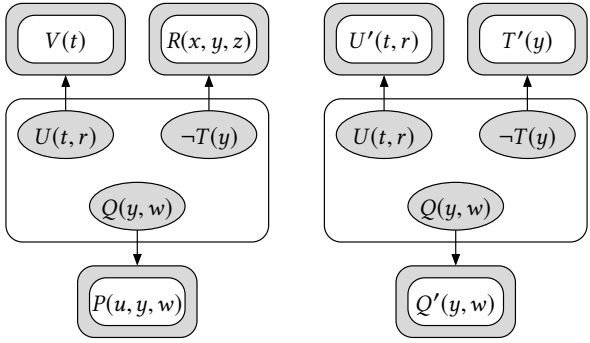

  \begin{subfigure}{0.4\textwidth}\centering
  \input{alg-orig.pspdftex}
  \caption{Original query.\label{fig:ex-step1}}
  \vspace{1em}
\end{subfigure}
 \begin{subfigure}[t]{0.22\textwidth}\centering
  \input{alg-pos.pspdftex}
  \caption{Joining exogenous relations.\label{fig:ex-step2}}
\end{subfigure}\quad
 \begin{subfigure}[t]{0.22\textwidth}\centering
  \input{alg-final.pspdftex}
  \caption{Removing exogenous variables and adding variables from the containing atom.\label{fig:ex-step3}}
\end{subfigure}
\caption{Illustration of the execution of the algorithm $\algname{ExoShap}$ on the query $q'$ from Example~\ref{ex:gaifman}.\label{fig:ex}}
\end{figure}

\begin{example}
Consider the query $q'$ of Example~\ref{ex:gaifman}. We have already established that $q'$ does not have a non-hierarchical path. Figure~\ref{fig:ex-step1} illustrates both the exogenous atom graph of $q'$ and the result of Lemma~\ref{lemma:ccsubset}. The atoms in the white rectangles are the exogenous atoms of $q'$, and the atoms in the gray circles are the non-exogenous atoms. Every gray rectangle containing a set of exogenous atoms represents a connected component in $\exograph(q')$. For example, the atoms $R(x,y)$ and $\neg S(x,z)$ share the exogenous variable $x$ and the atoms $\neg S(x,z)$ and $O(z)$ share the exogenous variable $z$. Hence, all three atoms form a connected component $C$ in the graph. The only non-exogenous variable in $C$ is $y$ and, indeed, there is a non-exogenous atom $\neg T(y)$ that uses $y$. In fact, there are two such atoms, and in the next step we can arbitrary select one of them. The exogenous atom $P(u,y,w)$ is a connected component on its own, as its only exogenous variable $u$ does not occur in any other atom. And, again, there is a non-exogenous atom $Q(y,w)$ that uses both non-exogenous variables $y$ and $u$ of $P(u,y,w)$.
\qed\end{example}


Next, we discuss the first step of our reduction. We prove that we can replace every connected component $C$ of $\exograph(q)$ with a single exogenous atom in $q$, obtained by ``joining'' all the atoms of $C$ (and the corresponding relations of $D$), without affecting the Shapley value.
Since some of the atoms in a connected component $C$ may be negated, and it is not clear how to combine positive and negative atoms into a single atom, we first replace them with positive atoms and compute the complement of the corresponding relations. Formally, given a negated atom $\alpha$, we denote by $\overline{\alpha}$ the atom obtained from $\alpha$ by removing the negation. Then, we denote by $\overline{R_{\alpha}^D}$ the relation obtained from $R_{\alpha}^D$ by adding every fact over the domain of $D$ if and only if it does not appear in $R_{\alpha}^D$. That is, if the arity of $R_\alpha$ is $k$, then we add to $\overline{R_{\alpha}^D}$ a fact $R_{\alpha}(c_1,\dots,c_k)$, where each $c_i$ is a constant from the domain of $D$, if and only if $R_\alpha(c_1,\dots,c_k)\not\in R_\alpha^D$. Hence, we obtain a query $q'$ by replacing every negated exogenous atom $\alpha$ of $q$ with the atom $\overline{\alpha}$, and we construct a database $D'$ by replacing every exogenous relation $R_\alpha^D$ corresponding to a negated atom of $q$ with the complement relation $\overline{R_\alpha^D}$. The same idea has been used in the proof of hardness for the query $\cqrnst$ in the previous section (Lemma~\ref{lemma:four-hard-queries}), and we prove that this transformation of the database and the query does not affect the Shapley value (i.e., $\shp(D,q,f)=\shp(D',q',f)$ for every $f$) in the Appendix.

From now on, we assume that every exogenous atom of $q$ is positive. We use that assumption to prove the following.




\def\joinlemma{ Computing $\shp(D,q,f)$, given $D$ and $f$, can be
  efficiently reduced to computing $\shp(D',q',f)$ for a \CQneg $q'$
  without self-joins such that: \e{(1)} every exogenous variable of
  $q'$ occurs is a single atom, and \e{(2)} $q'$ does not have any
  non-hierarchical path.}
\begin{lemma} \label{lemma:joinexo}
\joinlemma
\end{lemma}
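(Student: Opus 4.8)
The plan is to obtain $q'$ by \emph{collapsing each connected component of the exogenous atom graph $\exograph(q)$ into a single exogenous atom}. Since the preceding complementation step lets us assume every exogenous atom of $q$ is positive, I would do the following. For each connected component $C=\set{\alpha_1,\dots,\alpha_k}$ of $\exograph(q)$, let $x_1,\dots,x_n$ list the variables occurring in the atoms of $C$, introduce a fresh relation symbol $R^C$ that I place in the exogenous part of the schema, set $q_C(x_1,\dots,x_n)\dl\alpha_1,\dots,\alpha_k$, and define $(R^C)^{D'}\eqdef q_C(D)$. Then $q'$ is $q$ with the atoms of each component $C$ replaced by the single atom $\alpha_C\eqdef R^C(x_1,\dots,x_n)$, and $D'$ is $D$ with these joined relations added; the non-exogenous atoms and all endogenous facts are left untouched, so in particular $f\in D'\endo$.

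I would first dispatch Shapley-invariance and the routine bookkeeping. Because each atom of $C$ is positive and exogenous, an assignment $h$ maps $(x_1,\dots,x_n)$ into $q_C(D)=(R^C)^{D'}$ iff it satisfies all of $\alpha_1,\dots,\alpha_k$ on $D\exo$; hence for every coalition $E\subseteq D\endo$ we have $(E\cup D\exo)\models q$ iff $(E\cup D'\exo)\models q'$, the two wealth functions coincide, and $\shp(D,q,f)=\shp(D',q',f)$. The reduction is polynomial in $|D|$: there are only constantly many components and each join $q_C(D)$ has polynomial size. Self-join-freeness is immediate since every $R^C$ is fresh and the surviving non-exogenous atoms inherit it from $q$. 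For property~(1), an exogenous variable of $q$ occurs only in exogenous atoms, and any two exogenous atoms sharing it lie in the same component, so after collapsing it occurs in the single atom $\alpha_C$; and the transformation changes no variable's exogenous/non-exogenous status.

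The main obstacle is property~(2), and I would argue it by contraposition: assuming $q'$ has a non-hierarchical path I derive one in $q$, contradicting the hypothesis. Its inducing atoms $\alpha_x,\alpha_y$ satisfy $R_{\alpha_x},R_{\alpha_y}\notin X$, hence are \emph{not} collapsed atoms (each $\alpha_C$ is exogenous) and are therefore original, unchanged atoms of $q$; consequently the witnessing variables $x,y$ and the whole removal set $W\eqdef(\var(\alpha_x)\cup\var(\alpha_y))\setminus\set{x,y}$ consist only of non-exogenous variables. The delicate point is that collapsing a component turns $\var(\alpha_C)$ into a clique, so $\graph(q')$ contains every edge of $\graph(q)$ (each collapsed atom's variables survive together inside $\alpha_C$, and non-exogenous atoms are unchanged) plus possibly new ones; I must show each new edge can be rerouted in $\graph(q)\setminus W$. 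If an edge $\set{u,v}$ of the $q'$-path from $x$ to $y$ already lies in $\graph(q)$ I keep it (its endpoints, being on the path, avoid $W$). If it is new, then $u,v\in\var(\alpha_C)$ for some component $C$; since $C$ is connected in $\exograph(q)$, walking along a chain of its atoms yields a $u$--$v$ path in $\graph(q)$ whose interior vertices are the \emph{exogenous} variables shared between consecutive atoms of the chain, and these interior vertices avoid $W$ precisely because $W$ is non-exogenous. Splicing these detours into the $q'$-path produces a walk, hence a path, from $x$ to $y$ in $\graph(q)\setminus W$, so $(\alpha_x,\alpha_y,x,y)$ induce a non-hierarchical path already in $q$ --- the desired contradiction.

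I expect the rerouting in the last paragraph to be where the real care is needed: one must verify that every detour's interior is genuinely exogenous (so disjoint from $W$) and that the concatenation still connects $x$ to $y$. Here Lemma~\ref{lemma:ccsubset} is a helpful structural companion, since it pins down that the non-exogenous variables of any component sit inside a single non-exogenous atom; this guarantees that an edge with two non-exogenous endpoints is never new, so every genuinely new edge has an exogenous endpoint through which the detour can route. The remaining properties are routine once the construction is written down.
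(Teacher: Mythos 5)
Your proposal is correct and takes essentially the same approach as the paper's proof: collapse each connected component of $\exograph(q)$ into a single positive exogenous atom whose relation is the join $q_C(D)$, observe that $(D\exo\cup E)\models q$ iff $(D'\exo\cup E)\models q'$ for every coalition $E\subseteq D\endo$ (so the Shapley values coincide), and prove property~(2) by contraposition, rerouting every edge internal to a collapsed component through the component's shared \emph{exogenous} variables, which are disjoint from the removal set since that set consists only of non-exogenous variables. Your rerouting argument matches the paper's treatment of consecutive path variables $v_i,v_{i+1}$ lying in a collapsed atom, and your side remark invoking Lemma~\ref{lemma:ccsubset} is harmless but, as in the paper, not actually needed.
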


In the proof of Lemma~\ref{lemma:joinexo}, given in the Appendix, we show that we can combine all the atoms $\alpha_1,\dots,\alpha_k$ of a connected component $C$ in $\exograph(q)$ into a single atom $\alpha_C$ such that $\var(\alpha_C)=\cup_{i\in\set{1,\dots,k}}\var(\alpha_i)$, while simultaneously replacing all the relations $R_{\alpha_1},\dots, R_{\alpha_k}$ in $D$ with a single relation $R_{\alpha_C}$ obtained by joining the $k$ relations according to the variables of the corresponding atoms, without affecting the Shapley value. 
We repeat this process with every connected component of $\exograph(q)$ and obtain a query $q'$ satisfying the first property of Lemma~\ref{lemma:joinexo}. As for the second property, we show that the existence of a non-hierarchical path $x-v_1-\dots-v_n-y$ in $q'$ induced by the atoms $\alpha_x$ and $\alpha_y$ implies the existence of a non-hierarchical path in $q$ induced by the same atoms, since every two consecutive variables $v_i,v_{i+1}$ in the path either occur together in a non-exogenous atom of $q$ or in a connected component of $\exograph(q)$.

One may suggest that it is possible to avoid replacing the negated exogenous atoms of $q$ with positive atoms before combining exogenous atoms, by simply constructing the relation $R_{\alpha_C}$ in $D$ using the query $q_C(\vec{x})\dl \alpha_1,\dots,\alpha_k$, where $\alpha_1,\dots,\alpha_k$ are the original (possibly negated) atoms in the connected component $C$, and $\vec{x}$ contains every variable of $C$. The problem with this approach is that the resulting $q_C$ may have non-safe negation, as a non-exogenous variable of $C$ may appear only in negated atoms of $C$ (and in a positive atom outside $C$). Thus, it is essential to replace the relations corresponding to negated exogenous atoms of $q$ with their complement relations, before combining the atoms of $C$ into a single one.

\begin{example}
Consider again the query $q'$ illustrated in Figure~\ref{fig:ex}. Since the atom $\neg S(x,z)$ in the topmost connected component $\set{R(x,y),\neg S(x,z),O(z)}$ is negated, we first replace it with a positive atom $S(x,z)$. Then, we combine all three atoms into a single atom $R(x,y,z)$, as illustrated in Figure~\ref{fig:ex-step2}, and replace these atoms in the query with the new atom. The new relation in the database will contain every answer to the query $q_C(x,y,z)\dl R(x,y),S(x,z),O(z)$ on $D$. Note that $\neg V(t)$ is a connected component on its own, but we still replace it with a positive atom $V(t)$.
\qed\end{example}



Next, we use the results of Lemmas~\ref{lemma:ccsubset} and~\ref{lemma:joinexo} to reduce the computation of $\shp(D,q,f)$ to the computation of $\shp(D',q',f)$ for a query $q'$ where every exogenous atom corresponds to a non-exogenous atom such that the two have the exact same variables.

\def\matchingAtomLemma{ Computing $\shp(D,q,f)$ can be
  efficiently reduced to computing $\shp(D',q',f)$ for a \CQneg
  $q'$ without self-joins such that: \e{(1)} for every $\alpha\in\at\exo(q')$ there
  exists $\alpha'\in\at\nexo(q')$ for which
  $\var(\alpha)=\var(\alpha')$, and \e{(2)} $q'$ does not have any
  non-hierarchical path.  }

\begin{lemma} \label{lemma:removeexo}
\matchingAtomLemma
\end{lemma}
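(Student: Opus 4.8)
The plan is to reduce to the output of Lemma~\ref{lemma:joinexo} and then normalize each exogenous atom against its ``containing'' non-exogenous atom. First I would invoke Lemma~\ref{lemma:joinexo}, so that we may assume the working query $q$ already has every exogenous variable occurring in a single atom and no non-hierarchical path; composing the two reductions will give the reduction asserted for the original $q$. Under this assumption no two distinct exogenous atoms share an exogenous variable, so every connected component of $\exograph(q)$ is a single exogenous atom. Applying Lemma~\ref{lemma:ccsubset} to each such singleton component then yields, for every exogenous atom $\alpha_C$, a non-exogenous atom $\alpha\in\at\nexo(q)$ with $\var\nexo(\alpha_C)\subseteq\var(\alpha)$, all of whose variables are non-exogenous. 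The core idea is to replace each $\alpha_C$ by a fresh \emph{positive} exogenous atom $\alpha'$ whose variable set is exactly $\var(\alpha)$.

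Concretely, for a fixed $\alpha_C$ with matching atom $\alpha$, I would set $W:=\var(\alpha)\setminus\var\nexo(\alpha_C)$, introduce a fresh exogenous relation symbol $R_{\alpha'}$ with $\var(\alpha')=\var(\alpha)$, and populate it as $R_{\alpha'}^{D'}:=P\times\Dom(D)^{|W|}$, where $P:=\pi_{\var\nexo(\alpha_C)}(R_{\alpha_C}^{D})$ keeps only the non-exogenous coordinates and the $W$-coordinates are padded with all domain tuples. Then I replace $\alpha_C$ by $\alpha'$ in the query and $R_{\alpha_C}^{D}$ by $R_{\alpha'}^{D'}$ in the database. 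Freshness of $R_{\alpha'}$ keeps the query self-join-free; safety is maintained because $\alpha'$ is positive and reuses every non-exogenous variable of $\alpha_C$ (so any variable whose only positive occurrence was $\alpha_C$ still sits in the positive atom $\alpha'$); the players $D\endo=D'\endo$ and the fact $f$ are untouched; and since the query is fixed, $|W|$ is bounded by its maximal arity, so $R_{\alpha'}^{D'}$ is computable in polynomial time in $|D|$.

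Next I would prove that the cooperative game is unchanged, i.e.\ $(E\cup D\exo)\models q \iff (E\cup D'\exo)\models q'$ for every $E\subseteq D\endo$, which immediately yields $\shp(D,q,f)=\shp(D',q',f)$ because the player set and $f$ coincide. Observe that the two instances differ only on the swapped atom and relation and that $\var(q')=\var(q)\setminus\var\exo(\alpha_C)$. In the forward direction I restrict a homomorphism $h$ of $q$ to $\var(q')$: all unchanged atoms remain satisfied, and $\alpha'$ is satisfied because $h(\var\nexo(\alpha_C))\in P$ while the $W$-coordinates are free, so $h(\var(\alpha))\in R_{\alpha'}^{D'}$. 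In the backward direction, from a homomorphism $h'$ of $q'$ I get $h'(\var\nexo(\alpha_C))\in P$, hence some tuple of $R_{\alpha_C}^{D}$ extends $h'$ on $\var\exo(\alpha_C)$, producing a homomorphism of $q$. The crucial point, supplied by Lemma~\ref{lemma:joinexo}, is that $\var\exo(\alpha_C)$ occurs in no other atom, so these private coordinates can be projected away and reinstated independently without interacting with the rest of $q$.

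The step I expect to be the main obstacle is property~(2): even though $\alpha'$ carries the extra variables $W$, it must not create a non-hierarchical path. I would resolve this by the observation that $\var(\alpha')=\var(\alpha)$ while $\alpha$ itself is still present and unchanged, so every Gaifman edge contributed by $\alpha'$ is already contributed by $\alpha$; meanwhile deleting $\alpha_C$ only removes the private vertices $\var\exo(\alpha_C)$ together with their incident edges. Consequently $\graph(q')$ is exactly the induced subgraph $\graph(q)[\var(q)\setminus\var\exo(\alpha_C)]$. Since the inducing atoms of any non-hierarchical path must be non-exogenous and are therefore unaffected by the swap, and since a path in an induced subgraph is also a path in the full graph, a non-hierarchical path in $q'$ would entail one in $q$, contradicting the hypothesis. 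Finally, I would argue that the per-atom replacements are pairwise independent (distinct exogenous atoms use distinct relations, and non-exogenous atoms are never modified), so applying the construction to every exogenous atom in turn yields a query $q'$ in which each exogenous atom shares its variable set with some non-exogenous atom, establishing~(1), while~(2) and value-preservation are maintained throughout; composing these reductions gives the desired efficient reduction.
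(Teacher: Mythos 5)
Your proposal is correct and follows essentially the same route as the paper's proof: compose with Lemma~\ref{lemma:joinexo}, use Lemma~\ref{lemma:ccsubset} to match each (now singleton-component) exogenous atom with a containing non-exogenous atom, replace it by an atom on exactly that atom's variables populated by projecting out the private exogenous coordinates and padding with all domain values, and verify that the cooperative game is unchanged via homomorphism restriction/extension, which works precisely because the exogenous variables occur in no other atom. Your observation that $\graph(q')$ is the induced subgraph $\graph(q)[\var(q)\setminus\var\exo(\alpha_C)]$ (since the fresh atom duplicates the edges of the still-present non-exogenous atom) is a slightly slicker packaging of the paper's edge-by-edge argument that no non-hierarchical path can be created, but the substance is identical.
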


Recall that $\at\exo(q)$ and $\at\nexo(q)$ are the sets of exogenous and non-exogenous atom in $q$, respectively. To establish Lemma~\ref{lemma:ccsubset}, we first observe that we can remove every exogenous variable from $q$ without affecting the Shapley value, as Lemma~\ref{lemma:joinexo} implies that we can assume that every exogenous variable occurs in a single exogenous atom. Then, Lemma~\ref{lemma:ccsubset} implies that for every exogenous atom $\alpha$ there exists a non-exogenous atom $\beta$ such that $\var(\alpha)\subseteq\var(\beta)$. Hence, for each such $\alpha$, we select such an atom $\beta$ and replace the atom $\alpha$ in $q$ with the atom $R_{\alpha'}(x_1,\dots,x_n)$, where $\{x_1,\dots,x_n\}$ is the set of variables in $\beta$. In the database, we replace the relation corresponding to the atom $\alpha$ with a new relation $R_{\alpha'}$ that contains every fact obtained from the Cartesian product of: \e{(1)} the projection of $R_{\alpha}^D$ to the attributes corresponding to non-exogenous variables in $\alpha$, and \e{(2)} every possible combination of $|\var(\beta)|-|\var\nexo(\alpha)|$ values from the domain of $D$.
The fact that $q'$ does not have a non-hierarchical path is rather straightforward based on the fact that every pair $\set{u_1,u_2}$ of variables of $q'$ occurring together in an atom of $q'$ necessarily occur together in a non-exogenous atom of $q'$ that is also an atom of $q$.

\begin{example}
Figure~\ref{fig:ex-step3} illustrates the implications of Lemma~\ref{lemma:removeexo} on the query $q'$ of Example~\ref{ex:gaifman}. We replace the relation $R(x,v,z)$ with the relation $T'(v)$ obtained from it by removing the exogenous variables $x$ and $z$. As for the atom $V(t)$, it does not contain any exogenous variables, but the non-exogenous atom $U(t,r)$ containing all the variables of $V(t)$ also uses the variable $r$; hence, we add this variable and obtain a new atom $U'(t,r)$.
\qed\end{example}

Our final observation is that a query $q'$ satisfying the properties of Lemma~\ref{lemma:removeexo} is hierarchical. This holds true since $\at\nexo(q')$ does not contain a non-hierarchical triplet (otherwise, the original $q$ would contain a non-hierarchical path). Adding an atom $\alpha$ to a hierarchical query $q$ such that $\var(\alpha)=\var(\alpha')$ for some atom $\alpha'$ in $q$ cannot change the non-hierarchical structure of the query.

\begin{algorithm}[t] \label{alg:exo}
\SetAlgoLined
\ForAll {negated $\alpha\in\at\exo(q)$}
        {
        Replace $\alpha$ in $q$ with $\overline{\alpha}$
        
        Replace $R_{\alpha}^D$ with $\overline{R_{\alpha}^{D}}$
}
\ForAll{$\set{\alpha_1,\dots\alpha_k}\in\cc(\exograph(q))$}
    {
    
    $\set{x_1,\dots,x_n}\leftarrow$ variables occurring in ${\alpha_1,\dots\alpha_k}$
    
    $q'(x_1,\dots,x_n)\dl \alpha_1,\dots,\alpha_k$
    
    Replace $\alpha_1,...,\alpha_k$ in $q$ with $R_{\alpha}(x_1,\dots,x_n)$
    
    Replace $R_{\alpha_1}^D,...,R_{\alpha_k}^D$ with $R_{\alpha}^{D}=q'(D)$ 
    }
\ForAll {$\alpha\in\at\exo(q)$}
    {Let $\beta\in\at\nexo(q)$ s.t.~$\var\nexo(\alpha)\subseteq \var(\beta)$
    
    $\set{x_1,\dots,x_n}\leftarrow$ variables occurring in $\beta$
    
    $\set{y_1,\dots,y_m}\leftarrow$ non-exogenous variables occurring in $\alpha$
    
    $q'(y_1,\dots,y_m)\dl \alpha$
    
    
    
    Replace $\alpha$ in $q$ with $R_{\alpha'}(x_1,\dots,x_n)$ 
    
    Replace $R_{\alpha}^D$ with $R_{\alpha'}^D=q'(D)\times \{(c_1,...,c_{n-m})\mid c_i\in \Dom(D)\}$ 
    }
 \Return $\shp{}(D,q,f)$
\caption{$\algname{ExoShap}(D,q,f)$\label{alg:exogenous}}
\end{algorithm}

We summarize the section with the algorithm \algname{ExoShap$(D,q,f)$}
(Algorithm~\ref{alg:exogenous}) for computing the Shapley value for a
self-join-free \CQneg that does not have a non-hierarchical path.  The
algorithm starts by modifying $q$ and $D$ according to the steps
described throughout this section. First, it replaces the negated
exogenous atoms of $q$ with positive atoms, and the corresponding
relations in $D$ with their complement relations. Then, it combines
the exogenous atoms in every connected component of $\exograph(q)$
into a single atom while joining the corresponding relations of
$D$. Finally, it removes the exogenous variables of $q$, and adds to
every exogenous atom the missing variables from the non-exogenous atom
containing it. The final database is constructed from the Cartesian
product of the projection of every exogenous relation $R_\alpha^D$ to
the attributes corresponding to the non-exogenous variables of
$\alpha$, and the set $\{c_1,\dots,c_r\mid c_i\in\Dom(D)\}$, where $r$
is the number of non-exogenous variables we have added to
$\alpha$. Then, the algorithm invokes an algorithm for computing the
Shapley value for hierarchical queries.


\subsection{Application to Probabilistic Databases}
We conclude by observing that our results in this section also apply
to the problem of query evaluation over tuple-independent
probabilistic databases~\cite{DBLP:conf/vldb/DalviS04}. Fink and
Olteanu~\cite{DBLP:journals/tods/FinkO16} have studied this problem
for queries with negation.  They considered the class RA$^{-}$ of
queries that includes the \CQneg{s}. When restricting to \CQneg{s},
they established that query evaluation is possible in polynomial time
for hierarchical \CQneg{s}, and it is $\FP$-complete otherwise. The proofs
of this section immediately provide a generalization of their result
to account for \e{deterministic relations}, where the probability of
every fact is $1$. The only difference is that instead of using the
algorithm for computing the Shapley value for hierarchical \CQneg{s}, we
will use the algorithm for query evaluation over tuple-independent
probabilistic databases for hierarchical \CQneg{s}. Hence, we obtain the
following result (where $X$ is the set of deterministic relations).

\begin{theorem}\label{thm:prob} Let $\signature_X$ be a schema and let
  $q$ be a \CQneg without self-joins. If $q$ has a non-hierarchical path,
  then its evaluation over tuple-independent probabilistic databases
  is $\FP$-complete. Otherwise, the query can be evaluated in
  polynomial time. \end{theorem}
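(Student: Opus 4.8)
The plan is to exploit the exact analogy between the two settings: an exogenous relation (whose facts are always present) in the Shapley game corresponds to a \emph{deterministic} relation in the probabilistic setting (one in which every fact has probability~$1$), while the remaining relations carry the independent probabilistic facts, playing the role that the endogenous relations played before. Under this correspondence I would re-run the entire development of Section~\ref{sec:exo}, replacing the quantity $\shp(D,q,f)$ by the query probability $\Pr{D\models q}$ throughout. The point is that every operation performed by \algname{ExoShap} is a purely structural rewriting of the query and of the \emph{deterministic} part of the database, so I only need to show that each step preserves $\Pr{D\models q}$ exactly; the base cases are then handled by the dichotomy of Fink and Olteanu~\cite{DBLP:journals/tods/FinkO16} (and Dalvi and Suciu~\cite{DBLP:conf/vldb/DalviS04}) for \CQneg{s} \emph{without} deterministic relations.

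For the tractable side, I would recheck the three reduction steps of Lemmas~\ref{lemma:joinexo} and~\ref{lemma:removeexo} under the probabilistic semantics. First, replacing a negated deterministic atom $\alpha$ by $\overline{\alpha}$ and $R_{\alpha}^D$ by its complement $\overline{R_{\alpha}^D}$ preserves $\Pr{D\models q}$: since $R_\alpha$ is deterministic, a homomorphism satisfies $\neg R_\alpha(\vec t)$ in a possible world if and only if it satisfies $\overline{R_\alpha}(\vec t)$, and no probabilistic fact is disturbed. Second, joining a connected component $\set{\alpha_1,\dots,\alpha_k}$ of $\exograph(q)$ into a single atom over the join of the corresponding deterministic relations preserves the probability, because the subconjunction $\alpha_1\wedge\dots\wedge\alpha_k$ is a deterministic predicate whose truth coincides with that of the joined atom for every binding. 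Third, projecting away exogenous variables and padding with all domain tuples, as in Lemma~\ref{lemma:removeexo}, again only rewrites deterministic relations and leaves the probabilistic facts and their independence untouched. After these steps the query is hierarchical by the same argument as in Section~\ref{sec:exo}, and I would conclude by invoking the polynomial-time algorithm of Fink and Olteanu for hierarchical \CQneg{s} instead of the hierarchical Shapley algorithm.

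For the hard side, I would reuse the reduction outlined for Theorem~\ref{thm:exo}: from a \CQneg $q$ with a non-hierarchical path induced by $\alpha_x$ and $\alpha_y$, the atoms along the path encode the middle atom $S(x,y)$ of a basic query, while $\alpha_x$ and $\alpha_y$ encode the $R$ and $T$ atoms, their polarities selecting which of $\cqrst,\cqnrsnt,\cqrnst,\cqrsnt$ we reduce from. The difference from the Shapley proof is that the source hardness is now the $\FP$-hardness of \emph{evaluating} these basic queries over tuple-independent databases~\cite{DBLP:conf/vldb/DalviS04,DBLP:journals/tods/FinkO16}, and the reduction becomes a direct probability-preserving one rather than an interpolation over subset counts: I would assign probability~$1$ to every gadget fact lying on the path or on the connecting deterministic atoms and carry over the probabilities of the source instance to the facts playing $R$ and $T$, so that $\Pr{D\models q}=\Pr{D_0\models q_0}$ for the source instance $D_0$ and basic query $q_0$.

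The main obstacle I anticipate is the hardness direction, and specifically verifying that the non-hierarchical path behaves as a faithful deterministic ``wire'': one must guarantee that, in every possible world, the path produces a match between a binding of $x$ and a binding of $y$ exactly when the encoded $S$-tuple is present, introducing neither spurious homomorphisms nor unwanted correlations among the $R$- and $T$-facts. As in the self-join remark at the end of Section~\ref{sec:bcqn}, I expect to control this by assuming disjoint domains for the relations playing $R$ and $T$ and by using the safe-negation condition to fix the polarities of $\alpha_x$ and $\alpha_y$; establishing that, under these conventions, the path-induced connection is equivalent to a single $S$ atom for the probabilistic semantics is the crux of the argument.
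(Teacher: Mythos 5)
Your proposal is correct and follows essentially the same route as the paper: the paper's proof is precisely the observation that the Section~\ref{sec:exo} reductions (complementing negated exogenous atoms, joining connected components of $\exograph(q)$, and eliminating exogenous variables, as well as the hardness reduction for Theorem~\ref{thm:exo}) preserve satisfaction of the query for \emph{every} subset of non-exogenous facts, hence they preserve tuple-independent query probabilities verbatim, with deterministic relations playing the role of exogenous ones. The two endpoints you swap in---Fink and Olteanu's polynomial-time algorithm for hierarchical \CQneg{s} on the tractable side, and the known $\FP$-hardness of the basic non-hierarchical queries over tuple-independent databases on the hard side---are exactly the endpoints the paper uses, and the ``deterministic wire'' concern you flag is precisely what the paper's Lemmas~\ref{lemma:yes_to_yes_ex} and~\ref{lemma:not_to_not_ex} already establish.
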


\section{Approximation}\label{sec:approx}
As seen in the previous sections, computing the exact Shapley value is
often hard. Hence, in this section, we consider its approximate
computation. There exists a Multiplicative
Fully-Polynomial Randomized Approximation Scheme (FPRAS) for computing
the Shapley value for any CQ and, in fact, for any union of
CQs~\cite{shapley-icdt2020}. Here, we show that the
addition of negation changes the complexity picture completely.
Recall that an FPRAS for a numeric function $f$ is a randomized
algorithm $A(x,\epsilon,\delta)$, where $x$ is an input for $f$ and
$\epsilon,\delta\in(0,1)$. The algorithm returns an
$\epsilon$-approximation of $f(x)$ with probability at least
$1-\delta$ in time polynomial in $x$, $1/\epsilon$ and
$\log(1/\delta)$. More formally, for an \e{additive} (or \e{absolute})
FPRAS we have that:
\[\Pr{f(x)-\epsilon\leq A(x,\epsilon,\delta)\leq f(x)+\epsilon)}\geq
  1-\delta\,,\] and for a \e{multiplicative} (or \e{relative}) FPRAS we have
that: \[\Pr{{f(x)}/{(1+\epsilon)}\leq A(x,\epsilon,\delta)\leq
    (1+\epsilon)f(x)}\geq 1-\delta\,.\]


\subsection{Additive vs.~Multiplicative Approximation} We start by
showing that there exists an additive FPRAS for computing the Shapley
value for \CQneg{s}. The additive FPRAS for \CQneg{s} is a
generalization of the additive FPRAS for CQs. We observe that when
negated atoms are allowed along with self-joins, a fact $f$ may change
the query result from false to true in one permutation, while changing
the query result from true to false in another permutation. For a
random permutation $\sigma$ of the facts in $D\endo$, the result of
$q(D\exo\cup\sigma_f\cup\{f\})-q(D\exo\cup\sigma_f)$ is a random
variable $x\in\{-1,0,1\}$. By using the Hoeffding bound for sums of
independent random variables in bounded
intervals~\cite{hoeffding1994probability}, we get an additive FPRAS
for computing $\shp(D,q,f)$ by taking the average value of $x$ over
$O\left(\log(1/\delta)/\epsilon^2\right)$ samples of random
permutations.


For CQs, an additive FPRAS is also a
multiplicative FPRAS~\cite{shapley-icdt2020}. This is
due to the \e{gap property}: there exists a polynomial $p$ such that
for all databases $D$ and facts $f\in D$ it holds that $\shp(D,q,f)$
is either zero or at least $1/p(|D|)$. We will now show that this property does not hold when negation is added to the picture; hence, this approach for
obtaining a multiplicative approximation of the Shapley value is no
longer valid. As an example, consider the query $$q()\dl R(x),S(x,y),\neg R(y)$$
and the database $D$ constructed as follows. For every $i\in\set{0,\dots,2n}$ we add to $D$ an exogenous fact $S(c_x^i,c_y^i)$. Moreover, for every $i\in\set{1,\dots,n}$ we add to $D$ an exogenous fact $R(c_x^i)$ and an endogenous fact $R(c_y^i)$, and for every $i\in\set{0,n+1,\dots,2n}$ we add an endogenous fact $R(c_x^i)$. We will show that the fact $f=R(c_x^0)$ does not satisfy the gap property. 

First, note that $D\exo\models q$ since for every $i\in\set{1,\dots,n}$, there is
a homomorphism $h$ from $q$ to $D$, where $h(x)=c_x^i$ and $h(y)=c_y^i$. For the fact
$f$ to change the query result from false to true, we first need to
add \e{all} the endogenous facts of the form $R(c_y^i)$ in $D$ to a permutation. Moreover, the first endogenous
fact of the form $R(c_x^i)$ that will
be added to the permutation will change the query result from false to
true, and no fact could change it back to false; hence, the fact $f$
has to appear before all these facts in a permutation. Overall, there
is exactly one subset $E\subseteq D\endo$, such that
$(D\exo \cup E)\not\models q$ and
$(D\exo \cup E \cup \{f\})\models q$. We have that $|E|=n$ and
$|D\endo|=2n+1$; thus, we conclude the following.
 \[|\shp(D,q,f)|=\dfrac{n!n!}{(2n+1)!}\leq\dfrac{1}{2^n}=2^{-\Theta(|D|)}\]
 
 We can generalize this result and show that the gap property does not
 hold for any ``natural'' CQ with negation and without constants.

 \def\thmgap{ Let $q$ be a satisfiable \CQneg with at least one
   negated atom.  Assume that $q$ has no constants, and that $q$ is
   positively connected. There is a sequence $\set{D_n}_{n=1}^\infty$
   of databases and a fact $f$ such that $|D_n|= \Theta(n)$ and
   $0<|\shp(D_n,q,f)|\leq 2^{-\Theta(n)}$.  }
\begin{theorem}\label{thm:negation-no-gap}
\thmgap
\end{theorem}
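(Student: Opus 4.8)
The plan is to generalize the explicit construction given above for $R(x),S(x,y),\neg R(y)$ into a uniform gadget built from $2n+1$ disjoint copies of a frozen witness for $q$, and this is where all three hypotheses enter. Since $q$ is satisfiable, constant-free, and has safe negation, freezing $q$ under an \emph{injective} assignment $\theta$ (distinct constants for distinct variables) produces a witness: all positive atoms become present facts $F^+=\{P(\theta(\vec t))\mid P(\vec t)\in\posq(q)\}$, and satisfiability forces that no negated atom is syntactically identical to a positive one, so under the injective freeze the required-absent facts $F^-=\{N(\theta(\vec t))\mid \neg N(\vec t)\in\negq(q)\}$ are disjoint from $F^+$. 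First I would take $2n+1$ pairwise-disjoint copies of this witness, each on its own fresh constants $\theta_c$; the \emph{no constants} assumption is exactly what lets every copy live on a private value universe.

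The copies play three roles. For $c\in\{1,\dots,n\}$ (the kill-switch copies) I would place all of $F^+_c$ as exogenous facts, single out one negated atom $\neg N(\vec w)$, and declare $g_c:=N(\theta_c(\vec w))$ endogenous; in $D\exo$ this copy is a full witness, while adding $g_c$ violates $\neg N$ and destroys it. For $c\in\{0,n+1,\dots,2n\}$ (the gate copies) I would instead place all but one positive fact as exogenous, making the missing fact endogenous—the gate of copy $0$ is the target fact $f$, the gates of the copies $c>n$ are dummy facts $d_1,\dots,d_n$—and keep $F^-_c$ permanently absent, so a gate copy is a witness iff its gate is present. Because $q$ is \emph{positively connected} and the copies use disjoint constants, every homomorphism from $q$ into $D\exo\cup E$ must map the connected positive atoms entirely inside a single copy; hence $D\exo\cup E\models q$ iff some copy is currently a witness. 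Thus, for $E\subseteq D\endo\setminus\{f\}$ we get $D\exo\cup E\not\models q$ iff all $g_c\in E$ and no $d_j\in E$, while $D\exo\cup E\cup\{f\}$ always satisfies $q$ through copy $0$. The marginal contribution of $f$ is therefore nonzero for the \emph{unique} coalition $E^\star=\{g_1,\dots,g_n\}$, of size $n$ among the $|D\endo|=2n+1$ players, which gives
\[ |\shp(D_n,q,f)| \;=\; \frac{n!\,n!}{(2n+1)!} \;=\; 2^{-\Theta(n)}\neq 0, \]
with $|D_n|=\Theta(n)$ facts. The $n$ dummy gates serve only to pad $D\endo$ so that $E^\star$ is a balanced coalition; without them the factor would be a merely polynomial $1/(n+1)$.

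The main obstacle is ruling out \emph{spurious} homomorphisms, and this is precisely where self-joins and relations of mixed polarity bite. Two facts must be argued with care: that a kill-switch copy is a witness iff $g_c$ is absent (adding $g_c$ must not accidentally complete a positive atom, which is a danger when the negated relation also occurs positively), and that a gate copy is a witness iff its gate is present (deleting a single positive fact must truly falsify the positive part, which can fail when that relation is self-joined and its other facts survive). I would resolve the second point by freezing the \emph{core} of $\posq(q)$ instead of an arbitrary assignment: for a core every endomorphism is an automorphism, so removing any single fact of the frozen instance destroys all homomorphisms, making each gate fact necessary. The first point I would handle by choosing $\theta_c$ so that the tuple $\theta_c(\vec w)$ matches no positive-atom pattern over copy $c$'s constants, again using injectivity of the freeze together with positive connectivity to prevent the new $N$-fact from chaining into a within-copy or cross-copy witness. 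Finally I would note that the sign is immaterial since the theorem bounds $|\shp|$, and that the symmetric template with $f$ taken as a kill switch yields a negative value of the same magnitude, covering the case where the distinguished fact is preferred to lie in a negated relation.
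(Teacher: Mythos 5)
Your high-level architecture is exactly the paper's: $n$ ``kill-switch'' copies whose distinguished endogenous fact destroys an otherwise-complete witness, $n+1$ ``gate'' copies whose distinguished fact completes an otherwise-incomplete witness, positive connectivity plus disjoint domains to confine any homomorphism to a single copy, a unique working coalition $\{g_1,\dots,g_n\}$, and the value $n!\,n!/(2n+1)!$. The genuine gap is in how you build the two gadgets, and it surfaces precisely in the cases you yourself flag as dangerous (self-joins and mixed polarity), where your proposed fixes do not work. For the kill switch: adding the frozen fact of one negated atom to the injectively frozen witness need not falsify $q$. Take $q()\dl S(x,y),\, S(y,y),\, \neg T(x)$; the frozen witness is $\{S(a,b),S(b,b)\}$, and adding $g=T(a)$ leaves the satisfying homomorphism $x,y\mapsto b$ intact, so the copy is not destroyed. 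Your remedy---``choosing $\theta_c$ so that $\theta_c(\vec w)$ matches no positive-atom pattern''---is vacuous: any two injective freezes differ by a renaming, so whether $F^+_c\cup\{g_c\}$ satisfies $q$ is independent of that choice, and in this example there is only one negated atom to pick. For the gates: freezing the core of $\posq(q)$ does guarantee that deleting any single fact kills all homomorphisms of the positive part, but it creates a different failure when a relation occurs with both polarities, because the core need not be a witness of $q$ at all. Take $q()\dl N(x,y),\, N(y,y),\, \neg N(y,x)$; the canonical instance $\{N(a,b),N(b,b)\}$ satisfies $q$, but its core is $\{N(b,b)\}$, whose only homomorphism forces $x,y\mapsto b$ and then the negated atom hits the present fact $N(b,b)$, so the gate copy never satisfies $q$ even with its gate present.

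The paper sidesteps both problems by replacing your explicit frozen gadgets with extremal/existence arguments. For the kill switch it starts from any witness and adds facts to the negated relations one at a time; since saturating those relations over the fixed domain certainly falsifies $q$ (safety confines every negated atom's image to that domain), there is a \emph{last} fact whose addition flips the answer from true to false, and the pair consisting of that database and that fact is the kill-switch gadget $D_q$. For the gates it takes an \emph{inclusion-minimal} witness $D_q'$; minimality alone guarantees that removing any single fact falsifies $q$, with no appeal to cores. If you substitute these two gadgets for yours, the rest of your argument---confinement of homomorphisms via positive connectivity and safe negation, the unique coalition $\{g_1,\dots,g_n\}$, and the count $n!\,n!/(2n+1)!\le 2^{-\Theta(n)}$---goes through exactly as you wrote it.
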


Note that by ``positively connected'' we mean that the positive atoms of $q$ are connected (i.e., every two variables of $q$ are connected in the Gaifman graph through positive atoms). The proof of Theorem~\ref{thm:negation-no-gap} is nontrivial and, as usual, we give it in the Appendix.

 Theorem~\ref{thm:negation-no-gap} implies that we need at least
 $2^{\Theta(|D|)}$ sample permutations (and, in particular,
 exponential time) to obtain a multiplicative approximation from the
 additive one. This does not mean that there is no multiplicative
 approximation for \CQneg{s}; however, we will show that there
 are \CQneg{s} for which a multiplicative approximation does not exist
 at all (under conventional complexity assumptions).

\subsection{Hardness of Multiplicative Approximation}
We now explore the complexity of computing a multiplicative
approximation for the Shapley value through a connection to the
problem of relevance to the query. 

\begin{definition}[Relevance]
  Let $q$ be a Boolean query and $D$ a database. A fact $f\in D\endo$
  is \e{relevant to $q$} if
  $q(D\exo\cup E) \neq q(D\exo\cup E \cup \{f\})$ for some
  $E\subseteq D\endo$; we then say that $f$ is \e{positively} (resp.,
  \e{negatively}) relevant to $q$ if $q(D\exo\cup E \cup \{f\})$ is
  true (resp., false).
\end{definition}


This problem of determining whether a fact is relevant to a query is
strongly related to the approximation problem, as we cannot obtain a
multiplicative approximation in cases where we cannot decide if the
Shapley value is zero or not. In turn, deciding on zeroness is related
to the relevance problem.  Clearly, if $\shp(D,q,f)\neq 0$, then $f$
is relevant to $q$. However, it may be the case that $f$ is relevant
to $q$ but $\shp(D,q,f)=0$, as the following example shows.

\begin{example}
  Consider the query $q()\dl R(x,y),\neg R(y,x)$ and the database
  $\set{R(\val{1},\val{2}),R(\val{2},\val{1})}$ where both facts are
  endogenous. The fact $R(\val{1},\val{2})$ is positively relevant for
  $E=\emptyset$, and it is negatively relevant for
  $E=\set{R(\val{2},\val{1})}$. Therefore, the number of permutations
  where $f$ changes the query result from false to true is equal to
  the number of permutations where $f$ changes the result from true to
  false and we have that $\shp(D,q,f)=0$.
\qed  \end{example}

Nevertheless, there are cases where the relevance problem coincides
with the problem of deciding whether the Shapley value is zero.
The Shapley value of a relevant fact $f$ can be zero if and only if
$f$ is both positively and negatively relevant. This may be the case
if and only if $f$ belongs to a relation that appears both as a
positive and a negative atom in the query. We call a relation symbol
\e{polarity consistent} if it appears in $q$ only in positive atoms or
only in negative atoms. A fact over a polarity-consistent relation
symbol is relevant to $q$ if and only if
$\shp(D,q,f)\neq 0$.

\begin{example}
  Consider again the queries of our running example
  (Example~\ref{example:queries}). Clearly, in the queries $q_1$ and
  $q_2$, every relation is polarity consistent, as the queries are
  self-join-free. The same holds for the query $q_3$, as \rel{Adv} and
  \rel{Reg} occur only in positive atoms while \rel{TA} occurs only in
  negative atoms. The query $q_4$, on the other hand, contains both
  polarity-consistent relations (i.e., \rel{Adv}) and relations that
  occur in both positive and negative atoms (i.e., \rel{TA} and
  \rel{Reg}). In this case, a fact $f$ in the relation \rel{Adv} is
  relevant to $q_4$ if and only if $\shp(D,q_4,f)>0$. However, for
  a fact $f$ in \rel{TA} it may be the case that $f$ is relevant
  to $q_4$ while $\shp(D,q_4,f)=0$.\qed
\end{example}

It is straightforward to show that the relevance to a CQ without
negation can be decided in polynomial time. The problem is known to be
NP-complete for Datalog programs with
recursion~\cite{DBLP:journals/ijar/BertossiS17}. We now show that
there exists a \CQneg $q$ containing a polarity-consistent relation $T$, such that the relevance of a $T$-fact to $q$ is
NP-complete. (Hence, so is the problem of deciding if the Shapley
value is zero.)


Consider the following \CQneg:
\[\cqtrsnr()\dl T(z),\neg R(x),\neg R(y), R(z), R(w), S(x,y,z,w)\]
We prove the following.
{
\definecolor{Gray}{gray}{0.9}
\def\emprow{\multicolumn{1}{l}{}}
\begin{figure}[t]
\small
\centering
\begin{subfigure}[b]{0.2\linewidth}
\begin{tabular}{|c|} 
\cline{1-1}
\rowcolor{black}
\multicolumn{1}{l}{\textcolor{white}{$R$}}\\\cline{1-1}
\rowcolor{Gray}
$\val{c}$\\
\rowcolor{Gray}
$\val{a}$\\
$\val{1}$\\
$\val{2}$\\
$\val{3}$\\
$\val{4}$\\
\cline{1-1}
\end{tabular}
\end{subfigure}
\begin{subfigure}[b]{0.4\linewidth}
\begin{tabular}{|c|c|c|c|} 
\cline{1-4}
\rowcolor{black}
\multicolumn{4}{l}{\textcolor{white}{$S$}} \\\cline{1-4}
\rowcolor{Gray}
$\val{1}$ & $\val{2}$ & $\val{a}$ & $\val{a}$\\
\rowcolor{Gray}
$\val{b}$ & $\val{b}$ & $\val{1}$ & $\val{3}$\\
\rowcolor{Gray}
$\val{3}$ & $\val{4}$ & $\val{1}$ & $\val{2}$\\
\rowcolor{Gray}
$\val{d}$ & $\val{d}$ & $\val{c}$ & $\val{c}$\\
\cline{1-4}
\emprow\\
\emprow\\
\end{tabular}
\end{subfigure}
\begin{subfigure}[b]{0.1\linewidth}
\begin{tabular}{|c|c|c|} 
\cline{1-1}
\rowcolor{black}
\multicolumn{1}{l}{\textcolor{white}{$T$}}\\\cline{1-1}
$\val{c}$\\
\rowcolor{Gray}
$\val{a}$\\
\rowcolor{Gray}
$\val{1}$\\
\rowcolor{Gray}
$\val{2}$\\
\rowcolor{Gray}
$\val{3}$\\
\rowcolor{Gray}
$\val{4}$\\
\cline{1-1}
\end{tabular}
\end{subfigure}
\caption{\label{fig:relevance} The database constructed in the proof of Proposition~\ref{prop:nsc1} for $(x_1\vee x_2)\wedge(\neg x_1\vee \neg x_3)\wedge(x_3\vee x_4\vee \neg x_1\vee \neg x_2)$. }
\end{figure}
}

\def\propnscfirst{ Deciding whether $f\in T^D$ is relevant to $\cqtrsnr$,
  given $D$ and $f$, is NP-complete.
}
\begin{proposition}\label{prop:nsc1}
\propnscfirst
\end{proposition}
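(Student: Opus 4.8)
The easy direction is membership. Given $D$ and $f = T(c)\in T^D$, I would guess a subset $E\subseteq D\endo$ and verify in polynomial time that $\cqtrsnr(D\exo\cup E)\neq\cqtrsnr(D\exo\cup E\cup\set{f})$; this is polynomial because evaluating a fixed \CQneg has polynomial data complexity. Before constructing the reduction I would record the structural fact that drives everything: since the relation $T$ is polarity consistent and occurs only in the positive atom $T(z)$, adding the $T$-fact $f$ can never invalidate an existing satisfying homomorphism and can only enable new ones, and every newly enabled homomorphism must map $z$ to $c$. Consequently $f$ is relevant iff it is \emph{positively} relevant, i.e.\ iff there is an $E$ with $(D\exo\cup E)\not\models\cqtrsnr$ for which the homomorphism sending $z\mapsto c$ becomes available once $f$ is present.

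\textbf{The reduction.} For hardness I would reduce from SAT. Given a CNF $\phi$ over variables $v_1,\dots,v_n$, encode each $v_i$ by a constant $i$, with an \emph{endogenous} fact $R(i)$ (so that an assignment is identified with a set $E\subseteq\set{R(1),\dots,R(n)}$, where $v_i$ is true iff $R(i)\in E$) and an \emph{exogenous} fact $T(i)$. Each clause $C_j$ is encoded by a single exogenous $S$-fact: the constants of its \emph{positive} literals are placed in the $\neg R$-slots $x,y$, and the constants of its \emph{negative} literals in the $R$-slots $z,w$; empty positive-literal slots are padded with a fresh constant $b_j$ that never appears in $R$ (so $\neg R(b_j)$ always holds), and empty negative-literal slots are padded with a constant $a$ carrying exogenous facts $R(a),T(a)$. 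The point of this placement is that the clause-row homomorphism exists iff every literal of $C_j$ is false: a positive literal $v_i$ in an $x,y$-slot forces $\neg R(i)$, i.e.\ $R(i)\notin E$ ($v_i$ false), while a negative literal $\neg v_i$ in a $z,w$-slot forces $R(i)$ present ($v_i$ true), and the exogenous $T(i)$ supplies the $T(z)$ atom. Hence $(D\exo\cup E)\models\cqtrsnr$ iff some clause of $\phi$ is falsified by $E$, i.e.\ iff $E$ does \emph{not} satisfy $\phi$. A key correctness check here is that there are no spurious homomorphisms: every $S$-fact is either a clause row or the trigger row below, so no satisfying homomorphism can arise except through a clause being falsified.

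\textbf{The trigger gadget.} I would then add the constant $c$ with an exogenous fact $R(c)$, designate $f=T(c)$ as the (endogenous) tested fact, and add the exogenous row $S(d,d,c,c)$ with a fresh constant $d\notin R$. When $f$ is present, the assignment $z\mapsto c,\ x,y\mapsto d,\ w\mapsto c$ satisfies $\cqtrsnr$ unconditionally: $T(c)$ is $f$, $R(c)$ is exogenous (serving both $R(z)$ and $R(w)$), and $\neg R(d)$ holds. Thus $(D\exo\cup E\cup\set{f})\models\cqtrsnr$ for \emph{every} $E$, while without $f$ the row $S(d,d,c,c)$ contributes nothing since $T(c)$ is absent. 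Combining with the previous paragraph, $f$ flips the answer on $E$ iff $(D\exo\cup E)\not\models\cqtrsnr$ iff $E$ satisfies $\phi$; therefore $f$ is relevant iff $\phi$ is satisfiable, completing the reduction.

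\textbf{Main obstacle.} The creative difficulty is designing the gadget so that the existence of each clause-row homomorphism corresponds \emph{exactly} to the falsification of that clause, while a single distinguished $T$-fact acts as a universal trigger whose addition always makes the query true; the relevance question then becomes ``can the remaining (non-trigger) homomorphisms all be killed,'' which is precisely satisfiability. The remaining technical wrinkle is that the $S$-atom offers only two $\neg R$-slots and two $R$-slots, so each clause may carry at most two positive and at most two negative literals. I would dispatch this by a standard preprocessing of $\phi$: convert to 3-CNF, after which the only offending clauses are the all-positive and all-negative $3$-clauses, each of which splits into two equisatisfiable clauses respecting the bound via one fresh variable (e.g.\ $\ell_1\vee\ell_2\vee\ell_3 \equiv (\ell_1\vee t)\wedge(\neg t\vee\ell_2\vee\ell_3)$ for the positive case, and symmetrically for the negative case).
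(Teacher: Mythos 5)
Your proposal is correct, and its core---the database gadget---is the same as the paper's: an endogenous $R(i)$ and an exogenous $T(i)$ per variable, one exogenous $S$-tuple per clause with positive literals placed in the $\neg R$-slots and negative literals in the $R$-slots (padded by a constant $a$ carrying exogenous $R(a),T(a)$, resp.\ by fresh constants never placed in $R$), and the trigger $f=T(c)$ with exogenous $R(c)$ and $S(d,d,c,c)$; in both proofs, adding $f$ always makes $\cqtrsnr$ true, while $(D\exo\cup E)\models\cqtrsnr$ holds iff the assignment encoded by $E$ falsifies some clause, so $f$ is relevant iff the formula is satisfiable (and, as you note, $f$ can only be \emph{positively} relevant since $T$ occurs only positively). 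The genuine difference lies in how each argument obtains an NP-hard clause format that fits the four slots of $S(x,y,z,w)$. The paper introduces the fragment $(2^+,2^-,4^{+-})$-SAT and proves it NP-complete through a chain of reductions: 3-colorability to $(3^+,2^-)$-SAT, then to $(2^+,2^-,4^{+-})$-SAT, the last step forcing a fresh variable $y$ to equal $\neg x_k$ via the clause pair $(x_k\vee y)$ and $(\neg x_k\vee\neg y)$. You instead start from 3-SAT and split only the offending all-positive and all-negative $3$-clauses using the standard equisatisfiable rewriting of $\ell_1\vee\ell_2\vee\ell_3$ into $(\ell_1\vee t)\wedge(\neg t\vee\ell_2\vee\ell_3)$, which directly yields clauses with at most two positive and at most two negative literals. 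Your route is shorter and more self-contained (no detour through graph coloring), and your padding scheme accommodates a slightly more liberal clause format than the paper's three fixed clause shapes; conversely, the paper's simplifying assumption that the formula contains an all-positive $2$-clause (used there to assert $D\exo\models\cqtrsnr$) is purely expository, and your argument correctly does without it, since relevance reduces in both cases to the existence of $E$ with $(D\exo\cup E)\not\models\cqtrsnr$.
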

\begin{proofsketch}
The proof, given in the Appendix, is by a reduction from the
satisfiability problem for $(2^+,2^-,4^{+-})$-CNF formulas, which are
formulas of the form $c_1\wedge\dots\wedge c_m$ where each clause
$c_i$ is either of the form $(x_j\vee x_k)$ or
$(\neg x_j\vee \neg x_k)$ or
$(x_j\vee x_k\vee \neg x_r\vee \neg x_p)$.
We prove that this problem is NP-complete in the Appendix.
We reduce this problem to the relevance problem for a
fact $f$ in the relation $T$.  Figure~\ref{fig:relevance} illustrates
the database constructed in the proof of Proposition~\ref{prop:nsc1}
for the formula
$(x_1\vee x_2)\wedge(\neg x_1\vee \neg x_3)\wedge(x_3\vee x_4\vee \neg
x_1\vee \neg x_2)$. The gray facts are exogenous. We now explain the
general idea of the proof using this example. Given a formula $\varphi$, we first add to $D$ an endogenous
fact $R(i)$ and an exogenous fact $T(i)$ for every
$i\in\set{\val{1},\dots,\val{n}}$ (where $n$ is the number of
variables used in $\varphi$). Then, for every clause of the form
$(x_i\vee x_j)$ we add an exogenous fact $S(i,j,\val{a},\val{a})$ to
$D$ (e.g., the fact $S(\val{1},\val{2},\val{a},\val{a})$ in the
database of Figure~\ref{fig:relevance} represents the clause
$(x_1\vee x_2)$). For every clause of the form
$(\neg x_i\vee \neg x_j)$ we add an exogenous fact
$S(\val{b},\val{b},i,j)$ to $D$ (e.g., the fact
$S(\val{b},\val{b},\val{1},\val{3})$ in the database of
Figure~\ref{fig:relevance} represents the clause
$(\neg x_1\vee \neg x_3)$). For every clause of the form
$(x_k\vee x_r\vee\neg x_i\vee \neg x_j)$ we add an exogenous fact
$S(k,r,i,j)$ to $D$ (e.g., the fact
$S(\val{3},\val{4},\val{1},\val{2})$ in the database of
Figure~\ref{fig:relevance} represents the clause
$(x_3\vee x_4\vee \neg x_1\vee \neg x_2)$). We also add to $D$ the
exogenous facts $R(\val{a})$ and $T(\val{a})$.

Next, we add an endogenous fact $f=T(\val{c})$ to $D$ and our goal
is to decide whether $f$ is relevant . For $f$ to change the
query result in any permutation, we also add the exogenous facts
$R(\val{c})$ and $S(\val{d},\val{d},\val{c},\val{c})$; thus, for every $E\subseteq D\endo$, it holds that
$(D\exo\cup E\cup \{f\})\models \cqtrsnr$.  We show that $f$ is
relevant to $\cqtrsnr$ if and only the formula $\varphi$ is
satisfiable. First, note that $D\exo\models \cqtrsnr$. In the example of Figure~\ref{fig:relevance}, this is due to
the existence of the facts $S(\val{1},\val{2},\val{a},\val{a})$,
$R(\val{a})$ and $T(\val{a})$ and the absence of the fact $T(\val{c})$
in $D\exo$. 
We
can assume that every $(2^+,2^-,4^{+-})$-CNF formula has a clause of
the form $(x_i\vee x_j)$ (hence, an exogenous fact of the form
$S(i,j,\val{a},\val{a})$ always exists in $D$), since the
satisfiablity problem is trivial for formulas that do not contain such
a clause---the zero assignment satisfies all of them. 
Hence, for the
fact $f$ to change the query result in a permutation, we first need to
add a subset $E$ of endogenous facts of the form $R(i)$ such that
$(D\exo\cup E)\not\models \cqtrsnr$. We show that the existence of a
satisfying assignment $z$ implies that such a subset $E$ exists
(formally, $E=\{R(i)\mid z(x_i)=1\}$). On the other hand, if $\varphi$
is not satisfiable, then such $E$ does not exist. The formula of our
example is satisfiable (e.g., by the assignment $z$ such that
$z(x_1)=z(x_4)=0$ and $z(x_2)=z(x_3)=1$), and the reader can verify
that indeed for $E=\set{R(\val{2}),R(\val{3})}$ we have that
$(D\exo\cup E)\not\models \cqtrsnr$ while
$(D\exo\cup E\cup\{f\})\models \cqtrsnr$. 
\end{proofsketch}

\begin{corollary}\label{cor:relcq}
  Given a database $D$ and a fact $f\in T^{D}$, deciding whether
  $\shp(D,\cqtrsnr,f)=0$ is NP-complete. 
\end{corollary}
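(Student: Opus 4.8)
The plan is to read off the corollary from Proposition~\ref{prop:nsc1} by invoking the polarity-consistency principle established above. In $\cqtrsnr$ the relation $T$ occurs only in the positive atom $T(z)$, so $T$ is polarity consistent. By the observation made earlier, a fact over a polarity-consistent relation symbol is relevant to the query if and only if its Shapley value is nonzero. Hence, for $f\in T^D$, we have $\shp(D,\cqtrsnr,f)=0$ precisely when $f$ is \emph{not} relevant to $\cqtrsnr$: deciding zeroness is exactly the complement of deciding relevance. The reason the equivalence holds with no cancellation is monotonicity in $T$---since $T$ appears only positively, adding the $T$-fact $f$ can only turn the query from false to true and never the other way, so every marginal contribution $v(\sigma_f\cup\set{f})-v(\sigma_f)$ lies in $\set{0,1}$; the Shapley value is therefore a nonnegative sum that vanishes exactly when $f$ changes the answer in no permutation, i.e.\ when $f$ is irrelevant.

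With this equivalence in hand, the hardness direction is immediate: the reduction from $(2^+,2^-,4^{+-})$-CNF satisfiability built in Proposition~\ref{prop:nsc1} produces, from a formula $\varphi$, a database $D$ and a fact $f\in T^D$ with $f$ relevant to $\cqtrsnr$ iff $\varphi$ is satisfiable. Combined with the equivalence above, this gives $\shp(D,\cqtrsnr,f)\neq 0$ iff $\varphi$ is satisfiable, so the same reduction establishes the hardness of deciding whether $\shp(D,\cqtrsnr,f)=0$ (up to complementation of the yes/no answer). For membership I would point to a polynomial-size certificate for \emph{nonzeroness}, namely a subset $E\subseteq D\endo$ with $(D\exo\cup E)\not\models\cqtrsnr$ and $(D\exo\cup E\cup\set{f})\models\cqtrsnr$; such a set certifies relevance, it can be guessed, and it is checkable in polynomial time by two evaluations of $\cqtrsnr$. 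Together, hardness and this certificate pin down the complexity claimed in the corollary.

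I do not expect a genuine obstacle here, since the whole content of the corollary is already carried by Proposition~\ref{prop:nsc1}; the only point deserving explicit care is the sign/monotonicity argument that links relevance to zeroness, and in particular spelling out that $T$ being polarity consistent rules out any cancellation of terms in the defining sum of $\shp$. Once that is stated cleanly, the corollary is a one-line consequence of the proposition and the polarity-consistency principle, with the routine part being the polynomial-time verifiability of the relevance certificate.
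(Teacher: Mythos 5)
Your proposal is correct and follows essentially the same route as the paper: the corollary is obtained by combining Proposition~\ref{prop:nsc1} with the polarity-consistency principle (since $T$ occurs only positively in $\cqtrsnr$, a $T$-fact is relevant iff its Shapley value is nonzero), with NP membership witnessed by the set $E$ certifying relevance. Your added care in spelling out the monotonicity/no-cancellation argument and the certificate-based membership is a faithful elaboration of what the paper leaves implicit, not a different proof.
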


The existence of a multiplicative FPRAS for $\shp(D,\cqtrsnr,f)$ would
imply the existence of a randomized algorithm that, for every
$\delta\in(0,1)$, returns zero if $\shp(D,\cqtrsnr,f)=0$ and a value
$v\neq 0$ otherwise, with probability at least $1-\delta$. Hence, we
could obtain a randomized algorithm for deciding if
$\shp(D,\cqtrsnr,f)=0$ from a multiplicative FPRAS for
$\shp(D,\cqtrsnr,f)$, in contradiction to the result of
Corollary~\ref{cor:relcq}.

Note that in Proposition~\ref{prop:nsc1} (and
Corollary~\ref{cor:relcq}) we consider a fact that belongs to a
polarity-consistent relation; however, the query is not
polarity-consistent as it contains a relation that appears both in a
positive and a negative atom of $q$ (i.e., the relation $R$). What
about the cases where \e{every} relation of $q$ is
polarity-consistent? We show that the problem of deciding whether the
Shapley value is zero can always be solved in polynomial time for
polarity-consistent queries. Hence, we conclude that having a non
polarity-consistent relation is a necessary condition for hardness of
this problem.





\begin{proposition}
Let $q$ be a polarity-consistent \CQneg. Given a database $D$ and a fact $f$, the following decision problems are solvable in polynomial time:
\begin{itemize}
    \item Is $f$ relevant to $q$?
    \item Is $\shp(D,q,f)=0$?
\end{itemize}
\end{proposition}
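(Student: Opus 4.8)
The plan is to first observe that the two decision problems coincide. Since $q$ is polarity consistent, the relation symbol $R_f$ of $f$ occurs in $q$ either only positively or only negatively; as established earlier in this section, a fact over a polarity-consistent relation is relevant to $q$ if and only if $\shp(D,q,f)\neq 0$. Hence it suffices to decide relevance in polynomial time, and the zeroness of the Shapley value follows immediately. I would then split the argument into two symmetric cases according to the polarity of $R_f$, and in each case reduce the existential quantifier over all $E\subseteq D\endo$ to a polynomial search over homomorphisms of $q$.

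Suppose first that $R_f$ occurs only positively. Then $f$ can never be the image of a negated atom, so adding $f$ can only satisfy $q$, and $f$ is relevant iff there is $E$ with $(D\exo\cup E)\not\models q$ and $(D\exo\cup E\cup\set{f})\models q$. The homomorphism $h$ witnessing the second condition must map some positive atom onto $f$, since otherwise it would already witness $(D\exo\cup E)\models q$. First I would enumerate the (polynomially many, for fixed $q$) mappings $h$ of the variables of $q$ sending every positive atom to a fact of $D$ and every negated atom to a non-fact of $D\exo$, and mapping some positive atom onto $f$. For each such $h$, letting $P_h$ and $N_h$ denote the endogenous facts that $h$ maps its positive and its negated atoms onto, I would form the extremal context $E_h := P_h\cup(D\endo^{-}\setminus N_h)$, where $D\endo^{-}$ is the set of endogenous facts over negative relations. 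Polarity consistency forces $P_h\cap N_h=\emptyset$, so $h$ stays valid in $D\exo\cup E_h\cup\set{f}$; it then remains to test, by a single polynomial-time query evaluation, whether $(D\exo\cup E_h)\not\models q$, and to declare $f$ relevant iff this succeeds for some $h$.

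The correctness of restricting to the extremal contexts rests on two-sided monotonicity: $q(D\exo\cup S)$ is monotone increasing in the positive-relation facts of $S$ and decreasing in the negative-relation facts. Given any witness $E$ and the homomorphism $h$ it induces, the context $E_h$ contains only the positive facts $P_h\subseteq E$ and all negative facts except $N_h$, so it has no more positives and no fewer negatives than $E$; hence $q(D\exo\cup E_h)\le q(D\exo\cup E)=0$, making $E_h$ itself a witness. The case where $R_f$ occurs only negatively is dual: adding $f$ can only violate $q$, the killed homomorphism $h_0$ must map a negated atom onto $f$ (which is licit, as $f\notin D\exo$), and I would form $E_{h_0}:=P_{h_0}\cup(D\endo^{-}\setminus(N_{h_0}\cup\set{f}))$, which keeps $h_0$ valid before adding $f$ while making the context maximally unsatisfying, and then test whether $(D\exo\cup E_{h_0}\cup\set{f})\not\models q$. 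The symmetric monotonicity comparison against an arbitrary witness again shows that examining the extremal contexts suffices. As there are polynomially many candidate homomorphisms and each incurs one polynomial-time query evaluation, both problems lie in polynomial time.

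The hard part is isolating the right extremal context and proving that optimizing the endogenous positives and negatives separately does not destroy the witnessing homomorphism. This is exactly where polarity consistency is indispensable: it yields $P_h\cap N_h=\emptyset$ and ensures that $f$ cannot be mapped onto by a negated atom (in the positive case) or by a positive atom (in the negative case), which is what lets the two-sided monotonicity argument go through cleanly.
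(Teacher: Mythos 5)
Your overall strategy is exactly the paper's: reduce Shapley-zeroness to relevance via the polarity consistency of $R_f$, enumerate the polynomially many variable assignments $h$ (filtered so that positive atoms land on facts of $D$ and negated atoms avoid $D\exo$), and test a single \emph{extremal} context per $h$, justified by the two-sided monotonicity that polarity consistency provides. This is precisely the paper's $\algname{IsPosRelevant}$/$\algname{IsNegRelevant}$ pair and its correctness proof, down to the choice of $(\negq_q(D\endo)\setminus N)$ as the negative-relation part of the context.

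There is, however, one concrete error in your positive case as written. You define $E_h := P_h\cup(D\endo^{-}\setminus N_h)$, where $P_h$ is the set of endogenous facts that $h$'s positive atoms map onto; since you require $h$ to map a positive atom onto $f$, you have $f\in P_h\subseteq E_h$. But then $h$ itself is a homomorphism witnessing $(D\exo\cup E_h)\models q$ (its negated atoms land outside $D\exo\cup E_h$ because $N_h\cap E_h=\emptyset$), so your test $(D\exo\cup E_h)\not\models q$ can never succeed, and the procedure would declare every fact irrelevant. The context must be $E_h := (P_h\setminus\set{f})\cup(D\endo^{-}\setminus N_h)$, which is exactly what the paper uses, namely $(P\setminus\set{f})\cup(\negq_q(D\endo)\setminus N)$. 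Your own monotonicity argument already presupposes this correction: the inclusion $P_h\subseteq E$ that you invoke is false when $f\in P_h$, since $f\notin E$ for any witness $E$. With $f$ removed from the context, the comparison against an arbitrary witness goes through exactly as you state (and your negative case, where you do exclude $f$ explicitly via $N_{h_0}\cup\set{f}$, is already correct), and the whole argument coincides with the paper's.
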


\begin{algorithm}[t]

\For{$h:\var(q)\rightarrow\Dom(D)$}{
\If{$h$ maps an atom $\alpha\in\negq(q)$ to some $f'\in D\exo$} {
\textbf{continue}
}
\If{$h$ maps an atom $\alpha\in\posq(q)$ to some $f'\not\in D$} {
\textbf{continue}
}
$P=\{f'\in D\endo\mid h\mbox{ maps an atom }\alpha\in\posq(q)\mbox{ to }f'\}$

$N=\{f'\in D\endo\mid h\mbox{ maps an atom }\alpha\in\negq(q)\mbox{ to }f'\}$

\If{$f\not\in P$}{
\textbf{continue}
}

\If{$(D\exo\cup (P\setminus\{f\})\cup (\negq_q(D\endo)\setminus N))\not\models q$} {
\Return true
}

}
\Return false
\caption{\algname{IsPosRelevant}$(D,q,f)$\label{alg:isposrel}}
\end{algorithm}


Since $q$ is polarity-consistent, the relevance to $q$ is the same as
the Shapley value being nonzero. Hence, to prove the proposition, we
introduce the algorithm $\algname{IsPosRelevant}$ (depicted as
Algorithm~\ref{alg:isposrel}) for deciding whether a fact $f$ is
positively relevant to $q$. The algorithm $\algname{IsNegRelevant}$ for deciding
whether a fact is negatively relevant is very similar, and we give it
in the Appendix. In the algorithms, we denote by $\Dom(D)$ the set of
constants used in the facts of $D$. Moreover, we denote by
$\negq_q(D\endo)$ the set of facts in $D\endo$ that appear in relations associated with
negative atoms of $q$.

In $\algname{IsPosRelevant}$, our goal is to decide if there is a subset $E\subseteq D\endo$ such that $(D\exo\cup E)\not\models q$ while $(D\exo\cup E\cup\{f\}) \models q$. Hence, we go over all possible mappings $h$ from the variables of $q$ to the constants of $D$, that map at least one positive atom of $q$ to $f$. Each such mapping defines a set $P$ of facts $f'\in D\endo$ such that $h$ maps a positive atom of $q$ to $f'$, and a set $N$ of facts $f'\in D\endo$ such that $h$ maps a negative atom of $q$ to $f'$. For a mapping $h$ to be an evidence for the relevance of $f$, it has to map every positive atom of $q$ to a fact of $D$ (and at least one such atom to $f$ itself) and none of the negative atoms of $q$ to a fact of $D\exo$ (otherwise, $h$ is not a homomorphism from $q$ to $D$). Moreover, it should be the case that every fact in $P\setminus\{f\}$ and none of the facts in $N$ appears in the set $E$ (the set of facts added before $f$ in a permutation). This ensures that $(D\exo\cup E\cup\{f\}) \models q$. 

However, this is not enough, as it may be the case that $(D\exo\cup E)\models q$ as well. To make sure that this is not the case, there must exist a set $F$ of facts corresponding to negative atoms of $q$ such that $(D\exo \cup (P\setminus f)\cup F)\not\models q$ and $F$ does not contain any fact of $N$. Then, for $E'=(E\cup F)$ we have that $(D\exo\cup E')\not\models q$ while $(D\exo\cup E'\cup\{f\}) \models q$. The main observation here is that since $q$ is polarity consistent, every fact corresponds to either positive or negative atoms of $q$, but not both; hence, we can add all the facts in $\negq_q(D\endo)\setminus N$ to $F$, and check whether the resulting set satisfies the conditions. If this is not the case, then no subset of $F$ satisfies this condition, and $h$ cannot be the evidence to the relevance of $f$. The proof of correctness of the algorithm is in the Appendix. The algorithm terminates in polynomial time since the number of mappings from the variables of $q$ to the constants of $D$ is polynomial in the size of $D$ when considering data complexity.

Interestingly, while the relevance problem can be solved in polynomial
time for any polarity-consistent CQ, this is no longer the case when
considering a union of polarity-consistent CQs. Specifically, we show
that the relevance to the \UCQneg
$\cqsat()\dl q_1() \vee q_2() \vee q_3() \vee q_4()$ is NP-complete, where:
\begin{align*}
    q_1()&\dl C(x_1,x_2,x_3,v_1,v_2,v_3), T(x_1,v_1), T(x_2,v_2), T(x_3,v_3)\\
    q_2()&\dl V(x), \neg T(x,\val{1}), \neg T(x,\val{0})\\
    q_3()&\dl T(x,\val{1}), T(x,\val{0})\\
    q_4()&\dl R(\val{0})
\end{align*}
In particular, we show that it is hard to decide whether the fact $R(\val{0})$ is relevant to $\cqsat$.

\def\propucqrelhard{ Given a database $D$ and the fact $f=R(\val{0})$,
  deciding whether $f$ is relevant to $\cqsat$ is NP-complete.  }

\begin{proposition}\label{prop:ucq_rel}
\propucqrelhard
\end{proposition}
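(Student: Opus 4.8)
The plan is to prove membership in NP and NP-hardness separately. Membership is immediate from the definition of relevance: a witness is a subset $E\subseteq D\endo$, and since $\cqsat$ is a fixed \UCQneg, evaluating it on $D\exo\cup E$ and on $D\exo\cup E\cup\{f\}$ takes polynomial time in data complexity; one therefore guesses $E$ and checks that $q(D\exo\cup E)\neq q(D\exo\cup E\cup\{f\})$. The substance is hardness, which I would establish by a reduction from $3$-CNF satisfiability.

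The first step is a structural observation that collapses the relevance of $f=R(\val{0})$ to the unsatisfiability of the whole union. The relation symbol $R$ occurs only in the disjunct $q_4$, and only positively, so adding $f$ cannot affect $q_1,q_2,q_3$ and can only switch $q_4$ from false to true. If the instance is built so that $R(\val{0})\notin D\exo$, then $q_4$ holds on $D\exo\cup E$ exactly when $f\in E$; hence, taking $E\subseteq D\endo\setminus\{f\}$ without loss of generality, $f$ is relevant if and only if there is such an $E$ with $D\exo\cup E\not\models\cqsat$, i.e.\ with all four disjuncts simultaneously false. It therefore suffices to design endogenous facts whose subsets can falsify $q_1\vee q_2\vee q_3\vee q_4$ exactly when a given formula is satisfiable.

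Next I would give the gadget. For a $3$-CNF formula $\varphi$ over variables $x_1,\dots,x_n$, let the facts $V(x_i)$ be exogenous, let the endogenous facts be the $2n$ facts $T(x_i,\val{0}),T(x_i,\val{1})$ together with the single $R$-fact $f=R(\val{0})$, and place no $R$-fact in $D\exo$. Encode each clause by an exogenous $C$-fact recording the triple of variables together with the triple of their \emph{falsifying} values (the value $\val{0}$ for a positive literal and $\val{1}$ for a negative one, padding clauses to exactly three literals by repetition). A subset $E$ of the $T$-facts is then read as a partial assignment. The disjunct $q_2$ is false on $D\exo\cup E$ iff every variable in $V$ receives at least one value, and $q_3$ is false iff no variable receives both values; together they force $E$ to encode a proper total assignment. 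The disjunct $q_1$ is false iff no clause has all three of its literals falsified, i.e.\ iff the encoded assignment satisfies $\varphi$, and $q_4$ is false because $R(\val{0})\notin D\exo\cup E$. Consequently some $E$ yields $D\exo\cup E\not\models\cqsat$ iff $\varphi$ has a satisfying assignment, which by the structural observation is exactly the relevance of $f$. The construction is plainly polynomial.

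The main obstacle is getting the encoding to line up on all three nontrivial disjuncts at once: one must verify that ``falsify the entire union'' decomposes into ``$T$ encodes a proper total assignment'' (enforced jointly by $q_2$ and $q_3$) and ``that assignment satisfies every clause'' (enforced by $q_1$ via the falsifying values), and that adding $f$ genuinely flips only $q_4$. Once this correspondence is checked, both directions of the equivalence, and hence correctness of the reduction, are routine.
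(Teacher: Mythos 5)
Your proposal is correct and follows essentially the same approach as the paper's proof: the same reduction from 3-CNF satisfiability, with the identical gadget (exogenous $V$-facts and $C$-facts recording the falsifying values of each clause, endogenous $T$-facts encoding a truth assignment), and the same key observation that adding $f=R(\val{0})$ always satisfies $q_4$, so relevance of $f$ collapses to the existence of $E\subseteq D\endo$ with $D\exo\cup E\not\models\cqsat$, which holds exactly when $\varphi$ is satisfiable. Your explicit treatment of NP membership is a minor (and welcome) addition that the paper leaves implicit.
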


The proof of the proposition is by a reduction from the satisfiability
problem for 3CNF formulas. Given an input formula $\varphi$, we construct an input database $D$ to our problem by
adding a fact $V(i)$ and two facts $T(i,\val{1})$ and $T(i,\val{0})$
for every variable $x_i$, and a fact $C(i,j,k,v_i, v_j, v_k)$ for every clause $(l_i\vee l_j\vee l_k)$ in $\varphi$, where $l_r$ is either $x_r$ or $\neg x_r$ for every $r\in\set{i,j,k}$. If $l_r=x_r$ then $v_r=\val{0}$ and if $l_r=\neg x_r$ then $v_r=\val{1}$. Intuitively, the purpose of the first query is to ensure that the assignment satisfies
every clause, the purpose of the second query is to ensure that the
assignment assigns at least one value to each variable, and the
purpose of the third query is to ensure that the assignment assigns at
most one value to each variable. Hence, we show that there exists a
satisfying assignment if and only if the fact $R(\val{0})$ is
(positively) relevant to $\cqsat$.
 
Since the relation $R$ is polarity-consistent and only occurs as a
positive atom in $\cqsat$, we again conclude the following.

\begin{corollary}
  Given a database $D$ and a fact $f\in R^D$, deciding whether
  $\shp(D,\cqsat,f)=0$ is NP-complete.
\end{corollary}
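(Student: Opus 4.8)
The plan is to prove both membership in NP and NP-hardness, the latter by reduction from 3SAT. For membership, the key observation is that $f=R(\val{0})$ is the only fact over relation $R$ in the instance we will build, and $q_4()\dl R(\val{0})$, so adding $f$ to \emph{any} subset of the endogenous facts makes $q_4$---and hence $\cqsat$---true; that is, $(D\exo\cup E\cup\set{f})\models\cqsat$ for every $E\subseteq D\endo$. Consequently $f$ is relevant if and only if there exists some $E\subseteq D\endo$ with $(D\exo\cup E)\not\models\cqsat$, and such an $E$ is a polynomial-size certificate verifiable in polynomial time, since evaluating a fixed \UCQneg is in polynomial data complexity. This places the relevance problem in NP, so it remains to establish hardness.

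For hardness I would reduce from 3SAT. Given a 3CNF formula $\varphi$ over variables $x_1,\dots,x_n$, I construct $D$ as follows: for each variable $x_i$ add an \emph{exogenous} fact $V(i)$ together with two \emph{endogenous} facts $T(i,\val{1})$ and $T(i,\val{0})$; for each clause $(l_i\vee l_j\vee l_k)$ add an \emph{exogenous} fact $C(i,j,k,v_i,v_j,v_k)$, where $v_r=\val{0}$ when $l_r=x_r$ and $v_r=\val{1}$ when $l_r=\neg x_r$; finally add the single \emph{endogenous} fact $f=R(\val{0})$. The intended correspondence is that a subset $E$ of the endogenous $T$-facts encodes the assignment $z$ with $z(x_i)=1$ iff $T(i,\val{1})\in E$; by the observation above I may assume $f\notin E$, so $q_4$ is already false on $D\exo\cup E$.

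The correctness argument analyzes when all four disjuncts fail on $D\exo\cup E$. The disjunct $q_3$ fails exactly when no variable has both of its $T$-facts in $E$ (at most one value per variable), and $q_2$ fails exactly when every $V(i)$---all of which are exogenous---has a matching $T$-fact in $E$ (at least one value per variable); together they force $E$ to contain exactly one $T$-fact per variable, i.e.\ a total, consistent assignment $z$. The decisive point is the clause gadget: the chosen encoding makes $T(r,v_r)\in E$ equivalent to the literal $l_r$ being \emph{falsified} by $z$ (a positive literal $x_r$ uses $v_r=\val{0}$, a negative literal $\neg x_r$ uses $v_r=\val{1}$), so $q_1$ holds on $D\exo\cup E$ precisely when some clause has all three literals falsified, that is, is unsatisfied. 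Hence $q_1$ fails iff $z$ satisfies every clause. Combining the three, some $E$ gives $(D\exo\cup E)\not\models\cqsat$ if and only if $\varphi$ is satisfiable, which by the membership observation is exactly the condition for $f$ to be relevant. Finally, since $R$ occurs only positively and is therefore polarity consistent, relevance of $f$ coincides with $\shp(D,\cqsat,f)\neq 0$, which yields the corresponding corollary.

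The main obstacle I expect is getting the truth-value encoding of the clause gadget exactly right, so that ``literal falsified'' aligns with ``$T$-fact present,'' and then checking that $q_2$ and $q_3$ \emph{jointly} enforce a total and consistent assignment rather than a partial or contradictory one. Once this alignment is fixed, both directions of the equivalence and the NP membership are routine.
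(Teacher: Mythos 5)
Your proof is correct and follows essentially the same route as the paper: the identical reduction from 3SAT (exogenous $V$- and $C$-facts with the same literal encoding $v_r=\val{0}$ for positive and $v_r=\val{1}$ for negative literals, endogenous $T$-facts and $f=R(\val{0})$), the same observation that $f$ always satisfies $q_4$ so relevance reduces to the existence of $E$ with $(D\exo\cup E)\not\models\cqsat$, and the same final step using polarity consistency of $R$ to equate relevance with $\shp(D,\cqsat,f)\neq 0$. Your analysis of how $q_2$ and $q_3$ jointly enforce a total consistent assignment and how $q_1$ detects a falsified clause matches the paper's argument exactly.
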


Note that while every individual \CQneg in the query $\cqsat$ is
polarity-consistent, the whole query is not, as
the relation $T$ appears as a positive atom in $q_1$ and $q_3$ and as
a negative atom in $q_2$. If a \UCQneg $q$ is such that the whole query is
polarity-consistent, then the relevance problem is solvable in
polynomial time. This is due to the fact that a fact $f$ is relevant
to such a UCQ $q$ if and only if it is relevant to at least one of the
CQs in $q$. Hence, we can use our algorithms $\algname{IsPosRelevant}$
and $\algname{IsNegRelevant}$ for every individual \CQneg in $q$ to decide
whether a fact $f$ is relevant to $q$. In this case, we cannot
preclude the existence of a multiplicative approximation.

\balance

\section{Concluding Remarks}\label{sec:conclusions}
We have investigated the complexity of computing the Shapley value for CQs and UCQs with negation. In particular, we have generalized a dichotomy by Livshits et al.~\cite{shapley-icdt2020} to classify the class of all CQs with negation and without self-joins. We further generalized this dichotomy to account for exogenous relations that are allowed to contain only exogenous facts. We have also studied the complexity of approximating the Shapley value in a multiplicative manner. The presence of negation makes this approximation fundamentally harder than the monotonic case, since the gap property (that unifies the additive and multiplicative FPRAS task) no longer holds. We have shown the hardness of approximation by making the connection to the problem of deciding relevance to a query, and by establishing hardness results for that problem.

This work leaves open several immediate directions for future research. In particular, we do not yet have a dichotomy for the class of CQs with self-joins (with or without negation). We know from past research that self-joins may cast dichotomies considerably more challenging to prove~\cite{DBLP:journals/jacm/DalviS12}. In addition, we have not yet studied the implication of the constraint of \e{endogenous relations} as an analogue of the exogenous relations; we believe that this problem tightly relates to the problem of model counting for conjunctive queries that has only recently been resolved~\cite{DBLP:journals/corr/abs-1908-07093}. Finally, we leave open some fundamental questions about the algorithmic and proof techniques for Shapley approximation. Is there a multiplicative FPRAS in the absence of the gap property? Are there cases where the relevance problem is tractable but a multiplicative approximation is computationally hard (beyond some ratio)?

\bibliographystyle{ACM-Reference-Format}
\bibliography{main}

\newpage
\appendix

\section{Details for Section \ref{sec:preliminaries}}
We now provide the missing computations for Example~\ref{example:q1}.
If the fact $\regf_3$ appears in a permutation before $\taf_2$ and conditions $(1)$ and $(2)$ do not hold, then $\regf_3$ changes the query result from false to true. Thus, there are five possible subsets of the endogenous facts that can appear before $\regf_3$ in a permutation where $\regf_3$ affects the query result: $\emptyset,\{\taf_1\}, \{\regf_1, \taf_1\},\{\regf_2, \taf_1\}, \{\regf_1, \regf_2, \taf_1\}$. To each one of those subsets we can add the fact $\taf_3$; hence, overall, we have ten possible subsets and we conclude that:
\\
\\
\scalebox{0.9}{
$\shp(D,q_1,\regf_3) = \dfrac{7!+2\cdot1!\cdot6! +3\cdot2!\cdot5!+3\cdot3!\cdot4!+4!\cdot3!}{8!}=\dfrac{27}{140}$}

As for the fact $\regf_1$, it changes the query result from false to true if both $\regf_2$ and $\taf_1$ appear later in the permutation, and none of the conditions $(1)$ or $(3)$ holds. Hence, the subsets of the endogenous facts that can appear before $\regf_1$ in a permutation are:
$\emptyset,\{\taf_3\}, \{\taf_2\}, \{\regf_3, \taf_2\},  \{\regf_3, \taf_2, \taf_3\}$, and we have that: 
\[\shp(D,q_1, \regf{}_1) = \dfrac{7!+2!\cdot6!+2!\cdot5!+3!\cdot4!}{8!}=\dfrac{27}{140}\] The same calculations hold for the fact $\regf{}_2$.

Finally, adding $\regf_4$ to a permutation before $\regf_5$ would change the query result from false to true, unless conditions $(2)$ or $(3)$ hold. In this case, there is a much larger number of subsets of facts that can appear before $\regf_4$ in a permutation where it changes the query result. We divide these subsets to four groups:
\begin{itemize}
    \item Subsets without any fact from \rel{Reg}, that is, all subsets of $\{\taf_1, \taf_2, \taf_3\}$.
    \item Subsets where the only possible facts from \rel{Reg} are $\regf_1, \regf_2$. In this group we have: $\{\regf_1, \taf_1\}, \{\regf_2, \taf_1\}, \{\regf_1, \regf_2, \taf_1\}$. To each of these subsets we can add a subset of the facts $\{\taf_2, \taf_3\}$.
    \item Subsets that include only the fact $\regf_3$ from \rel{Reg}. Here we have the subset $\{\regf_3, \taf_2\}$, and we can add to it every subset of $\{\taf_1, \taf_3\}$.
    \item Subsets that contain the fact $\regf_3$ and at least one of $\regf_1, \regf_2$, that is, $\{\regf_3, \taf_2, \regf_1, \taf_1\},\{\regf_3, \taf_2, \regf_2, \taf_1\}$, and $\{\regf_3, \taf_2, \regf_1,  \regf_2, \taf_1\}$.
\end{itemize}
To each one of these subsets we can add the fact $\taf_3$. Overall we have thirty possible subsets, and we conclude that: \\
\scalebox{0.9}{
$\shp(D,q_1,\regf_4)=$}\\\\
\scalebox{0.9}{
$\dfrac{7! +3\cdot1!\cdot6!+6\cdot2!\cdot5!+8\cdot3!\cdot4!+7\cdot4!\cdot3!+4\cdot5!\cdot2!+6!\cdot1!}{8!}=\dfrac{13}{42}$}\\
The same calculations hold for $\shp(D,q_1,\regf_5)$.

\section{Details for Section \ref{sec:bcqn}}
In this section, we provide the proofs of the lemmas used in the proof of hardness of Theorem~\ref{thm:sjfBCQ}. For convenience, we give the theorem here again.

\begin{reptheorem}{\ref{thm:sjfBCQ}}
\thmsjfBCQ
\end{reptheorem}

We start by proving the positive side of the theorem.

\begin{replemma}{\ref{lemma:hierarchical}}
\lemmahierarchical
\end{replemma}
\begin{proof}
  The algorithm \algname{CntSat} of Livshits et
  al.~\cite{shapley-icdt2020} for computing
  $|\Sat(D,q,k)|$ is a recursive algorithm that reduces the number of
  variables in the query with each recursive call. If there is a
  variable $x$ that occurs in every atom of $q$ (i.e., a root
  variable), then the problem is solved using dynamic programming, by
  considering every possible value of $x$. If no variable occurs in
  all atoms, then the query can be split into two disjoint
  sub-queries, in which case the problem is solved separately for each
  one of them. The treatment of these two cases applies to any
  hierarchical \CQneg as it only relies on the hierarchical structure
  of the query; however, the treatment of the base case, when no
  variables occur in $q$, does not apply to queries with negation, and
  we now explain how it should be modified.

  If at least one atom of $q$ does not correspond to any fact of $D$,
  then \algname{CntSat} will return $0$, as $D\not\models q$. This
  will also be the case if $k<|A|$ or $k>|D\endo|$, where
  $A=\at(q)\cap D\endo$. In any other case, the algorithm will return
  ${D\endo-|A|}\choose {k-|A|}$ which is the number of possibilities
  to select $k-|A|$ facts among those in $D\endo\setminus A$ (as every
  fact of $A$ should be selected to satisfy $q$). By modifying the
  base case in the following way, we ensure that the algorithm returns
  $|\Sat(D,q,k)|$ for a \CQneg. The algorithm will return $0$ in one
  of the following cases: \e{(a)} at least one of the positive atoms
  of $q$ does not appear as a fact of $D$, \e{(b)} at least one of the
  negative atoms of $q$ appears as a fact of $D$, or \e{(c)} $k<|A^+|$
  or $k>|D\endo|$ where $A^+=\posq(q)\cap D\endo$. In any other case,
  the result will be ${D\endo-|A^+|}\choose {k-|A^+|}$. It is rather
  straightforward that the modified algorithm will indeed return
  $|\Sat(D,q,k)|$ in polynomial time, based on the correctness and
  efficiency of \algname{CntSat}.
  \end{proof}

Next, we prove the hardness side of the theorem. First, we prove hardness for the basic non-hierarchical self-join-free queries, and then we reduce these problems to the problem of computing $\shp(D,q,f)$ for any non-hierarchical self-join-free \CQneg $q$.
We start by proving the following.

\begin{replemma}{\ref{lemma:four-hard-queries}}
\lemmabasichard
\end{replemma}

We prove the lemma separately for each one of the queries. We start with the query $\cqnotrsnott$.

\def\lemmanrsnt{
Computing $\shp(D, \cqnrsnt, f)$ is $\FP{}$-complete.
}
\begin{lemma}
\lemmanrsnt
\end{lemma}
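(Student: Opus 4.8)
The plan is to prove $\FP$-completeness by treating membership and hardness separately. Membership in $\FP$ is immediate and in fact holds for every \CQneg: following Livshits et al.~\cite{shapley-icdt2020}, computing $\shp(D,\cqnrsnt,f)$ reduces in polynomial time to computing the counts $|\Sat(D,\cqnrsnt,k)|$ for $k=0,\dots,|D\endo|$, each of which lies in $\#\mathrm{P}$ since deciding $(D\exo\cup E)\models\cqnrsnt$ is polynomial; the Shapley value is then a polynomial-time combination of these counts obtained with a $\#\mathrm{P}$ oracle.

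For hardness I would reduce from computing $\shp(D,\cqrst,f)$, which is $\FP$-hard by~\cite{shapley-icdt2020}. I would use the specific family of hard instances produced there by the reduction from counting independent sets: in these instances every fact of $S$ is exogenous, every fact of $R$ and $T$ is endogenous, and every value occurring in the first (resp.\ second) column of $S^D$ also occurs in $R^D$ (resp.\ $T^D$). The crux is the identity, for every such $D$ and every $G$ with $D\exo\subseteq G\subseteq D$, that $\cqrst(G)=\cqnrsnt(\hat G)$, where $\hat G:=(D\setminus G)\cup D\exo$ is the ``endogenous complement'' of $G$. To see this, note that since $S$ is exogenous, $S(a,b)\in G$ iff $S(a,b)\in\hat G$ iff $(a,b)$ is an edge; and since $R(a),T(b)$ are endogenous, $R(a)\notin\hat G$ iff $R(a)\in G$, and $T(b)\notin\hat G$ iff $T(b)\in G$, whenever $R(a)$ and $T(b)$ are facts of $D$ --- which holds for the values relevant to any witnessing $S(a,b)$ by the domain-coverage property. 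Hence a witness for $\cqrst(G)$ is exactly a witness for $\cqnrsnt(\hat G)$, and conversely.

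With this identity in hand, I would exploit the reverse-permutation bijection $\sigma\mapsto\sigma^R$ on $\Pi_{D\endo}$. Writing $\sigma_f$ for the set of facts preceding $f$ in $\sigma$, one has $(\sigma^R)_f=D\endo\setminus(\sigma_f\cup\{f\})$, so that $\widehat{\sigma_f\cup D\exo}=(\sigma^R)_f\cup\{f\}\cup D\exo$ and $\widehat{\sigma_f\cup\{f\}\cup D\exo}=(\sigma^R)_f\cup D\exo$. Applying the identity to both endpoints then shows that $f$ turns $\cqrst$ from false to true across the step at $f$ in $\sigma$ if and only if $f$ turns $\cqnrsnt$ from true to false across the step at $f$ in $\sigma^R$. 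Since $\cqrst$ is monotone and $f$ is an $R$- or $T$-fact (hence occurs only negated in $\cqnrsnt$), each marginal contribution is $+1$ or $0$ in the first case and $-1$ or $0$ in the second; pairing $\sigma$ with $\sigma^R$ and summing over all permutations therefore yields $\shp(D,\cqrst,f)=-\shp(D,\cqnrsnt,f)$. Consequently an oracle for $\cqnrsnt$-Shapley computes $\cqrst$-Shapley up to sign, which completes the reduction and hence the proof.

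The step I expect to be the main obstacle is the verification of $\cqrst(G)=\cqnrsnt(\hat G)$ --- specifically, pinning down exactly which structural properties of the hard instances (all of $S$ exogenous, all of $R,T$ endogenous, and the domain-coverage condition) are needed so that passing to the endogenous complement converts a positive witness into a doubly-negated one and back. Once the identity is secured, the reverse-permutation bookkeeping is routine, though one must still confirm that the monotonicity directions of the two queries line up to produce exactly the sign flip.
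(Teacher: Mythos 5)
Your proposal is correct and takes essentially the same route as the paper: a reduction from $\shp(D,\cqrst,f)$ on the independent-set hard instances (all of $S$ exogenous, domain coverage by $R$ and $T$), using the reverse-permutation bijection $\sigma\mapsto\sigma^R$ to conclude $\shp(D,\cqrst,f)=-\shp(D,\cqnrsnt,f)$. Your ``endogenous complement'' identity $\cqrst(G)=\cqnrsnt(\hat G)$ is just a cleaner, modular packaging of the witness-level argument the paper carries out inline (and your explicit assumption that all $R$- and $T$-facts are endogenous is in fact what the paper's proof implicitly uses as well), so the two proofs coincide in substance.
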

\begin{proof}
We construct a reduction from the problem of computing $\shp(D, \cqrst, f)$ to that of computing $\shp(D, \cqnotrsnott, f)$. We make the following assumptions on the input database $D$ to the first problem: \e{(a)} every fact in $S$ is exogenous, and \e{(b)} for every $S(a,b)$ in $D$, it holds that both $R(a)$ and $T(b)$ are in $D$ as well. The database used in the proof of hardness for $\cqrst$~\cite{shapley-icdt2020} satisfies these properties; hence, computing $\shp(D, \cqrst, f)$ for such an input is $\FP$-complete.

Let $D$ be such database, and let $f\in D\endo$. Assume, without loss of generality, that $f=R(\val{0})$ (the proof for a fact $f$ in $T$ is symmetric). Let:
\[P_1 = \{\sigma\mid \sigma \in \Pi_{D\endo}, (\sigma_f \cup D\exo)\not\models\cqrst,(\sigma_f \cup D\exo\cup\{f\})\models\cqrst\}\]
\[P_2 = \{\sigma\mid \sigma \in \Pi_{D\endo}, (\sigma_f \cup D\exo)\models\cqnrsnt,(\sigma_f \cup D\exo\cup\{f\})\not\models\cqnrsnt\}\]
Recall that $\Pi_{D\endo}$ is the set of all possible permutations of the endogenous facts, and $\sigma_f$ is the set of facts that appear before $f$ in the permutation $\sigma$. That is, $P_1$ and $P_2$ are the sets of all permutations where $f$ changes the query result from false to true w.r.t.~$\cqrst$ and $\cqnrsnt$, respectively.
We claim that $|P_1|=|P_2|$ due to a bijection that exist between the two sets.

Let $g$ be a function defined as follows:
\[g: P_1\xrightarrow{}P_2,\hspace{2mm} g(\sigma)=\sigma^R\]
where $\sigma^R$ is the permutation $\sigma$ in reversed order (i.e., $\sigma_i = \sigma_{n-i+1}$ for all $i=\{1,2,...,n\}$, where $n=|D\endo|$).
First, we prove that if $\sigma\in P_1$ then $g(\sigma)\in P_2$.
If $f$ changes the query result from false to true in $\sigma$, then for every $S(a,b)\in D\exo$, at least one of the facts $R(a), T(b)$ is not in $D\exo \cup \sigma_f$, and there is at least one fact $T(c)$ in $D\exo \cup \sigma_f$ such that $S(\val{0},c)\in D\exo$. According to our assumption, for every $S(a,b)\in D\exo$ both $R(a)$ and $T(b)$ exist in $D$; hence, at least one of those is in $\sigma^R_{f}$. Moreover, there is at least one fact $T(c)$, which is not in $D\exo \cup \sigma^R_f$,  such that $S(\val{0},c)\in D\exo$. We conclude that $(D\exo\cup \sigma_f)\models \cqnrsnt$ because $S(\val{0},c)$ satisfies $\cqnotrsnott$ and $T(c)\not\in(D\exo\cup\sigma_f)$. Moreover, since for every fact $S(a,b)$ for $a\neq \val{0}$ at least one of $R(a)$ or $T(b)$ is in $D\exo\cup\sigma_f$, we have that $(D\exo\cup\sigma_f\cup\{f\})\not\models\cqnrsnt$, and $g(\sigma)\in P_2$.

Next, we prove that the function is injective and surjective.
\begin{itemize}
    \item \underline{Injectivity:} Let $\sigma_1, \sigma_2 \in P_1$ such that $\sigma_1\neq \sigma_2$. It follows directly that $\sigma_1^R\neq \sigma_2^R$.
    \item \underline{Surjectivity:} Let $\sigma\in P_2$. Since $(D\exo\cup\sigma_f)\models \cqnrsnt$, it holds that for every $S(a,b)\in D\exo$ such that $a\neq \val{0}$, at least one of $R(a), T(b)$ is in $D\exo \cup \sigma_f$. In addition, since $(D\exo\cup\sigma_f\cup\{f\})\not\models \cqnrsnt$, there is a fact $S(\val{0},c)\in D\exo$ such that $T(c)$ is not in $D\exo \cup \sigma_f$. Observe that $\sigma_R\in P_1$, as for every $S(a,b)\in D\exo$ such that $a\neq \val{0}$, at least one of $R(a), T(b)$ is not in $D\exo \cup \sigma^R_f$, and there is a fact $S(\val{0},c)\in D\exo$ such that $T(c)$ is in $D\exo \cup \sigma^R_f$. Thus $f$ changes the query result from false to true w.r.t.~$\cqrst$ in $\sigma^R$. Since ${\sigma^R}^R= \sigma$, we get that $g(\sigma^R) = \sigma$.
\end{itemize}

Thus, by the definition of the Shapley value we obtain that:
\[\shp(D,\cqrst, f) = \dfrac{|P_1|}{n!}=\dfrac{|P_2|}{n!}= -\shp(D,\cqnotrsnott, f)\]
and that concludes our proof.
\end{proof}

Next, we give the proof of hardness for $\cqrnst$.

\def\lemmarnst{
Computing $\shp(D, \cqrnst, f)$ is $\FP{}$-complete.
}
\begin{lemma}
\lemmarnst
\end{lemma}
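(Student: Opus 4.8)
The plan is to reduce from computing $\shp(D,\cqrst,f)$, which is $\FP$-complete by~\cite{shapley-icdt2020}, to computing $\shp(D',\cqrnst,f)$. The queries $\cqrst()\dl R(x),S(x,y),T(y)$ and $\cqrnst()\dl R(x),\neg S(x,y),T(y)$ differ only in the polarity of the middle atom, so the natural idea is to \emph{complement} the relation $S$: a pair $(a,b)$ should satisfy $\neg S$ in the new instance exactly when it satisfies $S$ in the old one.

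First I would exploit the assumption---valid for the $\cqrst$ hardness instances~\cite{shapley-icdt2020}---that every $S$-fact is exogenous. This is crucial, because it guarantees that $S$-facts are not players in the cooperative game, so replacing $S^D$ with its complement does not alter the player set $D\endo$. Concretely, given an input $(D,f)$ for the $\cqrst$ problem, I construct $(D',f)$ by keeping $R$ and $T$ (with their endogenous/exogenous labels) unchanged and setting $S^{D'}\eqdef\overline{S^D}$, the set of all pairs over $\Dom(D)$ that are \emph{not} in $S^D$, all marked exogenous. This is computable in polynomial time since $|\overline{S^D}|\le|\Dom(D)|^2$, and it leaves $\Dom(D')=\Dom(D)$ and $D'\endo=D\endo$.

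The core of the argument is to show that $v$ is preserved, i.e., for every $E\subseteq D\endo$ we have $(D\exo\cup E)\models\cqrst$ iff $(D'\exo\cup E)\models\cqrnst$. Since $R$ and $T$ coincide in the two instances, a homomorphism in either case maps $x$ to some $a$ with $R(a)$ present and $y$ to some $b$ with $T(b)$ present; in particular $a,b\in\Dom(D)$. By safe negation, $\neg S(x,y)$ holds in $D'$ exactly when $(a,b)\notin\overline{S^D}$, which---because $a,b\in\Dom(D)$---is equivalent to $(a,b)\in S^D$, i.e., to the positive atom $S(x,y)$ being satisfied in $D$. Hence the satisfying extensions $E$ are identical for the two queries, the wealth functions agree, and therefore $\shp(D,\cqrst,f)=\shp(D',\cqrnst,f)$. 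This yields $\FP$-hardness. For membership, the standard reduction of the Shapley value to computing $|\Sat(D,q,k)|$ (a \sharpp{} quantity, as query evaluation for a fixed \CQneg is polynomial in data complexity) places the problem in $\FP$.

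The main obstacle I anticipate is the bookkeeping around the domain: one must be careful that complementing $S$ over $\Dom(D)$ introduces neither new active-domain values nor spurious homomorphisms. Verifying that every variable of the negated atom is bound by a positive atom---which is guaranteed by safe negation---is exactly what pins $a$ and $b$ to $\Dom(D)$ and makes the equivalence $(a,b)\notin\overline{S^D}\Leftrightarrow(a,b)\in S^D$ go through cleanly.
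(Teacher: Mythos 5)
Your proof is correct and follows essentially the same route as the paper's: a reduction from $\shp(D,\cqrst,f)$ that complements the (all-exogenous) relation $S$ while leaving $R$, $T$, and hence the player set $D\endo$ untouched, with safe negation pinning the homomorphism images to values where the complementation is an exact involution. The only cosmetic difference is that you establish agreement of the wealth function on every coalition $E$ (which immediately yields equal Shapley values), whereas the paper exhibits an equality between the sets of permutations in which $f$ flips the query from false to true; both arguments are sound.
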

\begin{proof}
We show a reduction from the problem of computing $\shp(D, \cqrst, f)$ to that of computing $\shp(D, \cqrnotst, f)$. As in the proof of the previous lemma, we make the assumption that every fact in $S$ is \e{exogenous}, while preserving the hardness of the original problem. Let $D$ be a database, and let $f\in D\endo$. Assume, without loss of generality, that $f=R(\val{0
})$. Let $D'$ be a database over the same schema as $D$, where the relations $R$ and $T$ of $D'$ consist of the exact same set of facts as the relations $R$ and $T$ in $D$. As for the relation $S$ in $D'$, it will contain the following set of facts:
\[S^{D'} = \{S(a,b) \mid R(a), T(b)\in D, S(a,b)\notin D\}\]
Observe that $D'\endo=D\endo$. we define two sets of permutations:
\[P_1 := \{\sigma\in\Pi_{D\endo}\mid(D\exo\cup\sigma_f)\not\models\cqrst,(D\exo\cup\sigma_f\cup\{f\})\models\cqrst\}\]
\[P_2 := \{\sigma\in\Pi_{D\endo}\mid(D'\exo\cup\sigma_f)\not\models\cqrnotst,(D'\exo\cup\sigma_f\cup\{f\})\models\cqrnotst\}\]
We prove that $P_1=P_2$ by showing a mutual inclusion between the sets:

\underline{$P_1\subseteq P_2$:} Let $\sigma\in P_1$. Since $(D\exo\cup\sigma_f)\not\models\cqrst$, for every pair of facts $R(a),T(b)$ in $\sigma_f$, we have that $S(a,b)\not\in\ D\exo$. Moreover, since $(D\exo\cup\sigma_f\cup\{f\})\models\cqrst$, there is a fact $S(\val{0},c)\in D\exo$ such that $T(c)\in\sigma_f$. By the definition of $S^{D'}$, we have that $S(a,b)\in D\exo'$ for every pair of facts $R(a),T(b)$ in $\sigma_f$; hence, $(D'\exo\cup\sigma_f)\not\models\cqrnotst$. We also have that $S(\val{0},c)\not\in D\exo'$ (while $T(c)\in\sigma_f$), and we conclude that $(D\exo\cup\sigma_f\cup\{f\})\models\cqrnst$.

\underline{$P_2\subseteq P_1$:} Let $\sigma\in P_2$. For every $R(a), T(b) \in\sigma_f$, there is a fact $S(a,b)\in D'\exo$, and there is also a fact $T(c)\in \sigma_f$ such that $S(\val{0},c)\notin D'\exo$. Thus, it holds that for every pair of facts $R(a),T(b)\in\sigma_f$, the fact $S(a,b)$ does not belong to $D\exo$ by the definition of $S^{D'}$, so $R(a), T(b)$ cannot be a part of any answer to $\cqrst$, and overall $(D\exo\cup\sigma_f)\not\models\cqrst$. Moreover, $S(\val{0},c)\in D\exo$, so the set of facts $\set{R(\val{0}), S(\val{0},c), T(c)}$ satisfies $\cqrst$, and we conclude that $(D'\exo\cup\sigma_f\cup\{f\})\models\cqrst$.

Finally, we deduce that:
\[\shp(D,\cqrst,f) = \dfrac{|P_1|}{n!} = \dfrac{|P_2|}{n!}= \shp(D', \cqrnotst, f)\]
where $n=|D\endo|=|D\endo'|$.
\end{proof}

Next, we prove hardness for the query $\cqrsnott$. This proof of hardness is the most intricate one due to the non-symmetrical structure of the query.

\def\lemmarsnt{
Computing $\shp(D, \cqrsnt, f)$ is $\FP{}$-complete.
}
\begin{lemma}
\lemmarsnt
\end{lemma}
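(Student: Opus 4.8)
The plan is to prove $\FP$-hardness by a reduction from counting the independent sets $|\is(g)|$ of a bipartite graph $g=(A\cup B,E)$, which is $\#P$-complete; membership in $\FP$ is generic, following from the reduction of the Shapley value to the counting quantities $|\Sat(D,q,k)|$ discussed for the hierarchical case. The source of difficulty, relative to $\cqnrsnt$ and $\cqrnst$, is that $\cqrsnt$ has no polarity symmetry, so I cannot flip a permutation or complement a single relation to obtain a one-instance reduction. Instead I will build a family of instances and recover $|\is(g)|$ by interpolation, mirroring the structure of the proof for $\cqrst$ in~\cite{shapley-icdt2020}.

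First I change the target of the count. For $k=0,\dots,n$ with $n=|A|+|B|$, let $\s(g,k)$ be the family of size-$k$ subsets $A'\cup B'$ ($A'\subseteq A$, $B'\subseteq B$) that are closed along edges into $B$, i.e.\ $a\in A'\Rightarrow b\in B'$ for every $(a,b)\in E$. The complementation map $A'\cup B'\mapsto A'\cup(B\setminus B')$ is a bijection between $\is(g)$ and $\s(g)=\bigcup_k\s(g,k)$, so $|\is(g)|=\sum_{k=0}^n|\s(g,k)|$, and it suffices to compute all the numbers $|\s(g,k)|$.

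Next I construct, for each $i=0,\dots,n$, a database $D_i$ together with the endogenous fact $f=T(b^\ast)$, whose Shapley value will expose a linear combination of the $|\s(g,k)|$. The instance $D_i$ is obtained from $g$ by adding one fresh right vertex $b^\ast$ and $i$ fresh left vertices, each joined to $b^\ast$; I then add an endogenous fact $R(a)$ for every left vertex, an endogenous fact $T(b)$ for every right vertex (so $f=T(b^\ast)$), and an exogenous fact $S(a,b)$ for every edge. Because adding $T(b^\ast)$ can only destroy witnesses of the form $(x,b^\ast)$, the fact $f$ flips the answer (necessarily from true to false) in a permutation exactly when its prefix $\sigma_f$ (a) contains no original witness, i.e.\ the original $R$- and $T$-facts in $\sigma_f$ form a set in $\s(g)$, and (b) contains at least one of the $i$ fresh left $R$-facts, so that $b^\ast$ currently supplies a witness. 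Grouping permutations by the size $k$ of $\sigma_f$ and by how many fresh left facts it uses yields, with $N=n+i+1=|D_i\endo|$,
\[
\shp(D_i,\cqrsnt,f)=-\frac{1}{N!}\sum_{m=0}^{n}|\s(g,m)|\sum_{j\ge 1}\binom{i}{j}\,(m+j)!\,(N-1-m-j)!.
\]

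Letting $i$ range over $0,\dots,n$ gives $n+1$ linear equations in the $n+1$ unknowns $|\s(g,m)|$. The main obstacle, and the only genuinely technical step, is to show that this system is non-singular, so that the $|\s(g,m)|$ can be solved for from the computed Shapley values. I expect to handle this by factoring the coefficient matrix as a diagonal matrix of factorials times a matrix whose $(i,m)$ entry is the inner binomial/factorial sum, and arguing its determinant is nonzero by a Vandermonde-type argument, in analogy with the non-singularity argument used for $\cqrst$ in~\cite{shapley-icdt2020}. Recovering the $|\s(g,m)|$ and summing them then produces $|\is(g)|$, which completes the reduction and establishes $\FP$-completeness.
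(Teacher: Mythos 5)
Your high-level strategy coincides with the paper's: reduce from counting independent sets of a bipartite graph via the bijection with $\s(g)$, attach fresh left vertices to a fresh right vertex $b^\ast$, and recover the numbers $|\s(g,m)|$ from a linear system over the Shapley values of $f=T(b^\ast)$; your expression for $\shp(D_i,\cqrsnt,f)$ is also correct. The gap sits precisely at the step you deferred, and it is twofold. First, the system you wrote down is singular as stated: for $i=0$ there is no fact $S(\cdot,b^\ast)$ at all, so $f$ is never relevant, $\shp(D_0,\cqrsnt,f)=0$, and every coefficient $\binom{0}{j}(m+j)!\,(N-1-m-j)!$ with $j\ge 1$ vanishes --- the $i=0$ row is identically zero, leaving only $n$ informative equations for $n+1$ unknowns (you would need $i$ to range over $1,\dots,n+1$). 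Second, even for the repaired range, non-singularity does not follow ``in analogy'' with the matrix for $\cqrst$ in~\cite{shapley-icdt2020}: that matrix has entries $k!\,(N-k+r)!$ with the number of endogenous facts $N$ fixed across all instances, whereas your entries are binomial-weighted sums in which the total size $N_i=n+i+1$ itself varies with the row, so the cited determinant computation simply does not apply to your matrix. Since this invertibility is the entire technical content of the reduction, the proof is incomplete without it.

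For comparison, the paper engineers its instances so that the known non-singular matrix appears verbatim. It adds one special instance $D^0$ in which \emph{every} original left vertex is attached to the fresh right vertex; its Shapley value pins down the number $P_{1\to 1}$ of permutations whose prefix already contains an original witness. In the instances $D^r$ only the $r$ fresh left vertices are attached, and because the fresh facts can be interleaved arbitrarily in true-to-true permutations, the term $P_{1\to 1}\cdot m_r$ can be subtracted off, leaving $\sum_k|\s(g,k)|\,k!\,(N-k+r)!$ on one side --- exactly the matrix already proven non-singular by Livshits et al. Your one-family design can also be salvaged, but it needs a new argument: marginalizing out the fresh facts shows your coefficient of $|\s(g,m)|$ equals $N_i!\bigl(\frac{m!\,(n-m)!}{(n+1)!}-\frac{m!\,(N_i-1-m)!}{N_i!}\bigr)$, so a kernel vector $(v_m)$ would force the rational function $\phi(t)=\sum_m v_m\, m!/\bigl(t(t-1)\cdots(t-m)\bigr)$ to take the same value at the $n+2$ points $t=n+1,\dots,2n+2$; clearing the common denominator $t(t-1)\cdots(t-n)$ leaves a polynomial of degree at most $n+1$ with $n+2$ roots, hence $\phi$ is constant, hence zero (let $t\to\infty$), hence $v=0$ by comparing degrees term by term. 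One way or the other, such an argument must be supplied before the reduction stands.
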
 

\begin{proof}
We construct a reduction from the known $\#\P$-complete problem of computing $|\is(g)|$---the number of independent sets in a bipartite graph $g$. Given an input graph $g=(A\cup B, E)$, where $A$ and $B$ are the disjoint sets of vertices in $g$, we define the following set:
\[\s(g):= \{A' \cup B'| A'\subseteq A, B'\subseteq B, \forall_{(a,b)\in E} (a\in A'\Rightarrow b\in B')\}\]
That is, $S(g)$ contains all subsets of the vertices in $g$, such that if a vertex from $A$ is in the subset, all of its neighbours from $B$ are in the subset as well. Note that a subset in $\s(g)$ may include additional vertices from $B$ that not connected to any vertex from $A$ in the subset. We denote by $\s(g,k)$ the set of all $E\in\s(g)$ such that $|E|=k$.

Given a bipartite graph $g=(A\cup B, E)$ where $|A|=m, |B|=n$, and $N=m+n$ such that none of the vertices in $g$ is isolated, we build a database $D^0$ which consists of the following facts: an endogenous fact $R(a)$ for every vertex $a\in A$, an endogenous fact $T(b)$ for every vertex $b\in B$, an exogenous fact $S(a,b)$ for every edge $(a,b)\in E$, and another endogenous fact $T(\val{0})$ for a fresh constant $\val{0}$. In addition, for every $a\in A$, $D^0$ will contain the exogenous fact $S(a,\val{0})$. We will compute the Shapley value for the fact $f=T(\val{0})$. (In fact, we will compute $1-\shp(D_0,\cqrsnt,f)$.)

There are two types of permutations $\sigma\in \Pi_{D^0\endo}$ for which it holds that $\cqrsnott(D^0\exo \cup \sigma_f) =  \cqrsnott(D^0\exo \cup \sigma_f \cup\{f\})$ (i.e., permutations where $f$ does not change the result of $\cqrsnott$):
\begin{enumerate}
    \item $(D^0\exo \cup \sigma_f)\not\models\cqrsnt$ and $(D^0\exo \cup \sigma_f \cup\{f\})\not\models\cqrsnt$. In this case, no fact from $R$ is in $\sigma_f$. Otherwise, there is $R(a)\in\sigma_f$ such that $R(a), S(a,\val{0})$ is an answer to $\cqrsnott$, based on the construction of $D^0$, in which case $(D^0\exo \cup \sigma_f)\models\cqrsnt$. Adding $f$ cannot form a new answer to the query. The number of permutations that satisfy this property is $P_{0\rightarrow0}= \dfrac{(N+1)!}{m+1}$, since each of the $m+1$ facts in $R^D\cup\{f\}$ has an equal chance to be the first (among these facts) to appear in a permutation, and we are interested in the permutations where $f$ appears before any fact in $R^D$.
    \item $(D^0\exo \cup \sigma_f)\models\cqrsnott$ and $(D^0\exo \cup \sigma_f \cup\{f\})\models\cqrsnt$. Here, we observe that $F=\{a\mid R(a)\in \sigma_f\}\cup\{b\mid T(b)\in \sigma_f\}$ is a subset of vertices in $g$ such that $F\notin\s(g)$. Otherwise, we get that for every $S(a,b)\in D^0\exo$ if $R(a)\in\sigma_f$ then $T(b)\in\sigma_f$, by the definition of $\s(g)$. Therefore, none of the pairs $R(a), S(a,b) \in (D\exo\cup \sigma_f)$ is an answer to $\cqrsnott$, since $T(b)\in \sigma_f$ as well. Hence, every pair of facts from $R$ and $S$ that satisfies $\cqrsnt$ is of the form $R(a),S(a,\val{0})$, but when we add $f$ to the permutation, none of these pairs satisfies the query anymore, in contradiction to the fact that $(D^0\exo \cup \sigma_f \cup\{f\})\models\cqrsnt$.
    We denote the number of permutations that satisfy this property as $P_{1\rightarrow1}$.   
\end{enumerate}
Next, we denote by $P_{1\rightarrow0}$ the number of permutations in $\Pi_{D^0\endo}$ where $f$ changes the result of $\cqrsnt$ from true to false. We have $N+1$ endogenous facts in $D^0\endo$, so we conclude that: $$P_{0\rightarrow0}+P_{1\rightarrow1}+P_{1\rightarrow0}=(N+1)!$$
Therefore, it holds that
$$P_{1\rightarrow0} = (N+1)!-(P_{0\rightarrow0}+P_{1\rightarrow1})$$   
By the definition of the Shapley value for database facts (Definition~\ref{def:shaply}) we obtain that:
\[\shp(D^0, \cqrsnott, f)=\dfrac{P_{0\rightarrow0}\cdot0+P_{1\rightarrow1}\cdot0+P_{1\rightarrow0}\cdot(-1)}{(N+1)!}\]
\[=\dfrac{(N+1)!-(P_{0\rightarrow0}+P_{1\rightarrow1})}{(N+1)!}=1-\dfrac{P_{0\rightarrow0}+P_{1\rightarrow1}}{(N+1)!}\]
Hence, we are able to compute $P_{1\rightarrow1}$ from the Shapley value of $f$ in the following way:
\[P_{1\rightarrow1} = (1-\shp(D^0, \cqrsnott, f))\cdot(N+1)!-\dfrac{(N+1)!}{m+1}\] 
\par
In the next step, we build $N+1$ instances $(D^r, f)$ for every $r\in\{1,...,N+1\}$ as follows: the database $D^r$ will contain an endogenous fact $R(a)$ for every vertex $a\in A$, an endogenous fact $T(b)$ for every vertex $b\in B$, an exogenous fact $S(a,b)$ for every edge $(a,b)\in E$, and the endogenous fact $T(\val{0})$. In addition, for every $i\in\{1,...,r\}$, the database $D^r$ will contain an endogenous fact $R(0_i)$ and the exogenous fact $S(0_i,\val{0})$. Once again, we consider the two types of permutations $\sigma\in \Pi_{D^r\endo}$, where $f$ does not change the result of $\cqrsnt$ from true to false:
\begin{enumerate}
    \item  $(D^r\exo \cup \sigma_f)\models\cqrsnott$ and $(D^r\exo \cup \sigma_f \cup\{f\})\models\cqrsnott$. In this case, the set $F=\{a\mid R(a)\in \sigma_f, a\neq 0_i\}\cup\{b\mid T(b)\in \sigma_f\}$ of vertices in $g$ is such that $F\notin\s(g)$. Otherwise, we have that the only pairs of facts from $R$ and $S$ satisfying $\cqrsnt$ are of the form $R(0_i), S(0_i,\val{0})$, which is a contradiction to the fact that $(D^r\exo \cup \sigma_f \cup\{f\})\models\cqrsnt$, since $f$ is ``cancelling'' each of these pairs (i.e., $\set{R(0_i),S(0_i,\val{0}),T(\val{0})}\not\models\cqrsnt$). The number of permutations satisfying this property is: $P^r_{1\rightarrow1} = P_{1\rightarrow1}\cdot m_r$, where $m_r=\binom{N+r+1}{r}\cdot r!$, as the $r$ endogenous facts of the form $R(0_i)$ can be added to every such permutation in every possible position without affecting the result of the query.
    \item $(D^r\exo \cup \sigma_f)\not\models \cqrsnott$ and $(D^r\exo \cup \sigma_f \cup\{f\})\not\models \cqrsnott$. To ensure that $(D^r\exo \cup \sigma_f)\not\models \cqrsnott$, none of the facts $R(0_i)$ can appear in $\sigma_f$ (or, otherwise, the pair $R(0_i), S(0_i,\val{0})$ would satisfy the query). Furthermore, The set $F=\{a\mid R(a)\in \sigma_f, a\neq 0_i\}\cup\{b\mid T(b)\in \sigma_f\}$ must be such that $F\in \s(g)$ or, otherwise, there is $S(a,b)\in D^r\exo$ such that $R(a)\in \sigma_f$ and $T(b)\notin \sigma_f$, which implies that $(D\exo\cup\sigma_f)\models\cqrsnt$, in contradiction to our assumption. The number of such permutations is: $P^r_{0\rightarrow0}=\sum_{k=0}^N |\s(g,k)|\cdot k! \cdot (N-k+r)!$.
\end{enumerate}
Hence, we have that: 
\[\shp(D^r, \cqrsnott, f) = 1-\dfrac{P^r_{0\rightarrow0}+P^r_{1\rightarrow1}}{(N+r+1)!}=1-\dfrac{P_{1\rightarrow1}\cdot m_r+P^r_{0\rightarrow0}}{(N+r+1)!}\]
We get that:
\[P^r_{0\rightarrow0}=(1-\shp(D^r, \cqrsnott, f))\cdot(N+r+1)!-P_{1\rightarrow1}\cdot m_r=\]
\[\sum_{k=0}^N |\s(g,k)|\cdot k! \cdot (N-k+r)!\]
(Recall that $P_{1\rightarrow 1}$ can be computed from the Shapley value of the fact $T(\val{0})$ in the instance $D^0$.)
As a consequence, we get a system of $N+1$ equations:

\begin{gather*}
    \left( {\begin{array}{cccc}
   0!(N+1)! & 1!N! & ... & N!1! \\
   0!(N+2)! & 1!(+1)! & ... & N!2! \\
   : & : & : & : \\
   0!(2N+1)! & 1!(2N)! & ... & N!(N+1)!
  \end{array} } \right)
  \left( {\begin{array}{c}
   |\s(g,0)| \\
   |\s(g,1)| \\
   : \\
   |\s(g,N)|
  \end{array} } \right)
  \\ \\
  = \left( {\begin{array}{c}
   (1-\shp(D^1, \cqrsnott, f))\cdot(N+2)!-P_{1\rightarrow1}\cdot m_1 \\
   (1-\shp(D^2, \cqrsnott, f))\cdot(N+3)!-P_{1\rightarrow1}\cdot m_2 \\
   : \\
   (1-\shp(D^{N+1}, \cqrsnott, f))\cdot(2N+2)!-P_{1\rightarrow1}\cdot m_{N+1}
  \end{array} } \right)
\end{gather*}
This is the same system of equations that Livshits et al. obtained in the hardness proof for $\cqrst$~\cite{shapley-icdt2020}. There, they prove that the determinant of the coefficient matrix is not zero; hence, this system is solvable in polynomial time, providing us the with the value of $|\s(g)|=\sum_{k=0}^N |\s(g,k)|$.

Finally, it is left to prove that $|\s(g)|=|\is(g)$. For that purpose, we define a bijection between the two sets, $h: \is(g) \rightarrow \s(g)$, as follows:
Let $(A'\cup B') \in \is(g)$. Then, $h(A'\cup B')=A'\cup (B\setminus B')$. Note that for every $(a,b)\in E$ we have that if $a\in A'$ then $b\not\in B'$; hence, for every $(a,b)\in E$ it holds that if $a\in A'$ then $b\in (B\setminus B')$. Hence, if $(A'\cup B') \in \is(g)$, then $(A'\cup (B\setminus B')) \in \s(g)$.

\underline{Injectivity:} Let $I_1 = (A'_1\cup B'_1)$ and $I_2 = (A'_2\cup B'_2)$ be two distinct independent sets of $g$ (i.e., $I_1,I_2\in\is(g)$). At least one of the following holds: $A'_1\neq A'_2$, or $B'_1\neq B'_2$. Clearly in both cases we have that $h(I_1)\neq h(I_2)$ as well.

\underline{Surjectivity:} Let $E=(A'\cup B')$ be a subset of vertices in $\s(g)$. Consider the subset $I = (A'\cup (B\setminus B'))$. By the definition of $\s(g)$, for every $(a,b)\in E$ we have that if $a\in A'$ then $b\in B'$. Therefore, for every $(a,b)\in E$ it holds that if $a\in A'$ then $b\not\in (B\setminus B')$. Then, we conclude that $I\in\is(g)$ by definition. It holds that $h(I)=(A'\cup (B\setminus(B \setminus B'))=(A'\cup B')=E$; thus, the function $h$ is surjective.

To conclude, we constructed a reduction from the problem of computing $|\is(g)|$ to that of computing $\shp(D,\cqrsnott, f)$; hence, computing $\shp(D,\cqrsnott, f)$ is $\FP{}$-complete. 
\end{proof}

Finally, we show that for any non-hierarchical self-join-free \CQneg $q$ computing $\shp(D, q, f)$ is $\FP$-complete, using a reduction from the problem of computing $\shp(D, q', f)$ where $q'$ is one of queries $\cqrst$,$\cqnrsnt$,$\cqrnst$, or $\cqrsnt$, depending on the polarity of the atoms in the non-hierarchical triplet in $q$.

\begin{lemma}\label{lemma:sjfbcq_hard}
\lemmasjfbcqhard
\end{lemma}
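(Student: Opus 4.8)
The plan is to reduce, in polynomial time, from the computation of $\shp(\hat D,q',f)$ for one of the four basic queries $\cqrst,\cqnrsnt,\cqrnst,\cqrsnt$ (each $\FP$-complete by Lemma~\ref{lemma:four-hard-queries}) to the computation of $\shp(D,q,f)$. Since $q$ is non-hierarchical and self-join-free, it has a non-hierarchical triplet $(\alpha_x,\alpha_{x,y},\alpha_y)$, where $x$ occurs in $\alpha_x$ but not $\alpha_y$, the variable $y$ occurs in $\alpha_y$ but not $\alpha_x$, and both occur in $\alpha_{x,y}$. The basic query we reduce from is dictated by the polarities of these three atoms, with $\alpha_x$, $\alpha_{x,y}$, $\alpha_y$ playing the roles of the $R$-, $S$- and $T$-atoms, respectively. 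The first step is to normalize the triplet, using safety, so that only four polarity patterns can arise.

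I would first argue that safe negation lets us assume that whenever $\alpha_{x,y}$ is negative, both $\alpha_x$ and $\alpha_y$ are positive. Indeed, suppose $\alpha_{x,y}\in\negq(q)$. Then $x$ and $y$ both occur in a negated atom, so by safety each occurs in some positive atom, say $x\in\beta_x\in\posq(q)$ and $y\in\beta_y\in\posq(q)$. If some positive atom $\gamma$ contains both $x$ and $y$, then $(\alpha_x,\gamma,\alpha_y)$ is a non-hierarchical triplet whose middle atom is positive. Otherwise $\beta_x$ contains $x$ but not $y$ and $\beta_y$ contains $y$ but not $x$, so $(\beta_x,\alpha_{x,y},\beta_y)$ is a non-hierarchical triplet with two positive end atoms and a negative middle. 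In all cases the surviving triplet falls into one of the patterns $(+,+,+)$, $(-,+,-)$, $(+,+,-)$ (the last up to swapping $x$ and $y$), or $(+,-,+)$, which correspond exactly to reducing from $\cqrst$, $\cqnrsnt$, $\cqrsnt$, and $\cqrnst$.

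Next I would build the reduction. Given an input instance $(\hat D,f)$ for the chosen $q'$ over the schema $\set{R,S,T}$, introduce a fresh constant $\star$ and construct $(D,f)$ so that the only relevant homomorphisms send every variable of $q$ other than $x,y$ to $\star$. Populate the three triplet relations from $\hat D$: for each fact $R(a)$ place into $R_{\alpha_x}^{D}$ the fact agreeing with $\alpha_x$ that puts $a$ in the coordinate of $x$ and $\star$ in every other coordinate, and do the analogous thing for $\alpha_y$ from the $T$-facts and for $\alpha_{x,y}$ from the $S$-facts (putting $a,b$ in the $x$- and $y$-coordinates). The exogenous/endogenous label of each produced fact is copied from its source fact in $\hat D$. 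Because $q$ is self-join-free, every remaining (auxiliary) atom $\beta$ is over a distinct relation symbol whose contents I may fix independently: if $\beta\in\posq(q)$, I fill $R_\beta^{D}$ with exogenous facts agreeing with $\beta$ for every value of the coordinates of $x$ and/or $y$ over the domain of $\hat D$ (and $\star$ elsewhere), so that $\beta$ is always satisfiable; if $\beta\in\negq(q)$, I leave $R_\beta^{D}$ empty so that the negated atom is vacuously satisfied. No complement or reversal is needed here, since the matching basic query already carries the right polarities.

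Finally I would verify correctness. The map $\phi$ sending each endogenous source fact of $\hat D$ to the triplet fact it produces is a bijection $\hat D\endo\to D\endo$, and all auxiliary facts are exogenous. Since every fact of $D$ carries $\star$ in the coordinates of the auxiliary variables, any homomorphism of $q$ into $D\exo\cup E$ must map those variables to $\star$ and $x,y$ to domain values; hence for every $E_0\subseteq \hat D\endo$ with $E=\phi(E_0)$ we obtain $q(D\exo\cup E)=q'(\hat D\exo\cup E_0)$. The two cooperative games are therefore isomorphic, giving $\shp(D,q,f)=\shp(\hat D,q',f)$ and completing the reduction. I expect the main obstacle to be the safety normalization of the triplet (ruling out every polarity pattern outside the four handled ones) together with the bookkeeping needed to confirm that no spurious homomorphism satisfies $q$---in particular that each negated auxiliary atom remains always satisfiable and that the $\star$-padding leaves the intended assignment as the only consistent one; the preservation of $\shp$ then follows immediately from the game isomorphism.
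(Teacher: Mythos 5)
Your proposal is correct and follows essentially the same route as the paper's own proof: the same safety-based normalization of the non-hierarchical triplet into the four polarity patterns, followed by the same padding-constant reduction from the matching basic query $\cqrst$, $\cqnrsnt$, $\cqrnst$, or $\cqrsnt$, with the triplet atoms simulating the $R$-, $S$- and $T$-atoms and the auxiliary atoms made vacuously satisfiable. The only notable difference is that you populate the auxiliary positive relations with facts for \emph{all} pairs over the input domain rather than only for pairs occurring in $S$ (as the paper does), which is in fact the more robust choice, since it keeps the homomorphism correspondence intact even in the case where $\alpha_{x,y}$ is negated and a satisfying assignment requires the $S$-pair to be \emph{absent} from the input database.
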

\begin{proof}
Every non-hierarchical self-join-free \CQneg contains three atoms $\alpha_x, \alpha_y, \alpha_{x,y}$ where $x,y\in\var(q)$, such that $\alpha_x\in A_x\setminus A_y$, $\alpha_y\in A_y\setminus A_x$, $\alpha_{x,y}\in A_x\cap A_y$. We argue that $q$ satisfies another property: if there is a non-hierarchical triplet $\alpha_x, \alpha_y, \alpha_{x,y}$ where $\alpha_{x,y}$ and at least one of $\alpha_x$ or $\alpha_y$ are negative, then there is another non-hierarchical triplet $\alpha'_x, \alpha'_y, \alpha'_{x,y}$ where either $\alpha'_{x,y}$ is positive or both $\alpha_x$ and $\alpha_y$ are positive. Assume, without loss of generality, that $\alpha_x$ is negative.
Since $q$ is safe, there is a positive atom $\alpha'_x$ such that $\alpha'_x\in A_x$. If there exists such an atom $\alpha'_x$ such that $\alpha_x'\in (A_x \cap A_y)$, the triplet $\alpha_x, \alpha'_x, \alpha_y$ satisfies the property. Otherwise, if $\alpha_y$ is positive, the triplet $\alpha_x', \alpha_{x,y}, \alpha_y$ satisfies the property. Finally, if every $\alpha_x'$ is such that $\alpha'_x\in A_x\setminus A_y$ and $\alpha_y$ is negative, since $q$ is safe, we have another positive atom $\alpha'_y\in A_y \setminus A_x$ and $\alpha'_x, \alpha_{x,y}, \alpha'_y$ is a non-hierarchical triplet satisfying the property.

Let $\alpha_x, \alpha_{x,y}, \alpha_y$ be a non-hierarchical triplet of $q$ that satisfies the above property. We construct a reduction from the problem of computing $\shp(D',q',f')$ where $q'$ is one of $\cqrst$, $\cqnrsnt$, $\cqrnst$, or $\cqrsnt$ to computing $\shp(D,q,f)$. We have already established that
computing $\shp(D',q',f')$ for each of these queries is $\FP$-complete; hence, we conclude that $\shp(D,q,f)$ is $\FP$-complete for any non-hierarchical self-join-free \CQneg. We present the four reductions simultaneously, as they all work in a very similar way.

Depending on the polarity of the atoms in the non-hierarchical triplet of $q$ satisfying the property indicated above, we select one of the four reductions (if there are multiple triplets satisfying this property, we choose one randomly):
\begin{enumerate}
    \item If all three atoms are positive, we reduce from computing $\shp(D',\cqrst,f')$.
     \item If $\alpha_{x,y}$ is positive while the other two atoms are negative, we reduce from $\shp(D',\cqnotrsnott,f')$.
    \item If $\alpha_{x,y}$ is negative while the other two atoms are positive, we reduce from  $\shp(D',\cqrnotst,f')$.
    \item If $\alpha_{x,y}$ is negative, and exactly one of $\alpha_x, \alpha_y$ is negative, we reduce from $\shp(D',\cqrsnott,f')$.
    \end{enumerate}
    
The idea is very similar to the corresponding proof in~\cite{shapley-icdt2020}. The main difference is in the construction of the database $D'$, as $q$ may contain negative atoms. We use the atom $\alpha_x$ to represent the atom $R(x)$ (or $\neg R(x)$) in $q'$, the atom $\alpha_y$ to represent the atom $T(y)$ (or $\neg T(y)$) in $q'$, and the atom $\alpha_{x,y}$ to represent the atom $S(x,y)$ (or $\neg S(x,y)$) in $q'$. For every fact $R(\val{a})$ in $D$ (which is the input to the first problem), we insert to the relation $R_{\alpha_x}$ in $D'$ (the input to our problem) a fact obtained by mapping the variable $x$ in $\alpha_x$ to $\val{a}$ and the rest of the variables to a constant $\odot$. Similarly, for every fact $T(\val{b})$ in $D$, we insert to the relation $R_{\alpha_y}$ in $D'$ a fact obtained by mapping the variable $y$ in $\alpha_y$ to $\val{b}$ and the rest of the variables to a constant $\odot$. Each such fact $f'\in D'$ will be endogenous if and only if the fact $f$ it was generated from is endogenous. Finally, for every fact $S(\val{a},\val{b})$ and a positive atom $\alpha$ in $q$ that is not one of $\alpha_x$, $\alpha_y$, or $\alpha_{x,y}$, we insert to the relation $R_\alpha$ in $D'$ every exogenous fact obtained by mapping the variable $x$ in $\alpha$ to $\val{a}$, the variable $y$ to $\val{b}$, and the rest of the variables to $\odot$. We also add to the relation $R_{\alpha_{x,y}}$ in $D'$ every exogenous fact obtained by such a mapping of the variables in $\alpha_{x,y}$.

Note that $|D\endo|=|D'\endo|$; hence, the total number of permutations of the endogenous facts is equal for both databases, and we only need to show that for every fact $f\in D\endo$, the number of permutations of the facts in $D\endo$ where $f$ changes the result of $q'$ is equal to the number of permutations of the facts in $D'\endo$ where $f'$ (which is the fact generated from $f$) changes the result of $q$. We can then conclude that $\shp(D,q',f)=\shp(D',q,f')$.

When considering the fact $S(\val{a},\val{b})$ in the construction of $D'$, we have created a mapping $h$ from the variables of $q$ such that $h(x)=\val{a}$, $h(y)=\val{b}$, and $h(w)=\odot$ for the rest of the variables, and we have added all the resulting facts, associated with positive atoms that are not one of $\alpha_x$, $\alpha_y$, or $\alpha_{x,y}$, to $D'$ (as exogenous facts). Moreover, the relations in $D'$ associated with negative atom of $q$ (except $\alpha_{x,y}$) are empty and do not affect the query result. Finally, it holds that $S(\val{a},\val{b})\in D\exo$ if and only if the fact $f$ obtained from it using the atom $\alpha_{x,y}$ is in $D'\exo$. Hence, a subset $E$ of $D\endo$ is such that there is a homomorphism mapping every positive atom and none of the negative atoms of $q'$ to $E\cup D\exo$ (that is, $(E\cup D\exo)\models q'$) if and only if the subset $E'$ of $D'\endo$ that contains for each fact $f\in E$ the corresponding fact $f'\in D'$ is such that there is a homomorphism mapping every positive atom and none of the negative atoms of $q$ to $E'\cup D\exo'$ (that is, $(E'\cup D\exo')\models q$). Therefore, a permutation $\sigma$ of the facts in $D\endo$ and a fact $f\in D\endo$ satisfies $q'(D\exo\cup\sigma_f)\neq q'(D\exo\cup\sigma_f\cup\{f\})$ if and only if the corresponding permutation $\sigma'$ of the facts in $D'\endo$ and the corresponding fact $f'\in D'\endo$ satisfies $q(D'\exo\cup\sigma_{f'})\neq q(D'\exo\cup\sigma_{f'}\cup\{f'\})$, and that concludes our proof.
\end{proof}

Next, we provide an insight into the complexity of the problem for \CQneg{s} with self joins. Theorem~\ref{thm:sjfBCQ} does not provide us with any information about the complexity of computing the Shapley value for the query $\rel{Unemployed}(x),\rel{Married}(x,y),\rel{Unemployed}(y)$ asking whether there is a married couple where both spouses are unemployed, or for the query $\neg \rel{Citizen}(x),\rel{Married}(x,y),\neg \rel{Citizen}(y)$ asking if there are two married people such that none of them is a citizen. The following result implies that computing the Shapley value for both queries is $\FP$-complete.

\begin{theorem}\label{thm:bcqn_sj}
Let $q$ be a polarity-consistent \CQneg containing a non-hierarchical triplet $(\alpha_x,\alpha_{x,y},\alpha_y)$ such that the relation $R_{\alpha_{x,y}}$ occurs only once in $q$. Then, computing $\shp(D,q,f)$ is $\FP$-complete.
\end{theorem}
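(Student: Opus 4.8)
The upper bound is routine: as noted in Section~\ref{sec:bcqn}, computing $\shp(D,q,f)$ reduces to computing the counts $|\Sat(D,q,k)|$ for a Boolean query, and this counting problem is in $\sharpp$; hence the problem lies in $\FP$ and I focus on the hardness side. The plan is to reduce from one of the four basic non-hierarchical self-join-free queries $\cqrst, \cqnrsnt, \cqrnst, \cqrsnt$ to computing $\shp(D',q,f')$, reusing the construction in the proof of Lemma~\ref{lemma:sjfbcq_hard} and adding a single new ingredient to cope with self-joins. The atom $\alpha_x$ will play the role of the $R$-atom, $\alpha_y$ the role of the $T$-atom, and $\alpha_{x,y}$ the role of the $S$-atom. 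Since $R_{\alpha_{x,y}}$ occurs only once in $q$, I may first assume without loss of generality that $\alpha_{x,y}$ is positive: if it is negative, I replace it with $\overline{\alpha_{x,y}}$ and populate $R_{\alpha_{x,y}}$ with the complement relation exactly as in the proof for $\cqrnst$ (Lemma~\ref{lemma:four-hard-queries}), which preserves the Shapley value because this relation can be made exogenous. With $\alpha_{x,y}$ positive, the polarities of $\alpha_x$ and $\alpha_y$ select the basic query to reduce from; polarity-consistency guarantees that whenever $R_{\alpha_x}=R_{\alpha_y}$ the two atoms have the same polarity, so the pattern of the triplet always matches one of $\cqrst$, $\cqnrsnt$, or $\cqrsnt$ up to swapping the roles of $x$ and $y$.

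Next I would build the target database $D'$ from the source database $D$ as in Lemma~\ref{lemma:sjfbcq_hard}: every variable other than $x$ and $y$ is sent to a fixed fresh constant $\odot$, and for each fact $S(\val{a},\val{b})$ I instantiate $x\mapsto\val{a}$, $y\mapsto\val{b}$ and the remaining variables to $\odot$. The new ingredient, required because $q$ may have self-joins, is to keep the two roles apart using \emph{disjoint domains}: I assume without loss of generality that the active domains of $R^D$ and $T^D$ are disjoint and avoid $\odot$. Polarity-consistency then lets me populate $D'$ coherently: for each relation symbol of $q$ I decide once and for all to \emph{fill} it (if it occurs only positively) or to keep it \emph{sparse} (if it occurs only negatively), with no occurrence of a relation ever contradicting another. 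Concretely, every positive atom outside the triplet is filled with the $\odot$-padded images of the admissible $(\val{a},\val{b})$ pairs, so it is satisfied exactly when the triplet is; every negative atom outside the triplet is left with only the facts forced by the triplet roles, whose arguments lie in $\Dom(R^D)\cup\Dom(T^D)$ and therefore differ from the $\odot$-padded image of that atom, so it is never violated. The endogenous/exogenous status of each generated fact is inherited from the fact it is built from, so $|D\endo|=|D'\endo|$.

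The crux of the argument, and the step I expect to be the main obstacle, is to show that the self-joins create no \emph{spurious} homomorphisms, so that homomorphisms of $q$ into $D'\exo\cup E$ are in exact correspondence with homomorphisms of the basic query $q'$ into $D\exo\cup E$. The key point is that the unique occurrence of $R_{\alpha_{x,y}}$ forces any homomorphism $h$ to match $\alpha_{x,y}$ to a single $S$-pair, which pins $h(x)\in\Dom(R^D)$ and $h(y)\in\Dom(T^D)$; the disjointness of these domains then prevents any self-joined relation (for instance when $R_{\alpha_x}=R_{\alpha_y}$) from reusing an $x$-role fact in the $y$-role or vice versa, and the $\odot$-padding forces every remaining variable to $\odot$. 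This is exactly where the three hypotheses are used: polarity-consistency makes the fill/empty decision well defined, the single occurrence of $R_{\alpha_{x,y}}$ keeps the hardness-encoding join structure from being duplicated anywhere in $q$, and the disjoint domains neutralize the self-joins.

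Once the correspondence is in place, a subset $E\subseteq D\endo$ satisfies $q'(D\exo\cup E)$ if and only if the matching subset $E'\subseteq D'\endo$ satisfies $q(D'\exo\cup E')$, and the fact $f'$ built from $f$ flips the answer in a permutation of $D'\endo$ exactly when $f$ flips the answer in the matching permutation of $D\endo$. Hence $\shp(D,q',f)=\shp(D',q,f')$, and since computing $\shp(D,q',f)$ is $\FP$-complete for each basic query $q'$, so is computing $\shp(D',q,f')$; this yields the claimed $\FP$-completeness for $q$.
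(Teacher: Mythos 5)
Your proposal follows essentially the same route as the paper's own proof: a reduction from $\cqrst$, $\cqnrsnt$, or $\cqrsnt$ chosen by the polarities of $\alpha_x$ and $\alpha_y$, the same $\odot$-padded construction of $D'$ with the assumption that the domains of $R^D$ and $T^D$ are disjoint, complementation of the exogenous relation $R_{\alpha_{x,y}}$ to absorb its polarity, emptiness of the negated relations justified by polarity-consistency, and the unique occurrence of $R_{\alpha_{x,y}}$ to pin every homomorphism to a single $S$-pair so that no new $x$--$y$ connections arise. The only differences are cosmetic: you perform the complementation as a preprocessing step on the query rather than inline in the construction of $D'$, and you spell out the routine $\FP$ membership, which the paper leaves implicit.
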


To prove the theorem, we construct a reduction from the problem of computing $\shp(D,q',f)$ where $q'$ is one of $\cqrst$, $\cqnrsnt$ or $\cqrsnt$ (depending on the polarity of the atoms $\alpha_x$ and $\alpha_y$) under the following assumptions: \e{(1)} all the facts of $S$ are exogenous, and \e{(2)} for every fact $S(\val{a},\val{1})$ in $D$, both facts $R(\val{a})$ and $T(\val{1})$ are in $D$. The instances constructed in the proofs of hardness for all three queries satisfy these conditions; hence, the problems remain hard under these assumptions. We also assume for simplicity that the set of values used in the facts of $R^D$ and the set of values used in the facts of $T^D$ are disjoint.

The idea is very similar to the construction in the proof of Lemma~\ref{lemma:sjfbcq_hard}, with the main difference being the treatment of the negative atoms in $q$. We again use the atom $\alpha_x$ to represent the atom $(\neg)R(x)$ in $q'$, the atom $\alpha_y$ to represent the atom $(\neg)T(y)$ in $q'$, and the atom $\alpha_{x,y}$ to represent the atom $S(x,y)$ in $q'$. We use the assumption that the relation $R_{\alpha_{x,y}}$ occurs only once in $q$ to ensure that we do not create new connections between values of $x$ and values of $y$. If $\alpha_x$ and $\alpha_y$ are both positive, the reduction is from the problem of computing $\shp(D,\cqrst,f)$, if both atoms are negative, the reduction is from computing $\shp(D,\cqnrsnt,f)$, and if one atom is positive while the other is negative, the reduction is from computing $\shp(D,\cqrsnt,f)$.

Formally, given an input database $D$ to the first problem, we build a database $D'$ in the same way we built it in the proof of Lemma~\ref{lemma:sjfbcq_hard} except for the treatment of the atom $\alpha_{x,y}$. Since the atom $S(x,y)$ is always positive in $q'$, if $\alpha_{x,y}$ is negative, then we insert to the relation $R_{\alpha_{x,y}}$ in $D'$ an exogenous fact $f$ obtained by mapping the variables $x$ and $y$ in $\alpha_{x,y}$ to some values $c_1$ and $c_2$ (from the domain of $D'$), respectively, and the rest of the values to $\odot$ if and only if $S(c_1,c_2)\not\in D$. If $\alpha_{x,y}$ is positive, then we insert $f$ to $D'$ if and only if $S(c_1,c_2)\in D$.

We will now prove that for every endogenous fact $f$ in $D$ and its corresponding fact $f'$ in $D'$ (i.e., the fact that was generated from $f$) it holds that $\shp(D,q',f)=\shp(D',q,f')$ (recall that $q'$ is one of $\cqrst$, $\cqnrsnt$, $\cqrsnt$). We start by proving the following. 

\begin{lemma}\label{lemma:yes_to_yes}
Let $E\subseteq D\endo$ and let $E'$ be the set of corresponding facts in $D'\endo$. If $(D\exo\cup E)\models q'$ then $(D'\exo\cup E')\models q$.
\end{lemma}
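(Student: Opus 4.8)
The plan is to prove this implication constructively: assuming $(D\exo\cup E)\models q'$, I would read off the witnessing assignment to $x$ and $y$ and use it to exhibit an explicit homomorphism witnessing $(D'\exo\cup E')\models q$. Concretely, I would fix a homomorphism $h'$ satisfying $q'$ over $D\exo\cup E$ and set $\val{a}:=h'(x)$ and $\val{b}:=h'(y)$. Since the middle atom is the positive atom $S(x,y)$ in each of $\cqrst$, $\cqnrsnt$ and $\cqrsnt$, and $S$ is exogenous, we have $S(\val{a},\val{b})\in D\exo$; in particular (also by assumption~(2) on the source instance) the facts $R(\val{a})$ and $T(\val{b})$ belong to $D$. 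I would then define the candidate homomorphism $h$ on $\var(q)$ by $h(x)=\val{a}$, $h(y)=\val{b}$, and $h(w)=\odot$ for every remaining variable $w$, and the goal is to show that $h$ maps every atom of $\posq(q)$ into $D'\exo\cup E'$ and no atom of $\negq(q)$ into $D'\exo\cup E'$.

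For the positive atoms the verification follows directly from the construction of $D'$. For a positive non-triplet atom $\alpha$, the fact $h(\alpha)$ is exactly the exogenous fact that was inserted into $R_\alpha^{D'}$ from $S(\val{a},\val{b})$, so it is present. For $\alpha_{x,y}$, if it is positive then $h(\alpha_{x,y})$ was inserted precisely because $S(\val{a},\val{b})\in D$. For a positive triplet atom $\alpha_x$ (resp.\ $\alpha_y$), the matching atom $R(x)$ (resp.\ $T(y)$) is positive in $q'$, so $R(\val{a})\in D\exo\cup E$ (resp.\ $T(\val{b})\in D\exo\cup E$); as $h(\alpha_x)$ is the image of $R(\val{a})$ and is endogenous iff $R(\val{a})$ is, it lies in $D'\exo\cup E'$.

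The main obstacle is the negative atoms, where the self-joins force a careful case analysis that the two standing assumptions are precisely designed to control. For a negative relation of $q$ distinct from those carrying the triplet, the construction leaves the relation empty, so any such atom is trivially falsified. For a negative triplet atom $\alpha_x$ (resp.\ $\alpha_y$), satisfaction of $\neg R(x)$ (resp.\ $\neg T(y)$) in $q'$ gives $R(\val{a})\notin D\exo\cup E$, whence the image fact is absent from $D'\exo\cup E'$. The delicate point is ruling out \emph{spurious} matches introduced by self-joins: another atom $\gamma$ over the same relation (also negative, by polarity-consistency) could a priori have $h(\gamma)$ coincide with an inserted fact. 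Here I would use that every inserted fact carries exactly one non-$\odot$ coordinate, drawn from $R^D$ (via $\alpha_x$) or from $T^D$ (via $\alpha_y$), whereas $h(\gamma)$ places $\val{a}$ on the $x$-coordinates, $\val{b}$ on the $y$-coordinates, and $\odot$ elsewhere; the disjointness of $R^D$ and $T^D$ then forces any match to agree with $\alpha_x$ (or $\alpha_y$) on the single occupied coordinate, collapsing $h(\gamma)$ to $h(\alpha_x)$ (or $h(\alpha_y)$), already shown absent. Finally, for $\alpha_{x,y}$ when it is negative, the hypothesis that $R_{\alpha_{x,y}}$ occurs only once guarantees this relation is populated solely by the complementation rule, and since $S(\val{a},\val{b})\in D$ the fact $h(\alpha_{x,y})$ was deliberately \emph{not} inserted. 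Assembling these cases shows that $h$ is a satisfying homomorphism, giving $(D'\exo\cup E')\models q$.
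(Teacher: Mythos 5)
Your proof is correct and takes essentially the same route as the paper's: extend the witnessing assignment for $q'$ to the $\odot$-padded mapping $h$, check the positive atoms directly from the construction of $D'$, and rule out negative-atom matches via polarity consistency (empty relations), the complementation rule for $\alpha_{x,y}$ together with its unique occurrence, and the disjointness of $\Dom(R^D)$ and $\Dom(T^D)$ to collapse any self-join match onto $h(\alpha_x)$ or $h(\alpha_y)$. The paper compresses your spurious-match analysis into the observation that such a match induces a variable-renaming homomorphism between $\beta$ and $\alpha_x$ fixing $x$, forcing $h$ to map $\alpha_x$ to the very same fact --- the same collapsing argument you spell out (only your phrase ``exactly one non-$\odot$ coordinate'' should read ``all non-$\odot$ coordinates carry one value from $\Dom(R^D)$ or $\Dom(T^D)$,'' since $x$ may occur more than once in $\alpha_x$, a detail on which nothing in your argument depends).
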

\begin{proof}
Since $(D\exo\cup E)\models q'$ there is a mapping $h$ from the variables of $q'$ to the domain of $D$ where $h(x)=\val{a}$ for some value $\val{a}$ from the domain of $R^D$ and $h(y)=\val{1}$ for some value $\val{1}$ from the domain of $T^D$ such that $h$ maps every positive atom and none of the negative atoms of $q'$ to a fact of $D\exo\cup E$. We claim that the mapping $h'$ such that $h'(x)=h(x)$, $h'(y)=h(y)$, and $h'(w)=\odot$ for the rest of the variables, maps every positive atom and none of the negative atoms of $q$ to facts of $D'\exo\cup E'$; hence $(D'\exo\cup E')\models q$.

As in the proof of Lemma~\ref{lemma:sjfbcq_hard}, from the construction of $D'$, we have that every positive atom of $q$ is mapped by $h'$ to a fact of $D'\exo\cup E'$. The relations associated with negative atom of $q$, except for $R_{\alpha_x}$, $R_{\alpha_y}$, and $R_{\alpha_{x,y}}$ are empty and do not affect the result of the query (recall that $q$ is polarity-consistent; hence, a relation that appears as a negative atom cannot appear as a positive atom as well). Moreover, the relation $R_{\alpha_{x,y}}$ contains the fact obtained from $\alpha_{x,y}$ using the mapping $h'$ is in $D'\exo$ if and only if $S(\val{a},\val{q})\in D\exo$.

It is only left to show that there is no negative atom of $q$ that is mapped by $h'$ to a fact in $D'\endo$. Let us assume, by way of contradiction, that a negative atom $\beta$ of $q$ is mapped by $h'$ to a fact $f$ in $D'\endo$, and assume, without loss of generality, that $f$ belongs to the relation $R_{\alpha_x}$ in $D'$. Since $q$ is polarity-consistent, the atom $\alpha_x$ is a negative atom as well. Moreover, in this case, the relation $R$ appears as a negative atom in $q'$. From the construction of $D'$, every endogenous fact $f$ in the relation $R_{\alpha_x}$ in $D'$ is obtained by a homomorphism from the variables of $\alpha_x$ to the constants of $f$ that maps the variable $x$ to a value from the domain of $R^D$, the variable $y$ to a value from the domain of $T^D$, and the rest of the variables to $\odot$. If $h'$ maps $\beta$ to $f$, then there is also a homomorphism from $\beta$ to $\alpha_x$ (and from $\alpha_x$ to $\beta$) where $x$ is mapped to itself. We conclude that $h'$ maps the atom $\alpha_x$ to the fact $f$. From the construction of $D'$, we have that $R(\val{a})\in E$, which is a contradiction to the fact that $R$ appears as a negative atom in $q'$ and $(D\exo\cup E)\models q'$.
\end{proof}

Next, we prove the following.
\begin{lemma}\label{lemma:not_to_not}
Let $E\subseteq D\endo$ and let $E'$ be the set of corresponding facts in $D'\endo$. If $(D\exo\cup E)\not\models q'$ then $(D'\exo\cup E')\not\models q$.
\end{lemma}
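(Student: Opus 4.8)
The plan is to prove the contrapositive: assuming $(D'\exo\cup E')\models q$, I will derive $(D\exo\cup E)\models q'$. So I fix a homomorphism $h'$ witnessing $(D'\exo\cup E')\models q$, meaning $h'$ maps every atom of $\posq(q)$ into $D'\exo\cup E'$ and maps no atom of $\negq(q)$ into $D'\exo\cup E'$. The whole argument rests on one structural fact about the constructed instance: every fact of $D'$ carries a value of $R^D$ in the coordinate of the variable $x$, a value of $T^D$ in the coordinate of $y$, and the dummy constant $\odot$ in every other coordinate.

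First I would pin down the image of $h'$. Since $q$ is safe, every variable $z\notin\{x,y\}$ occurs in some positive atom $\alpha$, and in $R_\alpha^{D'}$ the $z$-coordinate is always $\odot$; hence $h'(z)=\odot$. The variable $x$ likewise occurs in a positive atom (namely $\alpha_x$ if it is positive, or else some positive atom forced by safety), and in every such atom the $x$-coordinate ranges over values of $R^D$ by assumption~(2); thus $h'(x)$ is a value of $R^D$, and symmetrically $h'(y)$ is a value of $T^D$. I then set $h(x):=h'(x)$ and $h(y):=h'(y)$ and claim $h$ satisfies $q'$ over $D\exo\cup E$.

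The verification proceeds atom by atom. For the (always positive) atom $S(x,y)$ of $q'$ I use $\alpha_{x,y}$: if $\alpha_{x,y}$ is positive then $h'$ maps it into $R_{\alpha_{x,y}}^{D'}$, whose $\odot$-padded tuple over $(h'(x),h'(y))$ is present exactly when $S(h'(x),h'(y))\in D$; if $\alpha_{x,y}$ is negative then $h'$ must avoid the complemented relation $\overline{R_{\alpha_{x,y}}^D}$, and avoidance of the same $\odot$-padded tuple again yields $S(h'(x),h'(y))\in D$. As all $S$-facts are exogenous (assumption~(1)), either way $S(h(x),h(y))\in D\exo\subseteq D\exo\cup E$. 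For the $R$-atom of $q'$ I use $\alpha_x$: when $\alpha_x$ is positive, $h'$ lands in $R_{\alpha_x}^{D'}$, whose $\odot$-padded tuple over $h'(x)$ is present iff $R(h'(x))\in D\exo\cup E$, giving the positive atom; when $\alpha_x$ is negative, $h'$ avoids that tuple, and since $R_{\alpha_x}^{D'}$ receives no facts beyond the $R^D$-derived ones, avoidance forces $R(h'(x))\notin D\exo\cup E$, i.e.\ the negated atom $\neg R(x)$ is satisfied. The $T$-atom is handled symmetrically through $\alpha_y$.

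The hard part will be exactly the negative-atom direction of the last step: I must guarantee that the relation underlying a negated triplet atom contains in $D'$ \emph{precisely} the tuples coming from the corresponding $R^D$ (or $T^D$) facts and nothing spurious, for only then does ``$h'$ misses a fact'' translate back into ``the base fact is absent from $D\exo\cup E$''. This is where polarity-consistency of $q$ is essential (a relation appearing negatively never also appears positively, so it is never populated by the $S$-derived facts), together with the hypothesis that $R_{\alpha_{x,y}}$ occurs only once (so no self-join over the connecting relation injects extra tuples). I would isolate this as the key bookkeeping claim, check it against the three admissible shapes of $q'$, and then, combined with Lemma~\ref{lemma:yes_to_yes}, conclude the permutation-wise correspondence yielding $\shp(D,q',f)=\shp(D',q,f')$.
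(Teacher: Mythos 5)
Your overall strategy coincides with the paper's: take a homomorphism $h$ witnessing $(D'\exo\cup E')\models q$, use the construction of $D'$ and the uniqueness of $R_{\alpha_{x,y}}$ to extract an exogenous fact $S(h(x),h(y))\in D$, and then translate the images of $\alpha_x$ and $\alpha_y$ back into (non-)membership of $R(h(x))$ and $T(h(y))$ in $D\exo\cup E$. The genuine gap is your ``pin down the image'' step. The claim that in $R_\alpha^{D'}$ the coordinate of a variable $z\notin\{x,y\}$ ``is always $\odot$'' holds only when the relation $R_\alpha$ is populated from the single atom $\alpha$ --- that is, essentially in the self-join-free case. But Theorem~\ref{thm:bcqn_sj} permits self-joins on every relation except $R_{\alpha_{x,y}}$, so a relation may receive facts generated from several different atoms, and a position that holds $\odot$ in the facts generated from $\alpha$ may hold values of $R^D$ or $T^D$ in the facts generated from another atom over the same relation. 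Concretely, take $q()\dl R(x,z),R(x,y),S'(x,y),T'(y)$ with the triplet $(\alpha_x,\alpha_{x,y},\alpha_y)=(R(x,z),S'(x,y),T'(y))$; this $q$ is polarity-consistent and $S'$ occurs only once, so it is within the theorem's scope. For every $S(\val{a},\val{b})\in D$ the construction inserts the exogenous fact $R(\val{a},\val{b})$, generated from the non-triplet atom $R(x,y)$, into the very relation of $\alpha_x$; a homomorphism may then send $\alpha_x$ to this fact with $h(z)=\val{b}\neq\odot$, and nothing any longer ties $h$ to the $\odot$-padded facts that encode membership in $E$. Indeed, with $D$ consisting of an exogenous $S(\val{a},\val{b})$ and endogenous $R(\val{a})$, $T(\val{b})$, and with $E=\set{T(\val{b})}$, one checks that $(D\exo\cup E)\not\models\cqrst$ while $(D'\exo\cup E')\models q$ via $x\mapsto\val{a}$ and $y,z\mapsto\val{b}$. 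So the step does not merely lack justification; it is false for queries inside the stated scope, and no local bookkeeping can rescue it there.

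Two further remarks. First, your diagnosis of where the difficulty lies is inverted: you place it in the negative atoms and take polarity consistency to settle the positive ones, but negated relations are actually the safe side (being polarity-consistent, they never receive the $S$-derived ``connecting'' facts), whereas the failure above occurs on the positive side, where a triplet atom shares its relation with a non-triplet atom. Relatedly, your parenthetical claim that a negated $R_{\alpha_x}$ ``receives no facts beyond the $R^D$-derived ones'' fails in the paper's motivating example $\neg R(x),S(x,y),\neg R(y)$, where $R_{\alpha_x}=R_{\alpha_y}$ also receives the $T^D$-derived facts; what saves that case is the domain-disjointness assumption, not polarity consistency. Second, in fairness, the paper's own proof follows exactly your route and simply \emph{asserts} that $h$ maps $\alpha_x$, $\alpha_y$, and $\alpha_{x,y}$ to the canonical $\odot$-padded facts generated from those atoms; your attempt makes that hidden assumption explicit, and the configuration above is precisely the point at which both arguments need either an additional hypothesis (e.g., that the relations of all three triplet atoms occur nowhere else, or only in the $\alpha_x/\alpha_y$ pattern handled by domain disjointness) or a different construction.
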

\begin{proof}
Let us assume, by way of contradiction, that $(D'\exo\cup E')\models q$. Then, there is a mapping $h$ from the variables of $q$ to the domain of $D'$ that maps every positive atom and none of the negative atoms of $q$ to a fact in $D'\exo\cup E'$. In particular, the atom $\alpha_{x,y}$, is mapped to a fact of $D'\exo\cup E'$ if and only if it is positive. From the construction of $D'$ and the uniqueness of the atom (i.e., the fact that its relation does not appear in another atom of $q$), we have that if $\alpha_{x,y}$ is positive, then $h$ maps $\alpha_{x,y}$ to a fact of $D'\exo$ if and only there exists a fact $S(\val{a},\val{1})$ in $D$ such that $h(x)=\val{a}$ and $h(y)=\val{1}$. If $\alpha_{x,y}$ is negative, then $h$ does not map $\alpha_{x,y}$ to a fact of $D'\exo$ if and only if there exists a fact $S(\val{a},\val{1})$ in $D$ such that $h(x)=\val{a}$ and $h(y)=\val{1}$.

We claim that the mapping $h$ is such that every positive atom and none of the negative atoms of $q'$ is mapped to a fact of $D\exo\cup E$, which is a contradiction to the fact that $(D\exo\cup E)\not\models q$. We have already established, that there exists an exogenous fact $S(\val{a},\val{1})$ in $D$ and it holds that $h(x)=\val{a}$ and $h(y)=\val{1}$. It is only left to show that the fact $R(\val{a})$ belongs to $E$ if and only if $R$ occurs as a positive atom in $q'$, and, similarly, the fact $T(\val{1})$ belongs to $E$ if and only if $T$ occurs as a positive atom in $q'$.

If $\alpha_x$ is a positive atom, then there is a fact $f$ in the relation $R_{\alpha_x}$ in $D'$ obtained from $\alpha_x$ by mapping the variable $x$ to the value $\val{a}$ and the rest of the variables to the value $\odot$, such that $f\in E'$. In this case, the relation $R$ also appears in $q'$ as a positive atom and the fact $f'=R(\val{a})$ corresponding to $f$ appears in $E$. If $\alpha_x$ is a negative atom (in which case, the relation $R$ occurs in $q'$ as a negative atom), then the fact $f$ does not appear in $E'$ (or, otherwise, $D'\exo\cup E'$ will not satisfy $q$), which implies that the fact $f'$ does not appear in $E$. We can similarly show that the fact $T(\val{1})$ appears in $E$ if and only if its corresponding fact appears in $E'$, and that concludes our proof.
\end{proof}

Lemmas~\ref{lemma:yes_to_yes} and~\ref{lemma:not_to_not} imply that the fact $f$ changes the result of $q'$ in a permutation $\sigma$ of $D\endo$ if and only if the fact $f'$ changes the result of $q$ in a permutation $\sigma'$ of $D'\endo$. Since the total number of permutations of the facts in $D\endo$ and $D'\endo$ is equal, we conclude that indeed $\shp(D,q',f)=\shp(D',q,f')$.

\section{Details for Section \ref{sec:exo}}

We start by proving the hardness side of the theorem. Let $\signature_X$ be a schema and let $q$ be a self-join-free \CQneg that contains a non-hierarchical path. Similarly to the proof of Theorem~\ref{thm:bcqn_sj}, we construct a reduction from the problem of computing $\shp(D,q',f)$ where $q'$ is one of $\cqrst,\cqnrsnt$ or $\cqrsnt$ to that of computing $\shp(D,q,f)$. We again assume that in the input to the first problem all the facts of $S$ are exogenous, and for every fact $S(\val{a},\val{1})$ in $D$, both facts $R(\val{a})$ and $T(\val{1})$ are in $D$. We also assume that the set of values used in the facts of $R^D$ and the set of values used in the facts of $T^D$ are disjoint.

Since $q$ has a non-hierarchical path, there exist two atoms $\alpha_x$ and $\alpha_y$ in $q$ and two variables $x,y$, such that $R_{\alpha_x}\not\in X$ and $R_{\alpha_y}\not\in X$, the variable $x$ occurs in $\alpha_x$ but not in $\alpha_y$ and the variable $y$ occurs in $\alpha_y$ but not in $\alpha_x$. Moreover, there exists a path $x-v_1-\dots-v_n-y$ in the graph obtained from the Gaifman graph $\graph(q)$ of $q$ by removing every variable in $(\var(\alpha_x)\cup\var(\alpha_y))\setminus \set{x,y}$. The idea is the following. We use the atoms $\alpha_x$ and $\alpha_y$ to represent the atoms $(\neg)R(x)$ and $(\neg)T(y)$ in $q'$, respectively, and we use the non-hierarchical path to represent the connections between them (i.e., the atom $S(x,y)$). If $\alpha_x$ and $\alpha_y$ are both positive, the reduction is from the problem of computing $\shp(D,\cqrst,f)$, if both atoms are negative, the reduction is from computing $\shp(D,\cqnrsnt,f)$, and if one atom is positive while the other is negative, the reduction is from computing $\shp(D,\cqrsnt,f)$.

Formally, given an input database $D$ to the first problem, we build a database $D'$ in the following way. For every fact $f=R(\val{a})$ we assign the value $\val{a}$ to the variable $x$ in $\alpha_x$ and the value $\odot$ to the rest of the variables, and we add the corresponding fact $f'$ to the relation $R_{\alpha_x}$ in $D'$. The fact $f'$ will be endogenous if and only if $f$ is endogenous. Similarly, for every fact $f=T(\val{1})$ we assign the value $\val{1}$ to the variable $y$ in $\alpha_y$ and the value $\odot$ to the rest of the variables, and we add the corresponding fact $f'$ to the relation $R_{\alpha_y}$ in $D'$. Again, the fact $f'$ will be endogenous if and only if $f$ is endogenous.
Next, for every fact $S(\val{a},\val{1})$ in $D$ and atom $\alpha$ in $q$ that is not one of $\alpha_x$ or $\alpha_y$, we assign the value $\val{a}$ to the variable $x$, the value $\val{1}$ to the variable $y$, the value $\langle\val{a},\val{1}\rangle$ to the variables $v_1,\dots,v_n$ along the non-hierarchical path, and the value $\odot$ to the rest of the variables, and we add the corresponding exogenous fact to the relation $R_\alpha$ in $D'$ if we have not added this fact to $D'$ already. Note that $|D\endo|=|D\endo'|$.

Now, given the database $D'$ we construct a database $D''$ which will be the input to our problem in the following way. We first copy all the endogenous facts from $D'$ to $D''$. Then, for every relation $R$ in $D'$ corresponding to a positive atom of $q$, we copy every exogenous fact from $R^{D'}$ to $R^{D''}$.
For every relation $R$ in $D'$ corresponding to a negative atom of $q$, we add to $R^{D''}$ every exogenous fact over the domain of $D'$ if and only if it does not occur in $R^{D'}$ (i.e., $R^{D''}=\overline{R^{D'}}$). Note that since we did not change the endogenous facts, we have that $|D\endo|=|D'\endo|=|D''\endo|$.

We will now prove that for every endogenous fact $f_1$ in $D$ and its corresponding fact $f_2$ in $D''$ it holds that $\shp(D,q',f_1)=\shp(D'',q,f_2)$ (recall that $q'$ is one of $\cqrst$, $\cqnrsnt$, $\cqrsnt$). We start by proving the following. 

\begin{lemma}\label{lemma:yes_to_yes_ex}
Let $E\subseteq D\endo$ and let $E''$ be the set of corresponding facts in $D''\endo$. If $(D\exo\cup E)\models q'$ then $(D''\exo\cup E'')\models q$.
\end{lemma}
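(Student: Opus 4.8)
The plan is to prove the claim constructively, mirroring the ``yes-to-yes'' argument of Lemma~\ref{lemma:yes_to_yes}: from a homomorphism $h$ witnessing $(D\exo\cup E)\models q'$ I would build a single homomorphism $h''$ witnessing $(D''\exo\cup E'')\models q$. Write $\val{a}=h(x)$ and $\val{1}=h(y)$; since the $S$-atom is positive in each of $\cqrst,\cqnrsnt,\cqrsnt$ and all $S$-facts are exogenous by assumption, $h$ maps it into $D\exo$, so $S(\val{a},\val{1})\in D\exo$. I define $h''$ on $\var(q)$ by $h''(x)=\val{a}$, $h''(y)=\val{1}$, $h''(v_i)=\langle\val{a},\val{1}\rangle$ for every variable $v_i$ on the non-hierarchical path $x-v_1-\dots-v_n-y$, and $h''(w)=\odot$ for every other variable $w$. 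Because the path variables are, by the definition of a non-hierarchical path, disjoint from $\var(\alpha_x)\cup\var(\alpha_y)$ except at the endpoints $x,y$, this assignment agrees exactly with the value-assignments used when building $D'$ from $D$; hence for each atom $\alpha$ of $q$ the fact $h''(\alpha)$ is precisely the fact the construction inserts into $R_\alpha$. It then remains to check the two homomorphism conditions: (i) every $\alpha\in\posq(q)$ is mapped into $D''\exo\cup E''$, and (ii) no $\alpha\in\negq(q)$ is.

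For the positive atoms I would separate the representative atoms $\alpha_x,\alpha_y$ from the rest. If $\alpha\notin\{\alpha_x,\alpha_y\}$ is positive, then since $S(\val{a},\val{1})\in D$ the construction already placed $h''(\alpha)$ into $R_\alpha^{D'}$ as an exogenous fact, and exogenous facts of relations of positive atoms are copied verbatim into $D''\exo$. For $\alpha_x$ positive (i.e.\ $R$ occurring positively in $q'$), $h$ places $R(\val{a})\in D\exo\cup E$, and the corresponding fact $h''(\alpha_x)\in R_{\alpha_x}^{D'}$ is endogenous precisely when $R(\val{a})$ is; either way it is carried into $D''\exo\cup E''$ (into $D''\exo$ if exogenous, into $E''$ if selected). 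The case of $\alpha_y$ positive is symmetric, which establishes condition~(i).

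The delicate part, which I expect to be the main obstacle, is ruling out that a negative atom is satisfied in $D''\exo\cup E''$, since here the complementation step defining $D''$ interacts with the endogenous/exogenous status of facts. For a negative atom $\alpha\notin\{\alpha_x,\alpha_y\}$ the construction inserted $h''(\alpha)$ into $R_\alpha^{D'}$ as an exogenous fact; as $R_\alpha^{D''}$ complements $R_\alpha^{D'}$ on exogenous facts, $h''(\alpha)\notin D''\exo$, and since $R_\alpha^{D'}$ carries no endogenous facts for such $\alpha$, also $h''(\alpha)\notin E''$. The truly subtle case is $\alpha_x$ negative (symmetrically $\alpha_y$), i.e.\ $R$ occurring negatively in $q'$, where $h$ forces $R(\val{a})\notin D\exo\cup E$. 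Here I would invoke the standing assumption that $S(\val{a},\val{1})\in D$ implies $R(\val{a})\in D$: this guarantees $R(\val{a})\in D\endo\setminus E$ rather than $R(\val{a})\notin D$, so $h''(\alpha_x)$ is an endogenous fact of $D'$, copied into $D''\endo$ but not into $E''$ (as $R(\val{a})\notin E$), and, occurring in $R_{\alpha_x}^{D'}$, not introduced into $D''\exo$ by complementation; hence $h''(\alpha_x)\notin D''\exo\cup E''$. Without this assumption the complement would wrongly place $h''(\alpha_x)$ in $D''\exo$ and block the mapping, so this is exactly where the assumption earns its keep. Exhausting these cases verifies condition~(ii), and together with~(i) shows that $h''$ witnesses $(D''\exo\cup E'')\models q$, completing the proof.
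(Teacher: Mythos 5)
Your proof is correct and follows essentially the same route as the paper's: the identical witness mapping ($x\mapsto\val{a}$, $y\mapsto\val{1}$, the path variables to $\langle\val{a},\val{1}\rangle$, all other variables to $\odot$), verified atom by atom against the construction of $D'$ and $D''$. If anything, your case analysis for a negative $\alpha_x$ (resp.\ $\alpha_y$) is more explicit than the paper's, which only records the endogenous-status correspondence; you correctly pinpoint that the standing assumption that $S(\val{a},\val{1})\in D$ implies $R(\val{a}),T(\val{1})\in D$ is precisely what prevents the complementation step from wrongly placing $h''(\alpha_x)$ into $D''\exo$.
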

\begin{proof}
Since $(D\exo\cup E)\models q'$ there is a mapping $h$ from the variables of $q'$ to the domain of $D$ where $h(x)=\val{a}$ for some value $\val{a}$ from the domain of $R^D$ and $h(y)=\val{1}$ for some value $\val{1}$ from the domain of $T^D$ such that $h$ maps every positive atom and none of the negative atoms of $q'$ to a fact of $D\exo\cup E$. We claim that the mapping $h'$ such that $h'(x)=\val{a}$, $h'(y)=\val{1}$, $h'(z)=\langle\val{a},\val{1}\rangle$ for every variable along the non-hierarchical path, and $h'(w)=\odot$ for the rest of the variables, maps every positive atom and none of the negative atoms of $q$ to facts of $D''\exo\cup E''$; hence $(D''\exo\cup E'')\models q$.

When considering the fact $S(\val{a},\val{1})$ in the construction of $D'$, we have created a mapping $h'$ from the variables of $q$ such that $h'(x)=\val{a}$, $h'(y)=\val{1}$, $h'(z)=\langle\val{a},\val{1}\rangle$ for every variable along the non-hierarchical path, and $h'(w)=\odot$ for the rest of the variables, and we have added all the resulting facts (associated with atoms that are not one of $\alpha_x$ or $\alpha_y$) to $D'$ (as exogenous facts). When constructing $D''$ we have removed every such fact if it was generated from a negative atom of $q$. Hence, $h'$ is a mapping from the the variables of $q_{\setminus\set{\alpha_x,\alpha_y}}$ (which is the query obtained from $q$ by removing the atoms $\alpha_x$ and $\alpha_y$) to the domain of $D''$ such that every positive atom and none of the negative atoms of $q$ appears as a fact in $D\exo''$. Moreover, it holds that $R(\val{a})\in D\endo$ if and only $h'(\alpha_x)\in D''\endo$ and similarly $T(\val{1})\in D\endo$ if and only if $h'(\alpha_y)\in D''\endo$. There, $(D''\exo\cup E'')$ indeed satisfies $q$ and that concludes our proof.
\end{proof}

Next, we prove the following.
\begin{lemma}\label{lemma:not_to_not_ex}
Let $E\subseteq D\endo$ and let $E''$ be the set of corresponding facts in $D''\endo$. If $(D\exo\cup E)\not\models q'$ then $(D''\exo\cup E'')\not\models q$.
\end{lemma}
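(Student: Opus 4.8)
The plan is to prove the contrapositive companion of Lemma~\ref{lemma:yes_to_yes_ex}: once both lemmas are in hand, they give $(D\exo\cup E)\models q'$ if and only if $(D''\exo\cup E'')\models q$, and this satisfaction-preserving correspondence (together with $|D\endo|=|D''\endo|$) is exactly what forces $\shp(D,q',f_1)=\shp(D'',q,f_2)$. I would argue by contradiction, in the style of the self-join proof of Lemma~\ref{lemma:not_to_not}: assume $(D\exo\cup E)\not\models q'$ and yet $(D''\exo\cup E'')\models q$, fix a homomorphism $h$ from $q$ into $D''\exo\cup E''$ (mapping every positive atom to a fact of the active set and every negative atom outside it), and distill from $h$ a homomorphism witnessing $(D\exo\cup E)\models q'$. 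Here $q'$ is whichever of $\cqrst$, $\cqnrsnt$, $\cqrsnt$ was selected according to the polarities of $\alpha_x$ and $\alpha_y$.

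The heart of the matter is to recover from $h$ a single exogenous fact $S(\val{a},\val{1})\in D$ together with $h(x)=\val{a}$ and $h(y)=\val{1}$. First I would record the effect of the complement step on the exogenous ($S$-representing) atoms lying along the non-hierarchical path $x-v_1-\dots-v_n-y$: for such an atom $\gamma$, satisfaction of $q$ on $D''$ forces $h(\gamma)\in R_\gamma^{D'}$ \emph{regardless of the polarity of $\gamma$} --- if $\gamma$ is positive because positive relations are copied verbatim ($R_\gamma^{D''}=R_\gamma^{D'}$), and if $\gamma$ is negative because $R_\gamma^{D''}=\overline{R_\gamma^{D'}}$, so over $\Dom(D')$ the condition $h(\gamma)\notin R_\gamma^{D''}$ is equivalent to $h(\gamma)\in R_\gamma^{D'}$. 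By construction every tuple of $R_\gamma^{D'}$ was generated from some $S(\val{a},\val{1})$ and carries the composite tag $\langle\val{a},\val{1}\rangle$ at each path-variable position, and distinct pairs yield distinct tags; hence walking the path from the atom witnessing the edge $x-v_1$ to the atom witnessing $v_n-y$ and equating the repeated occurrences of each $h(v_i)$ pins down a single pair $(\val{a},\val{1})$ with $h(x)=\val{a}$, $h(v_i)=\langle\val{a},\val{1}\rangle$, and $h(y)=\val{1}$. Moreover the mere presence of these path facts in $D'$ guarantees $S(\val{a},\val{1})\in D\exo$, which supplies the $S$-atom of $q'$.

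It then remains to verify the $R$- and $T$-literals of $q'$ under $x\mapsto\val{a}$, $y\mapsto\val{1}$ on $D\exo\cup E$, and for this I would reuse the correspondence from Lemma~\ref{lemma:not_to_not} almost verbatim. The relation $R_{\alpha_x}$ is built directly from the $R$-facts of $D$ with endogenous/exogenous labels preserved, so the image $t_{\val{a}}$ of $\alpha_x$ with $x\mapsto\val{a}$ and the remaining variables $\mapsto\odot$ lies in $D''\exo\cup E''$ exactly when $R(\val{a})\in D\exo\cup E$. Since the polarity of $\alpha_x$ in $q$ matches that of $R$ in $q'$, the literal of $h$ at $\alpha_x$ translates into the required $R$-literal: if $\alpha_x$ is positive, satisfaction forces $h(\alpha_x)=t_{\val{a}}$ and hence $R(\val{a})\in D\exo\cup E$; if $\alpha_x$ is negative, safe negation (every variable of $\alpha_x$ also occurs positively, where the construction pins it to $\odot$) forces $h(\alpha_x)=t_{\val{a}}$ as well, so $h(\alpha_x)\notin D''\exo\cup E''$ yields $R(\val{a})\notin D\exo\cup E$. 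The same reasoning on $\alpha_y$ settles the $T$-literal, and so $x\mapsto\val{a}$, $y\mapsto\val{1}$ witnesses $(D\exo\cup E)\models q'$ --- the desired contradiction.

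I expect the main obstacle to be the path-tracing of the second paragraph. In the self-join setting of Lemma~\ref{lemma:not_to_not} the link between $x$ and $y$ was carried by a single atom $\alpha_{x,y}$ whose relation occurs only once, so a one-shot argument sufficed; here that atom is replaced by an entire path, and I must show that an \emph{arbitrary} homomorphism $h$ is forced to follow the composite-tag encoding all the way along it. The delicate points are ruling out that some path variable escapes to $\odot$ or to a tag inconsistent with its neighbours, handling positive and negative path atoms uniformly through the complement identity, and coping with path atoms that carry variables beyond the two consecutive path variables (which the construction again sends to $\odot$). Once the path is shown to be rigid in this sense, the recovery of the unique $S(\val{a},\val{1})$ and the literal bookkeeping for $\alpha_x,\alpha_y$ reduce to the arguments already used in Lemmas~\ref{lemma:yes_to_yes_ex} and~\ref{lemma:not_to_not}.
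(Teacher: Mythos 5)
Your proposal follows essentially the same route as the paper's proof: argue by contradiction from a homomorphism $h$ witnessing $(D''\exo\cup E'')\models q$, use the complement construction to force $h$ into the tag-respecting facts of $D'$ on the atoms along the non-hierarchical path regardless of their polarity, equate the values of consecutive path variables to pin down a single pair $(\val{a},\val{1})$ with $S(\val{a},\val{1})\in D\exo$ and $h(x)=\val{a}$, $h(y)=\val{1}$, and then translate the literals at $\alpha_x$ and $\alpha_y$ back into the $R$- and $T$-literals of $q'$ to contradict $(D\exo\cup E)\not\models q'$. This is exactly the structure of the paper's argument, so your plan is correct and matches it.
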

\begin{proof}
Assume, by way of contradiction, that $D''\exo\cup E''$ satisfies $q$. Hence, there is a mapping $h$ from the variables of $q$ to the domain of $D''$ such that every positive atom and none of the negative atoms of $q$ is mapped into a fact in $D''\exo\cup E''$. 
We now look at the non-hierarchical path $x-v_1-\dots-v_n-y$ in the Gaifman graph of $q$. From the construction of $D'$, every fact $f\in D'$ in a relation corresponding to an atom $\alpha$ that uses both $x$ and $v_1$ is obtained from $\alpha$ by mapping the variable $x$ to some value $c_1$ and the variable $v_1$ to some value $\langle c_1, c_2 \rangle$ such that $S(c_1,c_2)$ is in $D$. If $\alpha$ is positive, then $D''$ also contains only such facts, and if $\alpha$ is negative, then $D''$ does not contain only such facts. Using an atom $\alpha'$ containing the variables $y$ and $v_n$, we can show, in a similar way, that $h(v_n)=\langle d_1,d_2 \rangle$ for some values $d_1,d_2$ such that $S(d_1,d_2)$ is in $D$. Finally, every two consecutive variables $v_i,v_{i+1}$ in the non-hierarchical path occur together in at least one atom $\alpha$ of $q$, and from the construction of $D''$, it holds that if $\alpha$ is positive, then $R_\alpha$ contains only facts where both $v_i$ and $v_{i+1}$ are mapped to the same value, and if $\alpha$ is negative, then these are the only facts that are not in $R_\alpha$; hence, we have that $h(v_i)=h(v_{i+1})$ and we conclude that $c_1=d_1$ and $c_2=d_2$, and the mapping $h$ assigns some value $\langle\val{a},\val{1}\rangle$ to every variable along the non-hierarchical path, such that $S(\val{a},\val{1})$ is in $D$.

When constructing the database $D'$, we have only assigned the value $\langle\val{a},\val{1}\rangle$ to variables if there exists an exogenous fact $S(\val{a},\val{1})$ in $D$. Hence, we have established that such a fact exists in $D$. Moreover, it holds that $R(\val{a})\in E$ if and only if $h(\alpha_x)\in E''$ and similarly $T(\val{1})\in E$ if and only if $h(\alpha_y)\in E''$. In all cases, the restriction of $h$ to the variables $x$ and $y$ maps every positive atom and none of the negative atoms of $q'$ to $D\exo\cup E$; thus, $(D\exo\cup E)\models q'$, which is a contradiction to our assumption.
\end{proof}

The remainder of the proof is rather straightforward based on these two lemmas. The total number of permutations of the facts in $D\endo$ and $D''\endo$ is equal, and the lemmas prove that the number of permutations where $f$ changes the result of $q'$ in $D$ is equal to the number of permutations where it changes the result of $q$ in $D'$; hence, we conclude that $\shp(D,q',f)=\shp(D',q,f)$.

Next, we provide the missing proofs for the lemmas used in the proof of the positive side of Theorem~\ref{thm:exo}. First, we prove that we can replace every negated atom of $q$ corresponding to an exogenous relation of $D$ by a positive atom and the corresponding relation in $D$ by its complement relation, without affecting the Shapley value.

\def\Rbarlemma{
Let $q$ be a self-join-free \CQneg, and let $\alpha\in(\at\exo(q)\cap\negq(q))$. Then, computing $\shp(D,q,f)$ can be reduced to computing $\shp(D',q',f)$, where $q'$ is obtained from $q$ by substituting $\alpha$ with $\overline{\alpha}$, and $D'$ is obtained from $D$ by substituting $R_\alpha^D$ with $\overline{R_\alpha^D}$.}
\begin{lemma}\label{lemma:rbar}
\Rbarlemma
\end{lemma}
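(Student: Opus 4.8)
The plan is to prove the \emph{equality} $\shp(D,q,f)=\shp(D',q',f)$ for every endogenous fact $f$, from which the (trivially polynomial-time) reduction follows. The cleanest route is to show that $(D,q)$ and $(D',q')$ induce literally the same cooperative game. First I would record two structural observations. Since $\alpha\in\at\exo(q)$, the relation $R_\alpha$ consists solely of exogenous facts, so replacing $R_\alpha^D$ by $\overline{R_\alpha^D}$ modifies only the exogenous part of the database; in particular the player set is unchanged, $D\endo=D'\endo$. Moreover $q$ and $q'$ have the same set of variables, and because $q$ is self-join-free, $\alpha$ is the \emph{unique} atom over $R_\alpha$, so every atom other than $\alpha$ refers to a relation interpreted identically in $D$ and $D'$.

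The heart of the argument is the claim that for every $E\subseteq D\endo$,
\[(D\exo\cup E)\models q \quad\Longleftrightarrow\quad (D'\exo\cup E)\models q'.\]
To prove it I would fix a candidate homomorphism $h$ on $\var(q)=\var(q')$ and compare its two satisfaction conditions atom by atom. For every atom other than $\alpha$ the required condition is literally the same in both settings, since the underlying relation is unchanged and, for the positive atoms, the same $E$ is added. It therefore suffices to match the single differing condition: in $(D,q)$ the negated atom $\alpha=\neg R_\alpha(\vec t)$ demands $h(\vec t)\notin R_\alpha^D$ (using exogeneity of $R_\alpha$, so $E$ contributes no $R_\alpha$-fact and the truth of $\alpha$ is independent of $E$), whereas in $(D',q')$ the positive atom $\overline\alpha=R_\alpha(\vec t)$ demands $h(\vec t)\in\overline{R_\alpha^D}$. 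By the definition of the complement over $\Dom(D)$, these are equivalent precisely when $h(\vec t)$ is a tuple over $\Dom(D)$.

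To close this last gap I would invoke safe negation: every variable occurring in $\vec t$ also occurs in a positive atom of $q$, so any $h$ satisfying the positive atoms maps those variables into $\Dom(D)$; assuming, without loss of generality, that the constants appearing in $q$ lie in $\Dom(D)$ as well, the whole tuple $h(\vec t)$ is over $\Dom(D)$ and the two conditions on $\alpha$ coincide. This establishes the displayed equivalence, and taking $E=\emptyset$ gives $q(D\exo)=q'(D'\exo)$. Consequently the wealth functions $v(E)=q(D\exo\cup E)-q(D\exo)$ and $v'(E)=q'(D'\exo\cup E)-q'(D'\exo)$ agree on all subsets of $D\endo$, and since both games have the same player set $D\endo$, the Shapley value of every fact is the same, i.e.\ $\shp(D,q,f)=\shp(D',q',f)$. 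Finally I would note that $\overline{R_\alpha^D}$ has size $|\Dom(D)|^{\arity(R_\alpha)}$, which is polynomial in data complexity, so the transformation $(D,q)\mapsto(D',q')$ is a genuine polynomial-time reduction. The main obstacle I anticipate is exactly the domain/constant bookkeeping in the third paragraph: because the complement is taken relative to $\Dom(D)$, one must argue carefully (through safe negation together with the harmless assumption on $q$'s constants) that no homomorphism can satisfy $\alpha$ in one instance while violating $\overline\alpha$ in the other merely because of a value lying outside $\Dom(D)$.
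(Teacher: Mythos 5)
Your proposal is correct and follows essentially the same route as the paper's proof: both establish that $D\endo=D'\endo$ and that $(D\exo\cup E)\models q$ if and only if $(D'\exo\cup E)\models q'$ for every $E\subseteq D\endo$ (via the observation that the negated exogenous atom $\alpha$ is falsified in $D$ exactly when $\overline{\alpha}$ is witnessed in $\overline{R_\alpha^D}$, all other atoms being untouched), and then conclude equality of the Shapley values. Your treatment is in fact slightly more careful than the paper's terse argument---explicitly invoking safe negation and the assumption on $q$'s constants to justify that the relevant tuples lie in $\Dom(D)$, and noting that $|\overline{R_\alpha^D}|$ is polynomial so the reduction runs in polynomial time---but these are details the paper leaves implicit rather than a different proof strategy.
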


\begin{proof} 
Note that the difference between $D$ and $D'$ is restricted to the exogenous facts; thus, we have that $D\endo=D'\endo$. Moreover, for every $E\subseteq D\endo$, it holds that $(D\exo \cup E) \models q$ if and only if $(D'\exo\cup E)\models q'$. This is rather straightforward from the construction of $D'$. If $(D\exo \cup E) \models q$, then there is a homomorphism $h$ from the variables of $q$ to the constants of $D$ that does not map $\alpha$ to any fact of $R_{\alpha}^D$ (since $\alpha$ is a negated atom); hence, it maps $\overline{\alpha}$ to a fact of $\overline{R_{\alpha}^D}$. Every other atom $\beta$ of $q$ also occurs in $q'$ and we have not changed the relations corresponding to other atoms; thus, $h$ maps $\beta$ to a fact of $(D\exo \cup E)$ if and only if it maps $\beta$ to a fact of $(D'\exo\cup E)$, and we conclude that $h$ maps every positive atom and none of the negative atoms of $q'$ to a fact of $(D'\exo\cup E)$.

The proof of the second direction is very similar. If $(D'\exo \cup E) \models q'$, then there is a homomorphism $h$ from the variables of $q'$ to the constants of $D'$ that maps $\overline{\alpha}$ to a fact of $\overline{R_{\alpha}^D}$; hence, it does not map $
\alpha$ to a fact of $R_{\alpha}^D$. Again, since the rest of the atoms are unchanged in $q'$, we conclude that $h$ maps every positive atom and none of the negative atoms of $q$ to a fact of $(D\exo\cup E)$.
We conclude that the total number of permutations is equal in both databases, and the number of permutations where $f$ changes the query result is equal as well; hence, $\shp(D,q,f)=\shp(D',q',f)$.
\end{proof}

Next, we prove that we can combine all the exogenous atoms of $q$ in a connected component of $\exograph(q)$ into a single exogenous atom, without affecting the Shapley value. Recall that from now on we assume (based on Lemma~\ref{lemma:rbar}) that every atom of $q$ corresponding to an exogenous relation of $D$ is positive.

\begin{replemma} {\ref{lemma:joinexo}}
\joinlemma  
\end{replemma}

\begin{proof}
Let $C$ be a connected component of $\exograph(q)$, and let the set $\set{\alpha_1,...,\alpha_k}$ be the set of (exogenous) atoms in $C$. 
Let $q'$ be the query obtained from $q$ by replacing all the atoms of $C$ with a single atom $\alpha_C$, such that $\var(\alpha_C)=\cup_{i\in\set{1,\dots,k}}\var(\alpha_i)$. Observe that since $C$ is a connected component of $\exograph(x)$, none of the exogenous variables occurring in $\alpha_C$ also occurs in another atom of $q'$. Let $D'$ be the database obtained from $D$ by replacing the exogenous relations $R_{\alpha_1},...,R_{\alpha_k}$ with a single exogenous relation $R_{\alpha_C}$ consisting of the set of answers to the query $q_C(\vec{x})\dl \alpha_1,\dots,\alpha_k$ on the database $D$ (where every variable of $\alpha_1,\dots,\alpha_k$ occurs in $\vec{x}$). That is, the facts in the relation $R_{\alpha_C}^{D'}$ are obtained by an inner join between the relations $R_{\alpha_1}^D,\dots,R_{\alpha_k}^D$, where the relations are joined according to the variables of the corresponding atoms.

Since we have only changed the exogenous relations in $D$ to obtain $D'$, we have that $D'\endo = D\endo$. We now prove that for every $E\subseteq D\endo$ it holds that $(D\exo \cup E) \models q$ if and only if $(D'\exo \cup E)\models q'$, which implies that $\shp(D,q,f)=\shp(D',q',f)$ for every endogenous fact $f$.
Let $E\subseteq D\endo$. If $(D\exo \cup E)\models q$, then there is a homomorphism $h$ from $q$ to $D\exo\cup E$. Note that the only negative atoms of $q$ are atoms corresponding to non-exogenous relations; hence, if $h$ does not map any negative atom of $q$ to a fact of $D\exo \cup E$, it also does not map any negative atom of $q'$ to a fact of $D'\exo \cup E$. As for the positive atoms, the homomorphism $h$ maps every positive atom of $q$, and, in particular, the atoms $\alpha_1,\dots,\alpha_k$ of the connected component $C$, to facts of $D\exo \cup E$. Assume that $h(v)=c_v$ for every variable in $\alpha_1,\dots,\alpha_k$. By the definition of $q_C$, the tuple $(c_{v_1},\dots,c_{v_n})$ (where $v_1,\dots,v_n$ are the variables occurring in $\alpha_1,\dots,\alpha_k$) is an answer to $q_C$ and appears in $R_{\alpha_C}^{D'}$. Hence, the atom $\alpha_C$ in $q'$ is mapped to a fact of $D'\exo$. Every positive atom of $q$ that is not one of $\alpha_1,\dots,\alpha_k$ also occurs in $q'$ and it is mapped to a fact of $D\exo \cup E$ that also appears in $D'\exo \cup E$; hence, we conclude that $h$ is a homomorphism from $q'$ to $D'\cup E$.

Similarly, if we assume that $(D'\exo \cup E)\models q'$, then there is a homomorphism $h$ from $q'$ to $D'\exo \cup E$. Every atom $\alpha\in (\at(q)\setminus\set{\alpha_1,\dots,\alpha_k})$ occurs in both $q$ and $q'$, and the relation $R_{\alpha}$ is the same in $D$ and $D'$; thus, every such $\alpha$ is mapped to a fact of $D\exo\cup E$ if and only if it is mapped to a fact of $D'\exo\cup E$. Since every fact in $R_{\alpha_c}^{D'}$ is an answer to $q_C$ on the database $D$, if the atom $\alpha_C$ of $q'$ is mapped by $h$ to a fact $R_{\alpha_C}(c_{v_1},\dots,c_{v_n})$ in $D'\exo$, then every atom $\alpha_i$ in $q$ is mapped by $h$ to a fact $R_{\alpha_i}(c_{v_{i_1}},\dots c_{v_{i_k}})$ in $D$ where $\set{v_{i_1},\dots,v_{i_k}}$ is the set of variables occurring in $\alpha_i$, as if such a fact did not exists, we would never obtain the tuple $(c_{v_1},\dots,c_{v_n})$ as an answer to $q_C$ on $D$. Hence, we have that $(D\exo\cup E)\models q$, as evidenced by $h$. 

The above argument holds for every connected component of $\exograph(q)$; hence, we can replace every connected component with a single atom in $q$ and change the database $D$ accordingly. This will result in a query $q'$ where every exogenous variable occurs exactly once and that concludes our proof. We finish this proof by showing that $q'$ does not have a non-hierarchical path.

Let us assume, by way of contradiction, that the query $q'$ has a non-hierarchical path induced by the atoms $\alpha_x$ and $\alpha_y$. Hence, in the Gaifman graph of $q'$, there is a path $x-v_1-\dots-v_n-y$ that does not pass through the variables of $\alpha_x$ and $\alpha_y$. We claim that there is also a non-hierarchical path induced by $\alpha_x$ and $\alpha_y$ in $q$, in contradiction to the fact that $q$ does not have a non-hierarchical path. Let $v_i,v_{i+1}$ be two consecutive variables in the path. If $v_i$ and $v_{i+1}$ occur together in a non-exogenous atom of $q'$, then they occur together in the same non-exogenous atom of $q$, and $v_i,v_{i+1}$ are also connected in the Gaifman graph of $q$. Otherwise, $v_i,v_{i+1}$ occur together in an exogenous atom of $q'$. This exogenous atom represents a connected component $\set{\alpha_1,\dots,\alpha_k}$ in $\exograph(q)$. Let $\alpha_j$ be the atom where the variable $v_i$ occurs and let $\alpha_r$ be the variable where the variable $v_{i+1}$ occurs. By the definition of $\exograph(q)$, there is a path $u_1-\dots-u_m$ between $\alpha_j$ and $\alpha_r$ such that $u_1\in \alpha_j$, $u_m\in\alpha_r$ and every $u_t$ is an exogenous variable (hence, it does not occur in $\alpha_x$ or $\alpha_y$). We conclude that the Gaifam graph of $q$ contains the path $v_i-u_1-\dots-u_m-v_{i+1}$ that does not pass through the variables of $\alpha_x$ or $\alpha_y$. Therefore, there is a non-hierarchical path between $x$ and $y$ in $q$, and that concludes our proof.
\end{proof}

In the next lemma we will use the following notation. For an atom $\alpha\in\at(q)$, a variable $v\in\var(\alpha)$ and a fact $f\in R_{\alpha}^D$, we denote by $f[v]$ the value of the fact $f$ in the attribute of $R_{\alpha}^D$ corresponding to the position of $v$ in $\alpha$. For example, for $\alpha=R(x,y,z)$, we denote by $f[y]$ the value of the fact $f$ in the second attribute of $R^D$.

\begin{replemma}{\ref{lemma:removeexo}}
\matchingAtomLemma
\end{replemma}

\begin{proof}
As shown in Lemma~\ref{lemma:joinexo}, we can reduce the problem of computing $\shp(D,q,f)$, given $D$ and $f$, to that of computing $\shp(D',q',f)$ where $q'$ is such that every exogenous variable occurs in a single atom of $q'$. This means that every connected component of $\exograph(q')$ contains a single atom. Moreover, we have that $q'$ does not have a non-hierarchical path. For convenience, from now on, we refer to the query $q'$ simply as $q$ and to the database $D'$ simply as $D$, as we do not rely on the original query and database in our proof. We show that we can further reduce the problem of computing $\shp(D,q,f)$ to that of computing $\shp(D',q',f)$, where for every $\alpha\in\at\exo(q')$ there is $\alpha'\in\at\nexo(q')$ such that $\var(\alpha)=\var(\alpha')$. We will do that by first removing the exogenous variables of $q$ and then adding to each exogenous atom all the variables occurring in the non-exogenous atom that ``contains'' it.

Let $\alpha\in\at\exo(q)$. Lemma~\ref{lemma:ccsubset} implies that there exists $\beta\in\at\nexo(q)$ such that $\var\nexo(\alpha)\subseteq\var(\beta)$. We generate the query $q'$ in two steps. First, we remove from $\alpha$ every exogenous variable, and obtain a new atom $\alpha'=R_{\alpha'}(x_1,\dots,x_n)$, where $x_1,\dots,x_n$ are the non-exogenous variables in $\alpha$. Then, we replace the relation $R_{\alpha}$ in $D$ with the relation $R_{\alpha'}$ consisting of the set of answers to the query $q(x_1,\dots,x_n)\dl \alpha$ on $D$. In the next step, we obtain an atom $\alpha''$ by adding to $\alpha'$ every variable in $\var(\beta)\setminus \var\nexo(\alpha)$. That is, if $\set{v_1,\dots,v_m}$ is the set of variables occurring in $\beta$ but not in $\alpha$, then $\alpha''=R_{\alpha''}(x_1,\dots,x_n,v_1,\dots,v_m)$. Then, we obtain the relation $R_{\alpha''}^D$ from $R_{\alpha'}^D$ in the following way. From every $f=R_{\alpha'}(c_1,\dots,c_n)$ in  $R_{\alpha'}^D$, we generate $|\Dom(D)|^m$ facts of the form $f=R_{\alpha''}(c_1,\dots,c_n,d_1,\dots,d_m)$, where $d_1,\dots,d_m\in\Dom(D)$, and add all of them to $R_{\alpha''}^D$. We denote by $q'$ the query obtained from $q$ by replacing the atom $\alpha$ with the atom $\alpha''$, and by $D'$ the database obtained from $D$ by replacing the relation $R_\alpha$ with the relation $R_{\alpha''}$. Note that $D\endo=D'\endo$.

We now prove that for every $E\subseteq D\endo$ it holds that $(D\exo \cup E) \models q$ if and only if $(D'\exo \cup E) \models q'$. 
Let  $E\subseteq D\endo$ such that $(D\exo \cup E) \models q$. Thus, there is a homomorphism $h$ from $q$ do $D\exo\cup E$. In particular, $h$ maps the atom $\alpha$ to a fact $f\in R_{\alpha}^D$. Assume that $h(v)=c_v$ for every variable $v$ of $q$. Hence, for every non-exogenous variable $x_i$ in $\alpha$ we have that $h(x_i)=c_{x_i}$ and for every variable $v_j$ in $\var(\alpha')\setminus \var\nexo(\alpha)$ we have that $h(v_j)=c_{v_j}$. From the construction of $D'$, if $h$ maps the atom $\alpha$ to a fact $f$ in $R_\alpha^D$, there is a fact $R_{\alpha''}(c_{x_1},\dots,c_{x_n},c_{v_1},\dots,c_{v_m})$ in $D'$ and $h$ maps the atom $\alpha''$ in $q'$ to this fact. Every other atom $\beta$ of $q'$ also appears in $q$ and we have not changed any other relation of $D$; hence, $h$ maps $\beta$ to a fact of $D\exo\cup E$ if and only if it maps $\beta$ to a fact of $D'\exo\cup E$. Therefore, $h$ is a homomorphism from $q'$ to $D'\exo \cup E$, as we conclude that $(D\exo'\cup E)\models q'$.

Next, let $E\subseteq D\endo$ such that $(D'\exo \cup E) \models q'$, as evidenced by a homomorphism $h$. Again, every atom $\beta$ of $q'$ that is not $\alpha''$ also appears in $q$, and $h$ maps $\beta$ to a fact of $D\exo\cup E$ if and only if it maps $\beta$ to a fact of $D'\exo\cup E$.
As for the atom $\alpha''$, the homomorphism $h$ maps it to a fact $f'\in R_{\alpha''}^{D'}$. Assume that $f'=R_{\alpha''}(c_{x_1},\dots,c_{x_n},c_{v_1},\dots,c_{v_m})$. From the construction of $D'$, we have that there exists a fact $f$ in $R_\alpha^D$ such that $f[x_i]=c_{x_i}$ for every $i\in\set{1,\dots,n}$. Assume that the exogenous variables in $\alpha$ are $u_1,\dots,u_r$ and $f[u_j]=d_j$ for every $j\in\set{1,\dots,r}$. Then, if we extend the mapping $h$ to a mapping $h'$ such that $h'(x)=h(x)$ for every non-exogenous variable $x$ in $\alpha$ and $h'(u_j)=d_j$ for every exogenous variable $u_j$ in $\alpha$, then $h'$ will map the atom $\alpha$ in $q$ to the fact $f$. Note that this extension does not affect any other atom of $q$ since the exogenous variables of $\alpha$ do not occur in any other atom of $q$. Hence, the mapping $h'$ is a homomorphism from $q$ to $D\exo \cup E$, and $(D\exo \cup E) \models q$.

We can repeat this process for every exogenous atom of $q$ and obtain a query $q'$ satisfying the property of the lemma, such that $\shp(D,q,f)=\shp(D',q',f)$ for every $f\in D\endo$. Finally, we prove that $q'$ does not have a non-hierarchical path. Let us assume, by way of contradiction, that $q'$ has a non-hierarchical path induced by the atoms $\alpha_x$ and $\alpha_y$. Hence, in the Gaifman graph of $q'$, there is a path $x-v_1-\dots-v_n-y$ that does not pass through the variables of $\alpha_x$ and $\alpha_y$. We claim that the same path exists in the Gaifman graph of $q$, in contradiction to the fact that $q$ does not have a non-hierarchical path. Let $v_i,v_{i+1}$ be two consecutive variables in the path. If $v_i$ and $v_{i+1}$ occur together in a non-exogenous atom of $q'$, then they occur together in the same non-exogenous atom of $q$, and $v_i,v_{i+1}$ are also connected in the Gaifman graph of $q$. Otherwise, $v_i,v_{i+1}$ occur together in an exogenous atom of $q'$. Since there are no exogenous variables in $q'$, both $v_i$ and $v_{i+1}$ occur in non-exogenous atoms of $q$. Moreover, since for every exogenous atom $\alpha$ in $q$ there exists a non-exogenous atom $\alpha'$ of $q$ such that $\var(\alpha)=\var(\alpha')$, we again conclude that $v_i$ and $v_{i+1}$ occur together in the same non-exogenous atom of $q$, and $v_i,v_{i+1}$ are also connected in the Gaifman graph of $q$.
\end{proof}

\section{Details for Section~\ref{sec:approx}}

We now prove that the ``gap property'' does not hold for \CQneg{s}.

\begin{reptheorem}{\ref{thm:negation-no-gap}}
\thmgap
\end{reptheorem}
\begin{proof}
Since $q$ is satisfibale, there exists a minimal database $D$ such that $D\models q$. Now, we start adding facts to the relations corresponding to negated atom of $q$, one by one. Clearly, at some point, we will obtain a database $D'$ that does not satisfy the query. Let $f$ be the last fact added to $D'$. We have that $(D'\setminus\set{f})\models q$ but $D'\not\models q$. Let $D_q$ be a minimal database that satisfies this property. We create $n$ copies $D_1,\dots, D_n$ of $D_q$, such that $(\Dom(D_i)\cap \Dom(D_j))=\emptyset$ for all $i,j\in\set{1,\dots,n}$ (this is possible since $q$ has no constants). We denote by $f_i$ the fact for which $(D_i\setminus\{f_i\})\models q$ for every $i\in\set{1,\dots,n}$. Next, let $D_q'$ be a minimal database such that $D_q'\models q$ and let $f'$ be a fact in $D_q'$. Since $D_q'$ is minimal, we have that $(D_q'\setminus\{f'\})\not\models q$. We create $n+1$ copies $D_0,D_{n+1},\dots,D_{2n}$ of $D_q'$, such that $(\Dom(D_i)\cap \Dom(D_j))=\emptyset$ for all $i,j\in\set{0,\dots,2n}$. We again denote by $f_i$ the fact for which $(D_i\setminus\{f_i\})\not\models q$ for every $i\in\set{0,n+1,\dots,2n}$. Finally, we construct a database $D$ by taking the union of the databases $D_0,\dots,D_{2n}$. Every fact in $D$ except for $\set{f_0,\dots,f_{2n}}$ will be exogenous. We will show that the fact $f_0$ does not satisfy the gap property.

Note that $D\exo\models q$ since there is a homomorphism $h$ from $q$ to $D_1\setminus\{f_1\}$ (and, in fact, to every $D_i\setminus\{f_i\}$ for $i\in\set{1,\dots,n}$), and we claim that the same $h$ is a homomorphism from $q$ to $D$. Since $q$ has safe negation, every variable in every negated atom of $q$ also occurs in a positive atom of $q$. Hence, if $h$ maps a negated atom of $q$ to a fact $f\in D\exo$, every value $v$ in $f$ is such that $v\in\Dom(D_1)$, and we have that $f\in (D_1\setminus \{f_1\})$. This is a contradiction to the fact that $h$ is a homomorphism from $q$ to $D_1\setminus\{f_1\}$.

Next, we prove that $(D\exo\cup\{f_1,\dots,f_n\})\not\models q$. Assume, by way of contradiction, that this is not the case. Then, there is a homomorphism $h$ from $q$ to $D\exo\cup\{f_1,\dots,f_n\}$. Assume that $h$ maps the positive atoms of $q$ to the facts $g_1,\dots,g_m$. Clearly, it cannot be the case that $\set{g_1,\dots,g_m}\subseteq D_i$ for some $i\in\set{1,\dots,n}$. This holds true since $D_i\not\models q$; hence, $h$ maps a negated atom of $q$ to a fact of $D_i$ which is also in $D\exo\cup\{f_1,\dots,f_n\}$. Moreover, it cannot be the case that $\set{g_1,\dots,g_m}\subseteq (D_i\setminus\{f_i\})$ for some $i\in\set{0,n+1,\dots,2n}$, since $(D_i\setminus\{f_i\})\not\models q$. Hence, $h$ again maps a negated atom of $q$ to a fact of $D_i\setminus\{f_i\}$ which is also in $D\exo\cup\{f_1,\dots,f_n\}$.

Therefore, the only possible case is that there exist $g_i, g_j$ such that $g_i\in D_k$ and $g_j\in D_r$ for some $k\neq r$. Let $\alpha$ be the positive atom of $q$ mapped by $h$ to $g_i$ and let $\beta$ be the positive atom of $q$ mapped by $h$ to $g_j$. Since $q$ is positively connected, the atoms $\alpha$ and $\beta$ are connected. Thus, there is a path $x-v_1-\dots-v_t-y$ in the Gaifman graph of $q$ from every variable $x$ in $\alpha$ to every variable $y$ in $\beta$ such that all the atoms along the edges of the path are positive. Let $x,y$ be arbitrary such variables. Since $g_i\in D_k$, it holds that $h(x)=c_x$ such that $c_x\in\Dom(D_k)$. Moreover, Since $g_j\in D_r$, it holds that $h(y)=c_y$ such that $c_y\in\Dom(D_r)$. Since every two consecutive variables $v_i,v_{i+1}$ in the path occur together in some positive atom, from the construction of $D$, we have that $h(v_i)$ and $h(v_{i+1})$ are both in $\Dom(D_p)$ for some $p\in\set{0,\dots,2n}$. Let $v_i$ be the first variable in the path such that $h(v_i)\not\in\Dom(D_k)$. There exists such $v_i$ since $h(y)\not\in\Dom(D_k)$. Then, we have that $h(v_{i-1})\in\Dom(D_k)$ and $h(v_{i})\not\in\Dom(D_k)$ and we get a contradiction.

We have established that $(D\exo\cup\{f_1,\dots,f_n\})\not\models q$. We will now prove that $(D\exo\cup E)\models q$ for every $E\subset D\endo$ such that $\{f_1,\dots,f_n\}\not\subseteq E$. Let $f_i\not\in E$ such that $i\in\set{1,\dots,n}$. Let $h$ be a homomorphism from $q$ to $D_i\setminus\{f_i\}$ (recall that $(D_i\setminus\{f_i\})\models q$). We claim that $h$ is a homomorphism from $q$ to $D\exo\cup E$. If this is not the case, then $h$ maps a negated atom of $q$ to a fact of $D\exo\cup E$. Since $q$ has safe negation, if $h$ maps a negated atom of $q$ to a fact $f\in (D\exo\cup E)$, every value $v$ in $f$ is such that $v\in\Dom(D_i)$, and we have that $f\in (D_i\setminus \{f_i\})$. This is a contradiction to the fact that $h$ is a homomorphism from $D_i\setminus \{f_i\}$ to $q$.

Finally, we show that for each $E\subseteq D\endo$, $\set{f_0, f_{n+1},\dots,f_{2n}}\cap E\neq\emptyset$ we have that $(D\exo\cup E)\models q$. Let $f_i\in E$ such that $i\in\set{0,n+1,\dots,2n}$. Since $D_i\models q$, there is a homomorphism $h$ from $q$ to $D_i$. We claim that $h$ is a homomorphism from $q$ to $D\exo\cup E$. Clearly, every positive atom of $q$ is mapped by $h$ to a fact of $D\exo\cup E$ (since $D_i\subseteq(D\exo\cup E)$). Again, since $q$ has safe negation, we have that if $h$ maps a negated atom of $q$ to a fact $f\in (D\exo\cup E)$, then $f\in D_i$ and we get a contradiction to the fact that $h$ is a homomorphism from $q$ to $D_i$.

We conclude that the fact $f_0$ must be added in a permutation before any of the facts $f_{n+1},\dots,f_{2n}$ and after all the facts $f_1,\dots,f_n$ to affect the query result. Hence, there is exactly one subset $E$ of endogenous facts in $D$, containing $n$ facts, that should appear before $f$ in a permutation where it changes the query result from false to true; hence, the number of such permutations is $\frac{n!\cdot n!}{(2n+1)!}$ (as the total number of endogenous facts is $2n+1$).

Finally, since we consider data complexity, we can assume that the size of each $D_i$ is bounded by some constant $k$.
Hence, the database $D$ contains $\theta(n)$ facts. Therefore, we have that $n=\theta(|D|)$ and the Shapley value of $f$ w.r.t.~$q$ and $D$ is $\frac{n!n!}{(2n+1)!}<\frac{1}{2^n}$. Overall, we have that:
\[0<\shp(D,q,f)\leq 2^{-n} = 2^{-\theta(|D|)}\]
and that concludes our proof.

\end{proof}

Next, we prove the following proposition.

\begin{repproposition}{\ref{prop:nsc1}}
\propnscfirst
\end{repproposition}


The following lemma states the NP-completeness of $(2^+,2^-,4^{+-})$-SAT.  (We found it easier to prove it directly rather than showing that it falls on the negative side of Schaefer's dichotomy theorem~\cite{schaefer1978complexity}.)  As a preface to our proof, we define the $(3^+, 2^-)$-SAT problem: given a monotone 3CNF formula $\varphi$ where every literal is positive, and a monotone 2CNF formula $\varphi'$ where every literal is negative, defined over the same variables as $\varphi$, is $\varphi \wedge \varphi'$ satisfiable? We first prove that this problem is NP-complete using a reduction from the 3-colorability problem.\footnote{This proof is inspired by a proof given in~\url{https://cs.stackexchange.com/questions/16634/complexity-of-monotone-2-sat-problem}.}  Next, we define the $(2^+,2^-,4^{+-})$-SAT problem, where the input is a conjunction of clauses of the following forms: \e{(1)} $(x_i\vee x_j)$, \e{(2)} $(\neg x_i\vee \neg x_j)$, or \e{(3)} $(x_i\vee x_j\vee \neg x_k\vee \neg x_l)$. We prove that this problem is NP-complete using a reduction from the $(3^+, 2^-)$-SAT problem. Then, we will construct a reduction from the $(2^+,2^-,4^{+-})$-SAT problem to that of deciding whether $f$ is relevant.

\begin{lemma}
The $(2^+,2^-,4^{+-})$-SAT problem is $NP$-complete.
\end{lemma}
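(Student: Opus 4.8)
The plan is to prove membership and hardness separately, where membership in NP is immediate (a truth assignment is a polynomial-size certificate that can be checked clause by clause), so the work is the NP-hardness. I would follow the two-stage route set up just before the statement: first establish NP-hardness of the auxiliary problem $(3^+,2^-)$-SAT by a reduction from graph $3$-colorability, and then reduce $(3^+,2^-)$-SAT to $(2^+,2^-,4^{+-})$-SAT. For the first reduction, given a graph $G=(V,E)$, I would introduce a Boolean variable $x_{v,c}$ for every vertex $v$ and color $c\in\{1,2,3\}$. The monotone positive $3$CNF $\varphi$ would consist of the clauses $(x_{v,1}\vee x_{v,2}\vee x_{v,3})$, one per vertex, expressing ``$v$ receives at least one color.'' The monotone negative $2$CNF $\varphi'$ would consist of $(\neg x_{v,c}\vee\neg x_{v,c'})$ for all $c\neq c'$ (``$v$ receives at most one color'') together with $(\neg x_{u,c}\vee\neg x_{v,c})$ for every edge $(u,v)\in E$ and every color $c$ (``no edge is monochromatic''). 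A satisfying assignment of $\varphi\wedge\varphi'$ then assigns each vertex exactly one color and distinct colors to adjacent vertices, i.e.\ a proper $3$-coloring, and conversely; hence $G$ is $3$-colorable iff $\varphi\wedge\varphi'$ is satisfiable, and since $3$-colorability is NP-complete, so is $(3^+,2^-)$-SAT.

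For the second reduction I would keep every negative $2$-clause of $\varphi'$ unchanged, since such clauses are already of the permitted form $(\neg x\vee\neg x')$. The only nontrivial task is rewriting each positive $3$-clause $(x_a\vee x_b\vee x_c)$ using only the three allowed shapes. Here I would introduce a fresh variable $y$ and apply the standard split $(x_a\vee x_b\vee x_c)\equiv\exists y\,[(x_a\vee y)\wedge(\neg y\vee x_b\vee x_c)]$. The first conjunct $(x_a\vee y)$ is already a positive $2$-clause, but the second is a $3$-literal clause with two positive and one negative literal, which is not among the allowed forms. I would therefore pad it to the required $2^+2^-$ width-$4$ shape by appending a literal $\neg F$, where $F$ is a single global variable forced to be false, obtaining $(x_b\vee x_c\vee\neg y\vee\neg F)$, a clause with exactly two positive and two negative literals.

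To force $F$ to be false using only the permitted clause types — crucially, with no unit clauses at my disposal — I would add a small gadget on two further fresh variables $p,q$, consisting of the clauses $(\neg F\vee\neg p)$, $(p\vee q)$, and $(\neg q\vee\neg F)$. If $F$ were set to true, the two negative clauses would force $p=q=0$, contradicting $(p\vee q)$; meanwhile $F=0,\,p=1,\,q=0$ satisfies all three, so the gadget is satisfiable and pins $F$ to false in every satisfying assignment. With $F$ false the literal $\neg F$ is constantly true, so the padded clause is logically equivalent to $(\neg y\vee x_b\vee x_c)$, and the correctness argument reduces to the correctness of the split above: restricting a satisfying assignment of the constructed formula to the original variables satisfies $\varphi\wedge\varphi'$, and any satisfying assignment of $\varphi\wedge\varphi'$ extends by choosing $y$ per clause (false when $x_a$ is true, otherwise true) and the fixed gadget values. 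The construction is clearly polynomial, giving NP-hardness of $(2^+,2^-,4^{+-})$-SAT.

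The main obstacle I expect is entirely in the second reduction, and specifically in the rigid requirement that every width-$4$ clause carry exactly two positive and two negative literals. This constraint blocks the usual clause-width reduction and forces me to manufacture a ``false constant'' in order to pad mixed $3$-literal clauses up to the prescribed $2^+2^-$ shape, and to do so using only $2$-clauses and $2^+2^-$ $4$-clauses. The step most worth checking carefully is that the $F$-gadget genuinely forces $F$ to be false in all satisfying assignments, rather than merely constraining $F$ relative to $p$ and $q$; once that is verified, the remaining equivalences are routine.
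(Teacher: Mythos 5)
Your first stage (the reduction from $3$-colorability to $(3^+,2^-)$-SAT) is exactly the paper's, and your overall two-stage plan matches the paper's as well. The problem is in the second stage, and it is fatal as written: you pad the mixed clause $(\neg y\vee x_b\vee x_c)$ with the literal $\neg F$ and then force $F$ to be \emph{false}. Padding a clause $C$ with a literal $\ell$ yields something equivalent to $C$ only when $\ell$ is guaranteed \emph{false}; if $\ell$ is guaranteed true, then $C\vee\ell$ is a tautology relative to that guarantee and imposes no constraint at all. Your gadget $(\neg F\vee\neg p)\wedge(p\vee q)\wedge(\neg q\vee\neg F)$ does correctly pin $F=0$, but that makes $\neg F$ true, so every clause $(x_b\vee x_c\vee\neg y\vee\neg F)$ is automatically satisfied once the gadget is. Concretely, the formula you construct is \emph{always} satisfiable: set every original variable to $0$ (this satisfies all the negative $2$-clauses of $\varphi'$), set every fresh $y$ to $1$ (this satisfies each $(x_a\vee y)$), and set $F=0$, $p=1$, $q=0$. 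Hence satisfiability of your output tells you nothing about $\varphi\wedge\varphi'$, and your claim that ``with $F$ false the padded clause is logically equivalent to $(\neg y\vee x_b\vee x_c)$'' is exactly backwards.

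The fix is a one-line dualization: force $F$ to be \emph{true} using the mirror gadget $(F\vee p)\wedge(F\vee q)\wedge(\neg p\vee\neg q)$ (if $F=0$ then $p=q=1$, contradicting $(\neg p\vee\neg q)$, while $F=1$, $p=q=0$ satisfies it). Then $\neg F$ is constantly false, the clause $(x_b\vee x_c\vee\neg y\vee\neg F)$ really is equivalent to $(\neg y\vee x_b\vee x_c)$, and the rest of your argument (the standard split of the positive $3$-clause, plus the per-clause choice of $y$) goes through. For comparison, the paper sidesteps the need for any constant-forcing gadget: it replaces $(x_i\vee x_j\vee x_k)$ by $(x_i\vee x_j\vee\neg y\vee\neg y)\wedge(x_k\vee y)\wedge(\neg x_k\vee\neg y)$, padding with a \emph{repeated} literal and forcing $y\equiv\neg x_k$ through the two $2$-clauses; this is more economical but implicitly relies on clauses with repeated literals being permitted, whereas your (corrected) construction does not.
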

\begin{proof}
 
Given an undirected graph $G=(V,E)$ and a set of three colours $C=\{c_1,c_2,c_3\}$, we will build a $(3^+, 2^-)$-CNF formula denoted as $\varphi$. For every $v\in V$ and every $c_i\in C$, we introduce a variable $x_v^{c_i}$. For every vertex $v\in V$, we introduce a clause $x_v^{c_1}\vee x_v^{c_2} \vee x_v^{c_3}$. For every edge $(u,v)\in E$ and for every $c_i\in C$, we introduce a clause $\neg x_u^{c_i} \vee \neg x_v^{c_i}$. Finally, for every two colours $c_i\neq c_j$ and every $v\in V$, we introduce a clause $\neg x_v^{c_i} \vee \neg x_v^{c_j}$. We say that $G$ has a valid $3$-colouring $h$, if $h$ is a mapping $h:V\rightarrow C$, such that every vertex in $V$ is mapped to a single color in $C$, and every edge $(u,v)$ in $G$ satisfies that $h(u)\neq h(v)$.

The correspondence of the reduction is rather clear. If $G$ has a valid $3$-colouring $h$, the assignment $z$ assigning the value $1$ to every variable $x_v^{c_i}$ such that $h(v)=c_i$ and the value $0$ to the rest of the variables satisfies every clause in $\varphi$. Every clause of the form $(x_v^{c_1}\vee x_v^{c_2} \vee x_v^{c_3})$ is satisfied since $h$ assigns a color to each vertex. Every clause of the form $(\neg x_u^{c_i} \vee \neg x_v^{c_i})$ is satisfied since in a valid colouring, two adjacent vertices cannot be mapped to the same color. In addition, $h$ maps every vertex to a single color in $C$; therefore, each clause of the form $(\neg x_v^{c_i} \vee \neg x_v^{c_j})$ is satisfied as well. Overall, we have that $\varphi$ is satisfiable.

Next, assume that $\varphi$ is satisfiable, and let $z$ be a satisfying assignment. We claim that the coloring $h$ defined by $h(v)=c_i$ if $z(x_c^{v_i})=1$ is a valid $3$-coloring of $G$. Since all the clauses of the form $(x_v^{c_1}\vee x_v^{c_2} \vee x_v^{c_3})$ are satisfied, for every vertex $v$, the assignment $z$ assigns the value $1$ to at least one variable $x_v^{c_i}$. Moreover, since the clauses of the form $(\neg x_v^{c_i} \vee \neg x_v^{c_j})$ are satisfied, for every $v$ the assignment $z$ assigns the value $1$ to at most one variable $x_v^{c_i}$. Hence, we conclude that $z$ assigns the value $1$ to exactly one variable of the form $x_v^{c_i}$ for every $v\in V$, and $h$ is indeed a coloring. Finally, since the clauses of the form $(\neg x_u^{c_i} \vee \neg x_v^{c_i})$ are satisfied, it cannot be the case that $z(x_u^{c_i})=z(x_v^{c_i})=1$; hence, $h$ does not map two vertices connected by an edge in $G$ to the same color.

Next, we reduce the $(3^+, 2^-)$-SAT problem to the $(2^+,2^-,4^{+-})$-SAT problem. Given an input $\varphi$ to the first problem, we build an input $\varphi'$ to the second problem in the following way. Every clause in $\varphi$ of the form $(\neg x_i\vee \neg x_j)$ remains the same in $\varphi'$. Every clause of the form $(x_i\vee x_j \vee x_k)$ in $\varphi$ is replaced by three clauses in $\varphi'$: \e{(1)} $(x_i \vee x_j \vee \neg y \vee \neg y)$, \e{(2)} $(x_k \vee y)$, and \e{(3)} $(\neg x_k \vee \neg y)$, where $y$ is a new unique variable introduced for every clause in $\varphi$. We claim that the clause $(x_i\vee x_j \vee x_k)$ is satisfiable if and only if the formula $(x_i \vee x_j \vee \neg y \vee \neg y)\wedge(x_k \vee y)\wedge (\neg x_k \vee \neg y)$ is satisfiable. This holds true since 
in every satisfying assignment $z$ to the original clause, we either have that $z(x_k)=1$, in which case we satisfy the new formula by defining $z(y)=0$, or we have that $z(x_k)=0$, in which case $z(x_i)=1$ or $z(x_j)=1$ and by defining $z(y)=1$ we again satisfy the new formula. Now, given a satisfying assignment to the new formula, we either have that $z(x_i)=1$ or $z(x_j)=1$ in which case the original formula is clearly satisfied, or we have that $z(y)=0$ in which case $z(x_k)=1$ (otherwise, the clause $(x_k\vee y)$ is not satisfied) and again, the original formula is satisfied. Since we use different variables $y$ for different clauses, we can assign the required value to each one of these variables, and that concludes our proof.
\end{proof}

Next, we give a reduction from $(2^+,2^-,4^{+-})-$-SAT to the problem of deciding whether a fact $f\in T^D$ is relevant to $\cqtrsnr$. Given a formula $\varphi\in(2^+,2^-,4^{+-})$, we build the input database $D$ to our problem as following: for every variable $x_i$ in $\varphi$ we add an endogenous fact $R(i)$, and an exogenous fact $T(i)$ to $D$. For every clause $(x_i\vee x_j)$ in $\varphi$, we add an exogenous fact $S(i,j,a,a)$ where $a$ is a fresh constant. For every clause $(\neg x_i \vee \neg x_j)$ we add an exogenous fact $S(b,b,i,j)$ where $b$ is a new constant. For every $(x_i \vee x_j \vee \neg x_k \neg x_l)$ in $\varphi$ we add an exogenous fact $S(i,j,k,l)$. In addition, we add the exogenous facts $S(d,d,c,c), R(a), R(c), T(a)$ where $c$ and $d$ are fresh constants, and an endogenous fact $T(c)$ which we denote as $f$.

We now show that $\shp(D,q,f)\neq 0$ if and only if $\varphi$ is satisfiable. In fact, we show that $\shp(D,q,f)> 0$ if and only if $\varphi$ is satisfiable, since $T$ appears only as a positive atom in $q$ and $f$ can only be positively relevant to $q$. Observe that $D\exo\models q$ since every (exogenous) fact $S(i,j,a,a)$, along with the exogenous facts $R(a), T(a)$ satisfies $q$. We assume here that every $\varphi$ contains at least one clause of the form $(x_i\vee x_j)$ (otherwise, there is no fact $S(i,j,a,a)$ in $D$). We can assume that since the satisfiability problem is trivial for $(2^+,2^-,4^{+-})$ formulas that do not contain at least one clause of the form $(x_i\vee x_j)$ (as all such formulas are satisfied by the assignment $z$ where $z(x)=0$ for every variable $x$). Hence, the $(2^+,2^-,4^{+-})$-SAT problem remains hard under this assumption.

Assume that $\varphi$ is satisfiable by an assignment $z$, and consider the set $E=\{R(i) \mid z(x_i)=1\}$. We claim that $(D\exo\cup E)\not\models q$. For every exogenous fact $S(i,j,a,a)$, at least one of facts $R(i)$ or $R(j)$ is in $E$, since the clause $(x_i\vee x_j)$ is satisfied; hence, $S(i,j,a,a)$, $T(a)$ and $R(a)$ cannot jointly satisfy $q$. Moreover, for every $S(b,b,i,j)$, at most one of the facts $R(i)$ and $R(j)$ are in $E$, since the clause $(\neg x_i \vee \neg x_j)$ is satisfied as well. Finally, for every $S(i,j,k,l)$, it holds that if both $R(k)$ and $R(l)$ are in $E$, then at least one of $R(i)$ and $R(j)$ is in $E$ as well, since the clause $(x_i \vee x_j \vee \neg x_k \vee \neg x_l)$ is satisfied.  On the other hand, it holds that $(D\exo\cup E \cup f)\models q$, since the facts $S(d,d,c,c), R(c)$ and $T(c)$ are in $(D\exo\cup E \cup f)$ while the fact $R(d)$ is not. Therefore, we conclude that $f$ is relevant to $\cqtrsnr$.

Now, assume that $\varphi$ is not satisfiable. Let $E\subseteq (D\endo\setminus\{f\})$. Recall that the only endogenous facts in $D\endo\setminus\{f\}$ are the facts $R(i)$ for $i\in\set{1,\dots,n}$. We now define the assignment $z$ such that $z(x_i)=1$ if and only if $R(i)\in E$. Since $z$ is not a satisfying assignment, at least one clause $c$ in $\varphi$ is not satisfied. If $c$ is of the form $(x_i\vee x_j)$, then none of $R(i), R(j)$ is in $E$, in which case the exogenous facts $S(i,j,a,a),R(a)$ and $T(a)$ satisfy $q$. If $c$ is of the form $(\neg x_i \vee \neg x_j)$, then both $R(i)$ and $R(j)$ are in $E$, and they satisfy $q$ jointly with the facts $S(b,b,i,j)$ and $T(i)$ (as the fact $R(b)$ is not in $D$). Otherwise, $c$ is of the form $(x_i\vee x_j\vee \neg x_k \vee \neg x_l)$, in which case none of $R(i)$ or $R(j)$ is in $E$, while both $R(k), R(l)$ are in $E$; hence, the facts $S(i,j,k,l),T(k),R(k)$ and $R(l)$ jointly satisfy $q$. In all of these cases, we conclude that $(D\exo \cup E)\models q$; thus, adding $f$ in a permutation after the facts of $E$ would not affect the query result, and $f$ is not relevant to $\cqtrsnr$. This concludes our proof of Proposition~\ref{prop:nsc1}.

We now prove the correctness of $\algname{IsPosRelevant}$ and $\algname{IsNegRelevant}$ for deciding whether a fact is positively or negatively relevant to $q$. We start with $\algname{IsPosRelevant}$ and prove the following.

\begin{lemma}
Let $q$ be a polarity-consistent \CQneg. Then, the algorithm $\algname{IsPosRelevant}(D,q,f)$ returns true, given $D$ and $f$, if and only if $f$ is positively relevant to $q$.
\end{lemma}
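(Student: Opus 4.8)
The plan is to prove both directions of the equivalence by tying a single iteration of the loop, at a fixed mapping $h$, to the one candidate witness set
\[
  E^*_h \;=\; (P\setminus\set{f})\,\cup\,\big(\negq_q(D\endo)\setminus N\big),
\]
which is exactly the set tested in the final \textbf{if}. The entire argument rests on one monotonicity observation available because $q$ is polarity consistent: since each relation symbol occurs only positively or only negatively, a positive atom can be matched only to a fact of a positive relation and a negative atom only to a fact of a negative relation. I would first state and prove the following short claim: if $D_1\subseteq D_2$ differ only in facts of negative relations, then $D_2\models q$ implies $D_1\models q$ (the witnessing homomorphism is untouched on positive atoms and has fewer negative facts to avoid); dually, adding facts of positive relations preserves satisfaction.

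For soundness (the algorithm returns \textbf{true} $\Rightarrow$ $f$ is positively relevant), I would take the mapping $h$ for which the test succeeded and set $E=E^*_h$. The test yields $(D\exo\cup E)\not\models q$ directly, so it remains to check that $h$ witnesses $(D\exo\cup E\cup\set{f})\models q$. Every positive atom maps either into $D\exo$ or into $P\subseteq E\cup\set{f}$, where the second \textbf{continue} guard guarantees no positive atom leaves $D$; and no negative atom is violated, since the first guard rules out negative atoms landing in $D\exo$, while the endogenous negative facts $N$ touched by $h$ are disjoint from $E\cup\set{f}$. Polarity consistency is used here twice: $f\in P$ forces $f$ to be a fact of a positive relation, so $f\notin N$, and $N\cap(P\setminus\set{f})=\emptyset$ because no fact is simultaneously positive and negative.

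For completeness, I would start from a witness $E$ with $(D\exo\cup E)\not\models q$ and $(D\exo\cup E\cup\set{f})\models q$. The monotonicity claim already forces $f$ to be a fact of a positive relation, since adding a negative fact can only turn $q$ from true to false. Let $h$ witness $(D\exo\cup E\cup\set{f})\models q$. I would first argue $h$ passes all three guards: a negative atom cannot map to $D\exo$ and a positive atom cannot map outside $D$; moreover $f\in P$, for otherwise $h$ uses no atom on $f$ (a negative atom cannot hit the positive fact $f$), and $h$ would already witness $(D\exo\cup E)\models q$, a contradiction. I would then check that this $E$ is compatible with $h$, i.e.\ $(P\setminus\set{f})\subseteq E$ and $N\cap E=\emptyset$, so the positive endogenous facts available to $E$ and to $E^*_h$ coincide and the negative facts of $D\exo\cup E$ form a subset of those of $D\exo\cup E^*_h$. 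By the monotonicity claim, $(D\exo\cup E^*_h)\models q$ would then force $(D\exo\cup E)\models q$, contradicting the choice of $E$; hence $(D\exo\cup E^*_h)\not\models q$ and the loop returns \textbf{true} at $h$.

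The main obstacle I anticipate is this completeness step, namely justifying that the single maximal set $E^*_h$ is ``hardest to satisfy'' among all $E$ compatible with $h$: one must rule out that some other $E$ — carrying extra positive facts, or a different, smaller set of negative facts — breaks satisfaction while $E^*_h$ does not. This is precisely where polarity consistency is indispensable, as it is what makes the negative facts of any compatible $D\exo\cup E$ comparable by inclusion to those of $D\exo\cup E^*_h$ while leaving the positive side of the homomorphism unchanged; without it a single fact could serve both a positive and a negative atom, and the monotonicity comparison underlying the whole proof would collapse.
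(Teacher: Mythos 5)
Your proposal is correct and takes essentially the same approach as the paper: both directions replace the arbitrary relevance witness $E$ by the canonical set $E^*_h$ tested in the final \textbf{if}, using exactly the polarity-consistency monotonicity (adding facts of negative relations, or removing facts of positive relations, preserves non-satisfaction) that the paper applies inline and you isolate as a standalone claim. The only slip is your assertion that the positive endogenous facts of $E$ and $E^*_h$ ``coincide''---$E$ may contain extra positive facts unused by $h$---but this is harmless, since the superset relation $(P\setminus\set{f})\subseteq E$ together with your own dual monotonicity claim (adding positive-relation facts preserves satisfaction) yields the same contradiction, which is precisely how the paper argues as well.
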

\begin{proof}
Assume that $f$ is positively relevant to $q$. Thus, there exists $E\subseteq D\endo$ such that $(D\exo\cup E)\not\models q$ while $(D\exo\cup E\cup\{f\})\models q$. 
Hence, there is a homomorphism $h$ from the variables of $q$ to the constants of $D$ such that every positive atom and none of the negative atoms of $q$ is mapped to a fact of $D\exo\cup E\cup \{f\}$. We claim that the algorithm will return true in the iteration of the for loop when $h$ is selected. By the definition of $h$ we have that $P\subseteq (E\cup\{f\})$, while for every $f'\in N$ it holds that $f'\not\in (E\cup\{f\})$. Moreover, since $(D\exo\cup E)\not\models q$, the homomorphism $h$ maps a positive atom of $q$ to $f$; hence $f\in P$. Since $q$ is polarity consistent, by adding a set of facts corresponding to negative atoms of $q$ we cannot change the query result from false to true. Therefore, the fact that $(D\exo\cup E)\not\models q$ implies that $(D\exo\cup E\cup (\negq_q(D\endo)\setminus N))\not\models q$. Since no fact of $N$ appears in $E$, the set $D\exo\cup(P\setminus\{f\})\cup (\negq_q(D\endo)\setminus N)$ can be obtained from $D\exo\cup E\cup (\negq_q(D\endo)\setminus N)$ by removing a set of facts corresponding to positive atoms of $q$, and, again, since $q$ is polarity consistent, we conclude that $(D\exo\cup(P\setminus\{f\})\cup (\negq_q(D\endo)\setminus N))\not\models q$.

Next, assume that the algorithm returns true. Thus, there exists a mapping $h$ from the variables of $q$ to the constants of $D$ such that $(D\exo\cup(P\setminus\{f\})\cup (\negq_q(D\endo)\setminus N))\not\models q$. Let $E=((P\setminus\{f\})\cup (\negq_q(D\endo)\setminus N))$. We will now show that $(D\exo\cup E\cup\{f\})\models q$ and since $(D\exo\cup E)\not\models q$, this will conclude our proof. By the definition of $N$ and since $h$ does not map any negative atom of $q$ to a fact in $D\exo$, we have that $h$ does not map any negative atom of $q$ to a fact of $D\exo\cup(\negq_q(D\endo)\setminus N)$. Moreover, since $h$ maps every positive atom of $q$ to a fact in $D$, we have that every positive atom of $q$ is mapped by $h$ to a fact in $D\exo\cup P\cup\{f\}$. Therefore, we conclude that $h$ is a homomorphism mapping every positive atom and none of the negative atoms of $q$ to facts of $D\exo\cup(P\setminus \{f\})\cup (\negq_q(D\endo)\setminus N)\cup \{f\}$ and we have that $(D\exo\cup E\cup\{f\})\models q$.
\end{proof}

\begin{algorithm}[t]
\SetAlgoLined
\For{$h:\var(q)\rightarrow\Dom(D)$}{
\If{$h$ maps an atom $\alpha\in\negq(q)$ to some $f'\in D\exo$} {
\textbf{continue}
}
\If{$h$ maps an atom $\alpha\in\posq(q)$ to some $f'\not\in D$} {
\textbf{continue}
}

$P=\{f'\in D\endo\mid h\mbox{ maps an atom }\alpha\in\posq(q)\mbox{ to }f'\}$

$N=\{f'\in D\endo\mid h\mbox{ maps an atom }\alpha\in\negq(q)\mbox{ to }f'\}$

\If{$f\in P$}{
\textbf{continue}
}

\If{$(D\exo\cup P\cup (\negq_q(D\endo)\setminus N)\cup\{f\})\not\models q$} {
\Return true
}

}



\Return false
\caption{\algname{IsNegRelevant}$(D,q,f)$\label{alg:isnegrel}}
\end{algorithm}

We now move on the the algorithm $\algname{IsNegRelevant}$ and prove its correctness. The algorithm is very similar to $\algname{IsPosRelevant}$, except for the fact that we now look for a homomorphism that does not maps any positive atom of $q$ to $f$, and in the last test, we check whether the query is not satisfied by $D\exo\cup P\cup (\negq_q(D)\setminus N)\cup \{f\}$.

\begin{lemma}
Let $q$ be a polarity-consistent \CQneg. Then, the algorithm $\algname{IsNegRelevant}(D,q,f)$ returns true, given $D$ and $f$, if and only if $f$ is negatively relevant to $q$.
\end{lemma}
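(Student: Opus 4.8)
The plan is to mirror the correctness proof of $\algname{IsPosRelevant}$, exploiting the duality between positive and negative relevance. The two workhorses, both immediate from the semantics of a \CQneg and from polarity consistency, are: \emph{(a)} adding facts to a relation that occurs only in negative atoms can never change the query answer from false to true, and \emph{(b)} removing facts from a relation that occurs only in positive atoms can never change the answer from false to true. Since $q$ is polarity consistent, every fact lies over a relation that is either purely positive or purely negative, so positive and negative facts never interfere, and the sets $P$ and $N$ that the algorithm computes from a candidate mapping $h$ are disjoint in spirit from one another.

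For the ``only if'' direction I would fix an iteration in which the algorithm returns true, i.e.\ a mapping $h$ that survives the two \texttt{continue} tests, satisfies $f\notin P$, and for which $(D\exo\cup P\cup(\negq_q(D\endo)\setminus N)\cup\{f\})\not\models q$. Writing $E=P\cup(\negq_q(D\endo)\setminus N)$, I would first argue that $h$ itself witnesses $(D\exo\cup E)\models q$: every positive atom maps into $D\exo\cup P$ because $h$ passed the second test, and no negative atom maps into $D\exo\cup E$ because $h$ passed the first test (nothing in $D\exo$), because polarity consistency forbids a negative atom from matching a fact of $P$, and because the endogenous negative facts reachable by $h$ are exactly $N$, which was removed. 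Since then $(D\exo\cup E)\models q$ while $(D\exo\cup E\cup\{f\})\not\models q$, the set $E$ witnesses that $f$ is negatively relevant; this also forces $f\notin E$, as otherwise $E\cup\{f\}=E$ would contradict the two facts just established.

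For the ``if'' direction I would start from a witness $E$ of negative relevance: $(D\exo\cup E)\models q$ via some homomorphism $h$ and $(D\exo\cup E\cup\{f\})\not\models q$. The crux is that $h$ is precisely the mapping the algorithm needs. Because adding the single fact $f$ turned the answer false, $h$ must be blocked by $f$, and in a \CQneg a newly added fact can block a homomorphism only by matching one of its negative atoms; hence $f\in N$ and, by polarity consistency, $f\notin P$. That $h$ passes both \texttt{continue} tests, and that $N\cap E=\emptyset$ and $P\subseteq E$, are routine consequences of $h$ witnessing satisfaction of $D\exo\cup E$. It then remains to verify $(D\exo\cup P\cup(\negq_q(D\endo)\setminus N)\cup\{f\})\not\models q$, which I would obtain from $(D\exo\cup E\cup\{f\})\not\models q$ in two monotone moves justified by \emph{(a)} and \emph{(b)}: first add all of $\negq_q(D\endo)\setminus N$ (this absorbs the endogenous negative facts of $E$, since $N\cap E=\emptyset$, and preserves non-satisfaction), then delete the positive facts of $E$ lying outside $P$ (preserving non-satisfaction). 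Both moves leave the answer false, so the final test succeeds and the algorithm returns true.

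The main obstacle I expect is the bookkeeping around $N$ in the ``only if'' direction: one must be certain that deleting exactly the $h$-image $N$ of the negative atoms while re-adding every \emph{other} endogenous negative fact neither reactivates a negative atom under $h$ nor places $f$ into the constructed set $E$. Polarity consistency is exactly what keeps positive and negative facts from interfering, so the cleanest write-up isolates the two monotonicity facts \emph{(a)} and \emph{(b)} up front and then invokes them verbatim, keeping the whole argument symmetric to the $\algname{IsPosRelevant}$ case.
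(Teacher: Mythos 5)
Your proposal is correct and follows essentially the same route as the paper's proof: both directions use the witnessing homomorphism $h$ with the sets $P$ and $N$, and the key step of transforming the witness set into the algorithm's test set (or vice versa) via the same two polarity-consistency monotonicity moves --- adding facts over negatively-occurring relations and removing facts over positively-occurring relations, each preserving non-satisfaction. Your explicit handling of $f \notin E$ and of why negative atoms cannot match facts of $P$ only spells out details the paper leaves implicit.
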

\begin{proof}
Assume that $f$ is negatively relevant to $q$. Thus, there exists $E\subseteq D\endo$ such that $(D\exo\cup E)\models q$ while $(D\exo\cup E\cup\{f\})\not\models q$. 
Hence, there is a homomorphism $h$ from the variables of $q$ to the constants of $D$ such that every positive atom and none of the negative atoms of $q$ is mapped to a fact of $D\exo\cup E$, while $h$ maps $f$ to a negative atom of $q$. We claim that the algorithm will return true in the iteration of the for loop when $h$ is selected. By the definition of $h$ we have that $P\subseteq E$, while for every $f'\in N$ it holds that $f'\not\in E$. Moreover, since $(D\exo\cup E\cup\{f\})\not\models q$, the homomorphism $h$ does not map any positive atom of $q$ to $f$; hence $f\not\in P$. Since $q$ is polarity consistent, by adding a set of facts corresponding to negative atoms of $q$ we cannot change the query result from false to true. Therefore, the fact that $(D\exo\cup E\cup\{f\})\not\models q$ implies that $(D\exo\cup E\cup \{f\}\cup(\negq_q(D\endo)\setminus N))\not\models q$. Since no fact of $N$ appears in $E$, the set $D\exo\cup P\cup \{f\}\cup(\negq_q(D\endo)\setminus N)$ can be obtained from $D\exo\cup E\cup \{f\}\cup(\negq_q(D\endo)\setminus N)$ by removing a set of facts corresponding to positive atoms of $q$, and, again, since $q$ is polarity consistent, we conclude that $(D\exo\cup P\cup \{f\}\cup(\negq_q(D\endo)\setminus N))\not\models q$.

Next, assume that the algorithm returns true. Thus, there exists a mapping $h$ from the variables of $q$ to the constants of $D$ such that $(D\exo\cup P\cup (\negq_q(D\endo)\setminus N)\cup\{f\})\not\models q$. Let $E=(P\cup (\negq_q(D\endo)\setminus N))$. We will now show that $(D\exo\cup E)\models q$ and since $(D\exo\cup E\cup\{f\})\not\models q$, this will conclude our proof. By the definition of $N$ and since $h$ does not map any negative atom of $q$ to a fact in $D\exo$, we have that $h$ does not map any negative atom of $q$ to a fact of $D\exo\cup(\negq_q(D\endo)\setminus N)$. Moreover, since $h$ maps every positive atom of $q$ to a fact in $D$, we have that every positive atom of $q$ is mapped by $h$ to a fact in $D\exo\cup P$. Therefore, we conclude that $h$ is a homomorphism mapping every positive atom and none of the negative atoms of $q$ to facts of $D\exo\cup P\cup (\negq_q(D\endo)\setminus N)$ and we have that $(D\exo\cup E)\models q$.
\end{proof}

Finally, we prove that the relevance problem is hard for the UCQ $\cqsat$. Recall that $\cqsat()\dl q_1 \vee q_2 \vee q_3 \vee q_4$ where:
\begin{align*}
    q_1()&\dl C(x_1,x_2,x_3,v_1,v_2,v_3), T(x_1,v_1), T(x_2,v_2), T(x_3,v_3)\\
    q_2()&\dl V(x), \neg T(x,\val{1}), \neg T(x,\val{0})\\
    q_3()&\dl T(x,\val{1}), T(x,\val{0})\\
    q_4()&\dl R(\val{0})
\end{align*}

\begin{repproposition}{\ref{prop:ucq_rel}}
\propucqrelhard
\end{repproposition}
\begin{proof}
We construct a reduction from the satisfiability problem for 3CNF formulas.
The input to the satisfiability problem is a formula $\varphi=(c_1\wedge\dots\wedge c_m)$ over the variables $x_1,\dots x_n$, where each $c_i$ is a clause of the form $(l_1\vee l_2\vee l_3)$, and each $l_j$ is either a positive literal $x_k$ or a negative literal $\neg x_k$ for some $k\in\set{1,\dots,n}$. Given such an input, we build an input database $D$ to our problem as follows. For every variable $x_i$ we add an exogenous fact $V(i)$, and two endogenous facts $T(i,\val{1})$ and $T(i,\val{0})$. In addition, for every clause $(l_i \vee l_j \vee l_k)$ where $l_t=x_t$ or $l_t=\neg x_t$ for each $t\in\set{i,j,k}$, we add an exogenous fact $C(i,j,k,v_i,v_j,v_k)$, such that $v_t=\val{1}$ if $l_t=\neg x_t$ and $v_t=\val{0}$ if $l_t=x_t$.
Finally, we add the endogenous fact $R(\val{0})$ which we denote as $f$.
We claim that $f$ is relevant to $q$ if and only if $\varphi$ is satisfiable.

Observe that $E\models q$ for every $E\subseteq D$ such that $f\in E$, since $f$ satisfies the query $q_4$ by itself. Hence, $f$ is relevant (and, more precisely, positively relevant) if and only if there exist $E\subseteq D\endo$ such that $(D\exo\cup E)\not\models q$. Now, assume that $\varphi$ is satisfiable by the assignment $z$. We will show that $f$ is relevant to $q$. Consider the subset $E\subseteq D\endo$ that contains every fact $T(i,v_i)$ such that $z(x_i)=v_i$. Since $z$ is a truth assignment, it assigns a single value to each variable; hence, it is straightforward that $(D\exo\cup E)\not\models q_2$ and $(D\exo\cup E)\not\models q_3$. Regarding the query $q_1$, since $z$ is a satisfying assignment, for every clause in $\varphi$ there is at least one literal $l_i$ such that $z(x_i)=0$ if $l_i=\neg x_i$ and $z(x_i)=1$ if $l_i=x_i$. Therefore, the fact $T(x_i,z(x_i)$ does not appear in $E$, and we have that $(D\exo\cup E)\not\models q$. We conclude that $(D\exo\cup E)\not\models q$ while $(D\exo\cup E\cup\{f\})\models q$ and that concludes our proof of the first direction.

 As for the other direction, given a subset $E\subseteq D\endo$ such that $(D\exo\cup E)\not\models q$ while $(D\exo\cup E\cup\{f\})\models q$, we define an assignment $z$ such that $z(x_i)=1$ if $T(i,\val{1})\in E$ and $z(x_i)=0$ if $T(i,\val{0})\in E$. Since $(D\exo\cup E)\not\models q$, it cannot be the case that $E$ contains two facts $T(i,\val{1}$ and $T(i,\val{0})$ (or, otherwise, $(D\exo\cup E)\models q_3$) and it cannot be the case that none of $T(i,\val{1}$ and $T(i,\val{0})$ belongs to $E$ for some $x_i$ (as otherwise, $(D\exo\cup E)\models q_2$). Hence, $z$ is a truth assignment. It is only left to show that $z$ is a satisfying assignment. Assume, by way of contradiction, the a clause $(l_i,l_j,l_k)$ is not satisfied. In this case, $z(x_t)=0$ if $l_t=x_t$ and $z(x_t)=1$ if $l_t=\neg x_t$ for each $t\in\set{i,j,k}$. Since $E$ contains a fact $T(t,z(x_t))$ for every variable $x_t$, this will imply that the facts $T(i,z(x_i))$, $T(j,z(x_j))$ and $T(k,z(x_k))$ satisfy $q_1$ jointly with the exogenous fact $(i,j,k,z(x_i),z(x_j),z(x_k))$, which is a contradiction to the fact that $(D\exo\cup E)\not\models q$.
 

Since the satisfiablity problem is NP-complete for 3CNF formulas, we conclude that the relevance problem for the given UCQ is NP-complete as well.
\end{proof}

\end{document}